\newtheorem{assumption}{Assumption}
\newtheorem{proposition}{Proposition}
\newtheorem{lemma}{Lemma}
\newtheorem{theorem}{Theorem}
\newtheorem{remark}{Remark}
\long\def\symbolfootnote[#1]#2{\begingroup
\def\thefootnote{\fnsymbol{footnote}}
\footnote[#1]{#2}\endgroup}
\begin{document}

%\title{LA-SDG: Double Stochastic Gradient Approach to Network Resource Allocation}
%\title{Learning While Checking: A Double Stochastic Gradient Method for Network Resource Allocation}
\title{Learn-and-Adapt Stochastic Dual Gradients for Network Resource Allocation}
\author{Tianyi Chen,~\IEEEmembership{Student\! Member, IEEE}, Qing Ling,~\IEEEmembership{Senior\! Member, IEEE}, and Georgios B. Giannakis,~\IEEEmembership{Fellow, IEEE}\vspace{-0.3cm}
\thanks {Work in this paper was supported by NSF 1509040, 1508993, 1509005, NSF China 61573331, NSF Anhui 1608085QF130, and CAS-XDA06040602.}

\thanks{T. Chen and G. B. Giannakis are with the Department of Electrical and Computer Engineering and the Digital Technology Center, University of Minnesota, Minneapolis, MN 55455 USA. Emails: \{chen3827, georgios\}@umn.edu

Qing Ling is with the School of Data and Computer Science, Sun Yat-Sen University, Guangzhou, Guangdong 510006, China and also with the Department of Automation, University of Science and Technology of China, Hefei, Anhui 230026, China. Email: qingling@ieee.org
}
}

%\markboth{IEEE TRANSACTIONS ON CONTROL OF NETWORK SYSTEMS (To Appear)}{}
\maketitle

\vspace{-0.5cm}

\begin{abstract}
Network resource allocation shows revived popularity in the era
of data deluge and information explosion. Existing
stochastic optimization approaches fall short in attaining a desirable
cost-delay tradeoff. Recognizing the
central role of Lagrange multipliers in network resource
allocation, a novel learn-and-adapt stochastic
dual gradient (LA-SDG) method is developed in this paper to learn the sample-optimal
Lagrange multiplier from historical data, and accordingly adapt the upcoming 
resource allocation strategy. Remarkably, LA-SDG only requires
just an extra sample (gradient) evaluation relative to the celebrated
stochastic dual gradient (SDG) method. LA-SDG can be interpreted
as a foresighted learning scheme with \textit{an eye on the
future}, or, a modified heavy-ball iteration from an optimization
viewpoint. It is established - both theoretically and empirically
- that LA-SDG markedly improves the cost-delay tradeoff
over state-of-the-art allocation schemes.

\end{abstract}
\begin{IEEEkeywords}
First-order method, stochastic approximation, statistical learning, network resource allocation.
\end{IEEEkeywords}

% make the title area

\section{Introduction}\label{S:Intro}

In the era of big data analytics, cloud computing and Internet of
Things, the growing demand for massive data processing challenges
existing resource allocation approaches. 
Huge volumes of data acquired by distributed sensors in the presence of 
operational uncertainties caused by, e.g., renewable energy, call for scalable and adaptive network control schemes.
Scalability of a desired approach refers to low
complexity and amenability to distributed implementation, while
adaptivity implies capability of online adjustment to dynamic
environments.

Allocation of network resources can be traced back to the
seminal work of \cite{tassiulas1992}. 
Since then, popular allocation algorithms operating in the dual domain are first-order methods based on dual gradient ascent, either deterministic \cite{low1999} or stochastic \cite{Geor06,neely2010}.
%include
%%among them the most popular ones are 
%gradient-based first-order methods, namely, the deterministic \cite{low1999} and the
%stochastic dual gradient \cite{Geor06,neely2010}. 
Thanks to their simple computation and implementation, these approaches
have attracted a great deal of recent interest, and have been
successfully applied to cloud, transportation and power grid
networks; see, e.g., \cite{chen2016,chen2016jsac,gregoire2015,sun2016}.
However, their major limitation is \emph{slow convergence}, which results in high \emph{network delay}. 
Depending on the application domain, the
delay can be viewed as workload queuing time in a cloud network,
traffic congestion in a transportation network, or energy level of batteries in a power network. To address this delay issue, recent attempts aim at accelerating first- and second-order optimization algorithms
\cite{beck2014,liu2016,wei2013,zargham2013}. Specifically,
momentum-based accelerations over first-order methods were investigated
using Nesterov \cite{beck2014}, or, heavy-ball iterations \cite{liu2016}. 
Though these approaches work well in
static settings, their performance degrades with online scheduling, as evidenced by the increase in accumulated steady-state error \cite{yuan2016}. 
On the other hand, second-order methods such as the decentralized quasi-Newton approach and its dynamic variant developed 
%for network resource allocation 
in \cite{wei2013} and \cite{zargham2013}, incur high overhead to compute and communicate the 
decentralized Hessian approximations.

%In the network resource allocation problems and algorithms,
%Lagrange multipliers, which stand for the prices of resources,
%play central roles \cite{huang2011}. 
Capturing prices of resources, Lagrange multipliers play a central role in stochastic resource allocation algorithms \cite{huang2011}.
Given abundant historical data
in an online optimization setting, a natural question arises:
\textit{Is it possible to learn the optimal prices from past
data, so as to improve the performance of online resource allocation
strategies?} 
The rationale here is that past data contain statistics of network states, and learning from them can aid coping with the stochasticity of future resource allocation.
%Unfortunately, designing online resource allocation
%algorithms capitalizing on data-driven learning schemes is still a
%challenging task. 
A recent work in this direction is
\cite{huang2014}, which considers resource allocation with a \textit{finite} number of possible network states and allocation actions.
The learning procedure, however,
involves constructing a histogram to estimate the underlying
distribution of the network states, and explicitly
solves an empirical dual problem. While
constructing a histogram is feasible for a probability
distribution with finite support, quantization
errors and prohibitively high complexity are
inevitable for a continuous distribution with infinite support.

%On the other hand, for many practical scenarios, the dual problem
%cannot be written in an explicit form, thus exactly solving its
%empirical version is not an easy task.
%
% QL: I delete this sentence since we also need to solve an empirical dual problem.

In this context, the present paper aims to design a novel online resource allocation algorithm that leverages online
learning from historical data for stochastic optimization of the ensuing
allocation stage. The resultant approach, which we term 
``learn-and-adapt'' stochastic dual gradient (LA-SDG) method,
only doubles computational complexity of the classic
stochastic dual gradient (SDG) method. 
With this minimal cost, LA-SDG mitigates steady-state oscillation, 
which is common in stochastic first-order acceleration methods \cite{yuan2016,liu2016}, while avoiding computation of the Hessian approximations present in the second-order
methods \cite{wei2013,zargham2013}. Specifically, LA-SDG only requires one more past sample to compute an extra stochastic dual gradient, in contrast to constructing
costly histograms and solving the resultant large-scale problem \cite{huang2014}.

The main contributions of this paper are summarized next.
\begin{enumerate}
\item [c1)] Targeting a low-complexity online solution, LA-SDG
only takes an additional dual gradient step relative to the classic SDG iteration. 
This step enables adapting the resource
allocation strategy through learning from historical data.
Meanwhile, LA-SDG is linked with the stochastic heavy-ball method, nicely inheriting its fast
convergence in the initial stage, while reducing its steady-state oscillation.
 \item [c2)]
The novel LA-SDG approach,
parameterized by a positive constant $\mu$, provably yields an attractive 
cost-delay tradeoff $[\mu,\log^2(\mu)/\sqrt{\mu}]$, which
improves upon the standard tradeoff $[\mu,{1}/{\mu}]$ of the
SDG method \cite{neely2010}. Numerical tests further
corroborate the performance gain of LA-SDG over existing resource allocation schemes.

\end{enumerate}

%\emph{Outline}. The rest of the paper is organized as follows. The
%stochastic network resource allocation problem is formulated in
%Section II, along with the standard SDG approach introduced in Section III. The LA-SDG method is the subject of Section IV.
%Performance analysis of LA-SDG is carried out in Section V.
%Numerical tests are provided in Section VI, followed by concluding
%remarks in Section VII.

\emph{Notation}. $\mathbb{E}$ denotes the expectation operator, 
$\mathbb{P}$ stands for probability; $(\cdot)^{\top}$ stands for
vector and matrix transposition, and $\|\mathbf{x}\|$ denotes the
$\ell_2$-norm of a vector $\mathbf{x}$. Inequalities for vectors,
e.g., $\mathbf{x} > \mathbf{0}$, are defined entry-wise. The
positive projection operator is defined as $[a]^+:=\max\{a,0\}$,
also entry-wise.

\section{Network Resource Allocation}\label{S:ModelPrelim}
In this section, we start with a generic network model and its
resource allocation task in Section \ref{subsec.PF}, and then
introduce a specific example of resource allocation in cloud
networks in Section \ref{subsec.exp}. The proposed approach is
applicable to more general network resource allocation tasks such as geographical load
balancing in cloud networks \cite{chen2016}, traffic control in
transportation networks \cite{gregoire2015}, and energy management
in power networks \cite{sun2016}.

\subsection{A unified resource allocation model}\label{subsec.PF}

Consider discrete time $t\in\mathbb{N}$, and a network represented as a directed graph ${\cal G}=({\cal
I},\,{\cal E})$ with nodes ${\cal I}:=\{1,\ldots,I\}$ and edges ${\cal E}:=\{1,\ldots,E\}$. 
Collect the workloads across edges $e=(i,j)\in {\cal E}$ in a resource allocation
vector $\mathbf{x}_t\in \mathbb{R}^{E}$. The $I\times E$ node-incidence matrix is formed with the $(i,e)$-th entry
%$\mathbf{A}\in \mathbb{R}^{}$ is defined as
   \begin{equation}\label{eq.incidence}
    \mathbf{A}_{(i,e)}=
    \left\{
    \begin{array}{rl}
         {1,}~  &\text{if link $e$ enters node $i$}\\
         {-1,}~ &\text{if link $e$ leaves node $i$}\\
         {0,}~ &\text{else.}
    \end{array}
   \right.
\end{equation}
%$\mathbf{A}_{(i,e)}={1,}~\text{if link $e$ enters node $i$}$; ${-1,}~ \text{if link $e$ leaves node $i$}$; and ${0,}~\text{else}$;
%where $\mathbf{A}_{(i,e)}$ is the entry at the $i$-th row and the
%$e$-th column of matrix $\mathbf{A}$. 
We assume that each row of
$\mathbf{A}$ has at least one $-1$ entry, and each column of
$\mathbf{A}$ has at most one $-1$ entry, meaning that each
node has at least one outgoing link, and each link has at most one
source node. With $\mathbf{c}_t\in \mathbb{R}_+^{I}$ collecting the randomly arriving workloads of all nodes per slot $t$, the aggregate (endogenous plus exogenous) workloads of all nodes are
$\mathbf{A}\mathbf{x}_t+\mathbf{c}_t$. If the $i$-th entry of
$\mathbf{A}\mathbf{x}_t+\mathbf{c}_t$ is positive, there is
service residual queued at node $i$; otherwise, node $i$ over-serves the current arrival. With a
workload queue per node, the queue length vector $\mathbf{q}_t:=[q_t^1,\ldots,q_t^I]^{\top}\in \mathbb{R}_+^{I}$ obeys the recursion 
\begin{equation}
	\mathbf{q}_{t+1}=\left[\mathbf{q}_t+\mathbf{A}\mathbf{x}_t+\mathbf{c}_t\right]^{+}\!,~\forall t 
\end{equation}
where $\mathbf{q}_t$ can represent the amount of user requests buffered in data queues, or energy stored in batteries, and $\mathbf{c}_t$ is the corresponding exogenously arriving workloads or harvested renewable energy of all nodes per slot $t$.
Defining $\Psi_t(\mathbf{x}_t) :=
\Psi(\mathbf{x}_t;\bm{\phi}_t)$ as the aggregate network cost parameterized by the random vector
$\bm{\phi}_t$, the local cost per node $i$ is $\Psi_t^i(\mathbf{x}_t):=
\Psi^i(\mathbf{x}_t;\bm{\phi}_t^i)$, and $\Psi_t(\mathbf{x}_t)=\sum_{i\in{\cal I}}\Psi_t^i(\mathbf{x}_t)$.
The model here is quite general. 
The duration of time slots can vary from (micro-)seconds in cloud networks, minutes in road networks, to even hours in power networks; the nodes can present the distributed front-end mapping nodes and back-end data centers in cloud networks, intersections in traffic networks, or, buses and substations in power networks; the links can model wireless/wireline channels, traffic lanes, and power transmission lines; while the resource vector $\mathbf{x}_t$ can include the size of data workloads, the number of vehicles, or the amount of energy.

Concatenating the random
parameters into a random state vector
$\mathbf{s}_t:=[\bm{\phi}_t^{\top},\mathbf{c}_t^{\top}]^{\top}$, the resource allocation task is to determine
the allocation $\mathbf{x}_t$ in response to the observed (realization) $\mathbf{s}_t$ ``on the fly,'' so
as to minimize the \textit{long-term average} network cost subject
to queue stability at each node, and operation feasibility at each
link. Concretely, we have
\begin{subequations}
\label{eq.prob}
\begin{align}
{\Psi}^{*}:=&\min_{\{\mathbf{x}_t,\forall t\}}\, \lim_{T\rightarrow \infty}\frac{1}{T}\sum_{t=1}^T \mathbb{E}\left[\Psi_t(\mathbf{x}_t)\right]  \label{eq.proba}\\
\text{s.t.}~~~ &\mathbf{q}_{t+1}=\left[\mathbf{q}_t+\mathbf{A}\mathbf{x}_t+\mathbf{c}_t\right]^{+}\!,~\forall t\label{eq.probm}\\
&\lim_{T\rightarrow \infty} \frac{1}{T} \sum_{t=1}^{T} \mathbb{E}\left[\mathbf{q}_t\right]< \infty \label{eq.probn}\\
&~\mathbf{x}_t \in {\cal X}:=\{\mathbf{x}\,|\,\mathbf{0}\leq
\mathbf{x}\leq \bar{\mathbf{x}}\},~\forall t\label{eq.probl}
\end{align}
\end{subequations}
where ${\Psi}^{*}$ is the optimal objective of problem \eqref{eq.prob}, which includes also future information; $\mathbb{E}$ is taken over $\mathbf{s}_t:=[\bm{\phi}_t^{\top},\mathbf{c}_t^{\top}]^{\top}$ as well as possible randomness of optimization
variable $\mathbf{x}_t$; constraints \eqref{eq.probn} ensure queue stability\footnote{Here we focus on the strong stability
given by \cite[Definition 2.7]{neely2010}, which requires the
time-average expected queue length to be finite.}; and \eqref{eq.probl} confines the instantaneous allocation variables to stay within a time-invariant box constraint set ${\cal X}$, which is
specified by, e.g., link capacities, or, server/generator capacities.

The queue dynamics in \eqref{eq.probm} couple the
optimization variables over an infinite time horizon, which implies that the decision variable at the current slot will have effect on all the future decisions.  
Therefore,
finding an optimal solution of \eqref{eq.prob} calls for dynamic
programming \cite{borkar2002}, which is known to suffer from the ``curse of
dimensionality'' and intractability in an online setting. 
In Section
\ref{subsec.reform}, we will circumvent this obstacle by relaxing
\eqref{eq.probm}-\eqref{eq.probn} to limiting average constraints, 
and employing dual decomposition techniques.

\subsection{Motivating setup}\label{subsec.exp}

The geographic load
balancing task in a cloud network \cite{chen2016,Urg11,chen2017} takes the form of \eqref{eq.prob} 
with $J$ mapping nodes (e.g., DNS servers) indexed by ${\cal J}:=\{1,\ldots,J\}$, $K$ data centers indexed by ${\cal K}:=\{J+1,\ldots,J+K\}$. 
To match the definition in Section \ref{subsec.PF}, consider a virtual outgoing node (indexed by $0$) from each data center, and let $(k,0)$ represent this outgoing link.
Define further the node set ${\cal I}:={\cal J}\bigcup {\cal K}$ that includes all nodes except the virtual one, and the edge set ${\cal
E}:=\{(j,k),\forall j\!\in\!{\cal J},k\!\in\!{\cal K}\}\bigcup\{(k,0),\forall k\!\in\!{\cal K}\}$ that contains links connecting
mapping nodes with data centers, and outgoing links from data centers.

Per slot $t$, each mapping node
$j$ collects the amount of user data requests $c_t^j$, and forwards the amount
$x_t^{jk}$ on its link to data center $k$ constrained by the bandwidth
availability. Each data center $k$ schedules workload 
processing $x_t^{k0}$ according to its resource availability. 
The amount $x_t^{k0}$ can be also viewed as the resource on its virtual outgoing link $(k,0)$. 
The bandwidth limit of link $(j,k)$ is $\bar{x}^{jk}$, while the
resource limit of data center $k$ (or link $(k,0)$) is $\bar{x}_t^{k0}$. 
Similar to those in Section \ref{subsec.PF}, we have the optimization vector $\mathbf{x}_t:=\{x_t^{ij},\,\forall (i,j)\!\in\! {\cal E}\}\!\in\!\mathbb{R}^{|\cal E|}$,
$\mathbf{c}_t:=[c_t^1,\ldots,c_t^J,0\ldots,0]^{\top}\!\in\!\mathbb{R}^{J+K}$, and
$\bar{\mathbf{x}}:=\{\bar{x}_t^{ij},\,\forall (i,j)\!\in\! {\cal E}\}\!\in\!\mathbb{R}^{|\cal E|}$.
With these notational conventions, we have an $|{\cal I}|\times |{\cal E}|$ node-incidence matrix $\mathbf{A}$ as in \eqref{eq.incidence}. 
At each mapping node and data center, undistributed or unprocessed
workloads are buffered in queues obeying \eqref{eq.probm} with queue length $\mathbf{q}_t\in\mathbb{R}_+^{J+K}$; see also the system diagram in Fig. \ref{fig:system}.

\begin{figure}[t]
%\centering
\hspace{-0.2cm}
\includegraphics[width=0.5\textwidth]{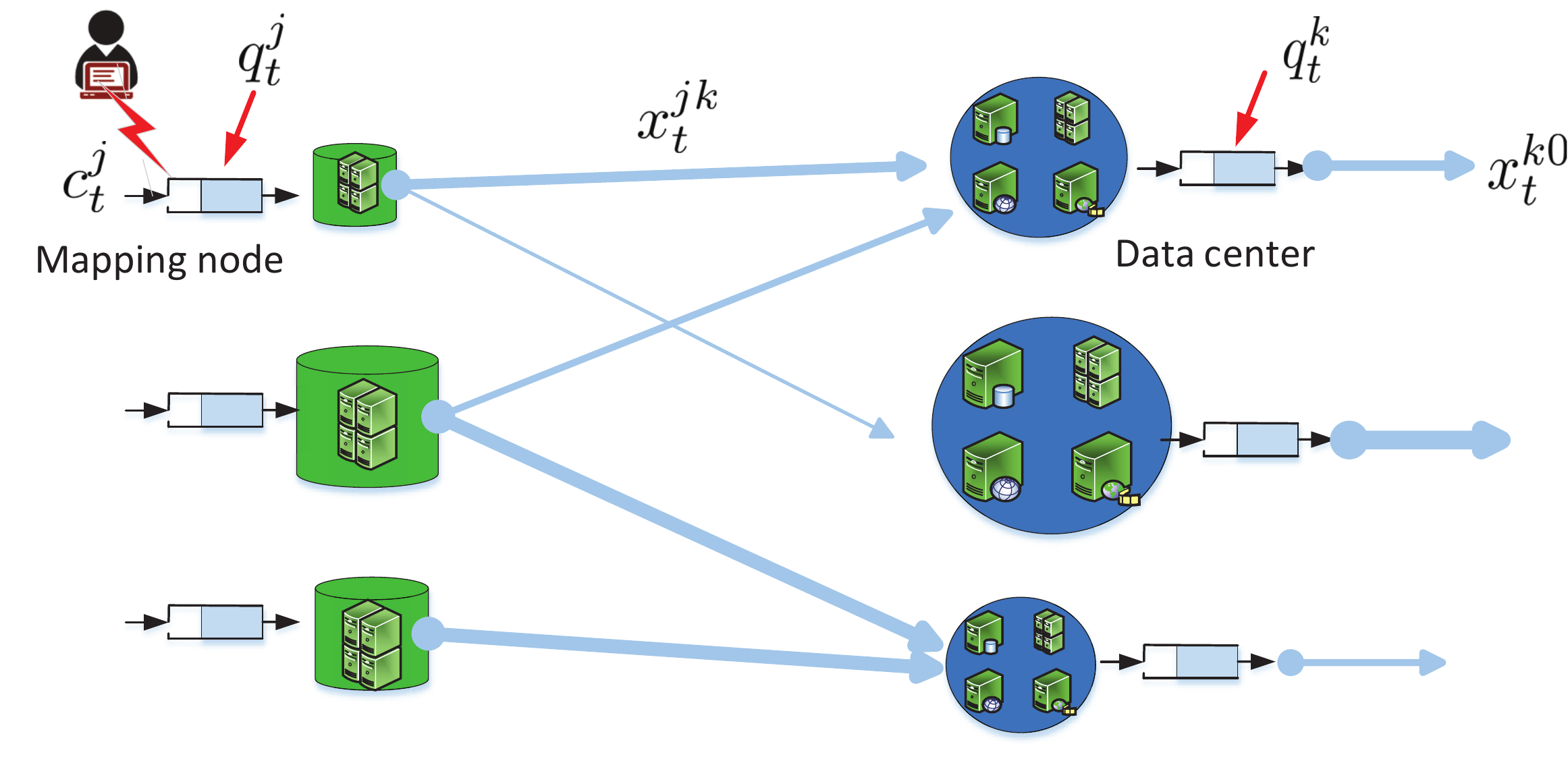}
\vspace{-0.5cm} \caption{A diagram of online geographical load balancing. Per time $t$, mapping node $j$ has an exogenous workload
$c_t^j$ plus that stored in the queue $q_t^j$, and schedules
workload $x_t^{jk}$ to data center $k$. Data center $k$ serves an amount of workload
$x_t^{k0}$ out of all the assigned $x_t^{jk}$ as well as that stored in
the queue $q_t^{k}$. The thickness of each edge is proportional to its capacity.} 
\vspace{-0.3cm} \label{fig:system}
\end{figure}

Performance is characterized by the aggregate cost of power consumed at the data centers plus the bandwidth costs at the mapping nodes, namely
\begin{equation}\label{eq.netcost}
\Psi_t(\mathbf{x}_t):=\sum_{k\in{\cal
K}}~\underbrace{~~\Psi_t^k(x_t^{k0})~~}_{\rm power~cost}~+~\sum_{j\in
{\cal J}}\sum_{k\in{\cal K}}\underbrace{~~\Psi_t^{jk}
(x_t^{jk})~~}_{\rm bandwidth~cost}.
\end{equation}
The power cost $\Psi_t^k (x_t^{k0}):=\Psi^k(x_t^{k0};\bm{\phi}_t^k)$,  parameterized by the random vector $\bm{\phi}_t^k$, 
captures the local marginal price, and the renewable generation at
data center $k$ during time period $t$. The bandwidth cost 
%for out-links of mapping node $j$ is $\Psi_t^{j} (\{x_t^{jk}\}_k):=\sum_{k\in{\cal K}}\Psi_t^{jk} (x_t^{jk})$, where 
$\Psi_t^{jk} (x_t^{jk}):=\Psi^{jk} (x_t^{jk};\bm{\phi}_t^{jk})$,  parameterized by the random vector
$\bm{\phi}_t^{jk}$, 
characterizes the heterogeneous cost of data transmission due to
spatio-temporal differences. 
To match the unified model in Section II-A, the local cost at data center $k\in{\cal K}$ is its power cost $\Psi_t^k (x_t^{k0})$, and the local cost at mapping node $j\in{\cal J}$ becomes $\Psi_t^j (\{x_t^{jk}\}):=\sum_{k\in{\cal K}}\Psi_t^{jk}(x_t^{jk})$. Hence, the cost in \eqref{eq.netcost} can be also written as $\Psi_t(\mathbf{x}_t):=\sum_{i\in{\cal I}}\Psi_t^i (\mathbf{x}_t)$. 
Aiming to minimize the
time-average of \eqref{eq.netcost}, geographical load
balancing fits the formulation in \eqref{eq.prob}.

\section{Online Network Management via SDG}
In this section, the dynamic problem \eqref{eq.prob} is reformulated to a tractable form, and classical stochastic dual gradient (SDG) approach is revisited, along with a brief discussion of its online performance.

\subsection{Problem reformulation}\label{subsec.reform}
Recall in Section \ref{subsec.PF} that the main challenge of solving \eqref{eq.prob} resides in time-coupling constraints and unknown distribution of the underlying random processes. 
Regarding the first hurdle, combining \eqref{eq.probm} with \eqref{eq.probn}, it can be shown that in the long term, workload arrival and departure rates must
satisfy the following necessary condition \cite[Theorem 2.8]{neely2010}
\begin{equation}\label{Queue-relax}
\lim_{T\rightarrow \infty}\frac{1}{T}\sum_{t=1}^T\mathbb{E}\left[\mathbf{A}\mathbf{x}_t+\mathbf{c}_t\right]\leq \mathbf{0}
\end{equation}
given that the initial queue length is finite, i.e., $\|\mathbf{q}_1\|\leq \infty$. 
%To see this, observe that \eqref{eq.probm} implies
%$\mathbf{q}_{t+1} \geq
%\mathbf{q}_t+\mathbf{A}\mathbf{x}_t+\mathbf{c}_t$. Summing the latter over
%time and taking expectation, we find
%$\mathbb{E}[\mathbf{q}_{T+1}]\geq\mathbb{E}[\mathbf{q}_1] +
%\sum_{t=1}^{T} \mathbb{E}[\mathbf{A}\mathbf{x}_t+\mathbf{c}_t]$.
%Since both $\mathbf{q}_1$ and $\mathbf{q}_{T+1}$ are bounded or
%so-termed rate stable \cite{neely2010} under \eqref{eq.probn},
%dividing both sides by $T$ and taking limits as $T\rightarrow
%\infty$ yields \eqref{Queue-relax}. 
In other words, on average
all buffered delay-tolerant workloads should be served. Using
\eqref{Queue-relax}, a relaxed version of
\eqref{eq.prob} is
\begin{align} \label{eq.reform}
\tilde{\Psi}^{*} := \min_{\{\mathbf{x}_t,\forall t\}} \, \lim_{T\rightarrow
\infty}\frac{1}{T}\sum_{t=1}^T
\mathbb{E}\left[\Psi_t(\mathbf{x}_t)\right] ~~ \text{s.t.}~
\eqref{eq.probl}~\text{and}~\eqref{Queue-relax}
\end{align}
where $\tilde{\Psi}^{*}$ is the optimal objective for the relaxed problem \eqref{eq.reform}.

Compared to \eqref{eq.prob}, problem \eqref{eq.reform} eliminates the
time coupling across variables $\{\mathbf{q}_t,\forall t\}$ by replacing \eqref{eq.probm} and \eqref{eq.probn}
with \eqref{Queue-relax}. Since \eqref{eq.reform} is a
relaxed version of \eqref{eq.prob} with the optimal objective $\tilde{\Psi}^{*}\leq
{\Psi}^{*}$, 
%\blue{However, we will derive an optimality bound
%w.r.t. $\tilde{\Psi}^{*}$ and guarantee the resource allocation is
%feasible w.r.t. \eqref{eq.prob}, thus the optimality bound w.r.t.
%${\Psi}^{*}$ is naturally revealed.} 
if one solves \eqref{eq.reform} instead of \eqref{eq.prob}, it will be prudent to derive an optimality bound on ${\Psi}^{*}$, provided that the sequence of solutions $\{\mathbf{x}_t,\forall t\}$ obtained by solving \eqref{eq.reform} is feasible for the relaxed constraints \eqref{eq.probm} and \eqref{eq.probn}.
Regarding the relaxed problem \eqref{eq.reform}, using arguments
similar to those in \cite[Theorem 4.5]{neely2010}, it can be shown that
if the random state $\mathbf{s}_t$ is independent and identically
distributed (i.i.d.) over time $t$, there exists a
\textit{stationary} control policy $\bm{\chi}^*(\cdot)$, which is
a pure (possibly randomized) function of the realization of random state $\mathbf{s}_t$ (or the \textit{observed} state $\mathbf{s}_t$); i.e., it satisfies
\eqref{eq.probl}, as well as guarantees that
$\mathbb{E}[\Psi_t(\bm{\chi}^*(\mathbf{s}_t))] = \tilde{\Psi}^{*}$ and
$\mathbb{E}[\mathbf{A}\bm{\chi}^*(\mathbf{s}_t)+\mathbf{c}_t]\leq \mathbf{0}$.
As the optimal policy $\bm{\chi}^*(\cdot)$ is time invariant, it implies that the \textit{dynamic} problem
\eqref{eq.reform} is equivalent to the following time-invariant \textit{ensemble} program
\begin{subequations}\label{eq.reform2}
\begin{align}
\tilde{\Psi}^{*} := &\min_{\bm{\chi}(\cdot)} \; \mathbb{E}\left[\Psi\big(\bm{\chi}(\mathbf{s}_t);\mathbf{s}_t\big)\right]\label{eq.reforme0}\\
\text{s.t.}~~~&\mathbb{E}[\mathbf{A}\bm{\chi}(\mathbf{s}_t)+\mathbf{c}(\mathbf{s}_t)]\leq \mathbf{0} \label{eq.reforme1}\\
&\bm{\chi}(\mathbf{s}_t)  \in {\cal X},~\forall \mathbf{s}_t\in \mathcal{S}
\label{eq.reforme2}
\end{align}
\end{subequations}
where
$\bm{\chi}(\mathbf{s}_t)\!:=\!\mathbf{x}_t$, $\mathbf{c}(\mathbf{s}_t)=\mathbf{c}_t$, and
$\Psi\big(\bm{\chi}(\mathbf{s}_t);\mathbf{s}_t\big)\!:=\!\Psi_t(\mathbf{x}_t)$; set $\mathcal{S}$ is the sample space of $\mathbf{s}_t$, and the constraint \eqref{eq.reforme2} holds almost surely.
Observe that the index $t$ in \eqref{eq.reform2} can be dropped, since the expectation is taken over the distribution of random variable $\mathbf{s}_t$, which is time-invariant. 
Leveraging the equivalent form \eqref{eq.reform2}, the remaining task boils down to finding the optimal policy that achieves the minimal objective in \eqref{eq.reforme0} and obeys the constraints \eqref{eq.reforme1} and \eqref{eq.reforme2}.\footnote{Though there may exist other time-dependent policies that generate the optimal solution to \eqref{eq.reform}, our attention is restricted to the one that purely depends on the observed state $\mathbf{s}\in\mathcal{S}$, which can be time-independent \cite[Theorem 4.5]{neely2010}.}
Note that the optimization in \eqref{eq.reform2} is with respect to a
stationary policy $\bm{\chi}(\cdot)$, which is an infinite dimensional problem in the primal domain. However, there is a
finite number of expected constraints [cf.
\eqref{eq.reforme1}]. Thus, the dual problem contains a finite
number of variables, hinting to the effect that solving \eqref{eq.reform2} is tractable in the dual domain \cite{marques12,Ale10}.

\subsection{Lagrange dual and optimal policy}\label{subsec.Lag-dual}

With $\bm{\lambda}\in \mathbb{R}_+^{I}$ denoting the Lagrange
multipliers associated with \eqref{eq.reforme1},
the Lagrangian of \eqref{eq.reform2} is
\begin{equation}\label{eq.Lam0}
    {\cal L}(\bm{\chi},\bm{\lambda}) \!:=\mathbb{E}\big[{\cal
    L}_t(\mathbf{x}_t,\bm{\lambda})\big]
\end{equation}
with $\bm{\lambda}\geq \mathbf{0}$, and the instantaneous Lagrangian is
\begin{align}\label{eq.Lam}
{\cal L}_t(\mathbf{x}_t,\bm{\lambda})
\!:=&\Psi_t(\mathbf{x}_t)+\bm{\lambda}^{\top}(\mathbf{A}\mathbf{x}_t+\mathbf{c}_t)
\end{align}
where constraint \eqref{eq.reforme2} remains implicit.
Notice that the instantaneous objective $\Psi_t(\mathbf{x}_t)$ and the instantaneous constraint $\mathbf{A}\mathbf{x}_t+\mathbf{c}_t$ are both parameterized by the observed state $\mathbf{s}_t:=[\bm{\phi}_t^{\top},\mathbf{c}_t^{\top}]^{\top}$ at time $t$; i.e., ${\cal L}_t(\mathbf{x}_t,\bm{\lambda})={\cal L}(\bm{\chi}(\mathbf{s}_t),\bm{\lambda};\mathbf{s}_t)$.
%thus, the instantaneous Lagrangian can be written as ${\cal L}_t(\mathbf{x}_t,\bm{\lambda})={\cal L}(\bm{\chi}(\mathbf{s}_t),\bm{\lambda};\mathbf{s}_t)$.
%, so that  ${\cal L}(\bm{\chi},\bm{\lambda}) \!:=\mathbb{E}\big[{\cal L}(\bm{\chi}(\mathbf{s}_t),\bm{\lambda};\mathbf{s}_t)\big]$.

Correspondingly, the Lagrange dual function is defined as the minimum of the Lagrangian over the
all feasible primal variables \cite{bertsekas2003}, given by
\begin{subequations}\label{eq.dual-func}
\begin{align}\label{eq.dual-func1}
{\cal D}(\bm{\lambda}):&=\min_{\{\bm{\chi}(\mathbf{s}_t)  \in {\cal X},~\forall \mathbf{s}_t\in \mathcal{S}\}}\,{\cal
L}(\bm{\chi},\bm{\lambda})\nonumber\\
&=\min_{\{\bm{\chi}(\mathbf{s}_t)  \in {\cal X},~\forall \mathbf{s}_t\in \mathcal{S}\}}\,\mathbb{E}\big[{\cal
    L}(\bm{\chi}(\mathbf{s}_t),\bm{\lambda};\mathbf{s}_t)\big].
\end{align}
Note that the optimization in \eqref{eq.dual-func1} is still w.r.t. a function. To facilitate the optimization, we re-write \eqref{eq.dual-func1} relying on the so-termed \emph{interchangeability principle} \cite[Theorem 7.80]{shapiro2009}.
\begin{lemma}\label{lemm-inter}
	Let $\bm{\xi}$ denote a random variable on $\bm{\Xi}$, and ${\cal H}:=\{h(\,\cdot\,):\bm{\Xi}\rightarrow\mathbb{R}^n\}$ denote the function space of all the functions on $\bm{\Xi}$. 
	For any $\bm{\xi}\in \bm{\Xi}$, if $f(\,\cdot\,,\bm{\xi}):\mathbb{R}^n\rightarrow \mathbb{R}$ is a proper and lower semicontinuous convex function, then it follows that
	 \begin{equation}\label{eq.inter1}
	 		\min_{h(\cdot)\in{\cal H}}\,\mathbb{E}\big[f(h(\bm{\xi}),\bm{\xi})\big]=\mathbb{E}\left[\min_{\mathbf{h} \in \mathbb{R}^n}f(\mathbf{h},\bm{\xi})\right].
	 \end{equation}
\end{lemma}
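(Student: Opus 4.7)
The plan is to prove the two inequalities separately. The easy direction $\text{LHS}\geq \text{RHS}$ is purely pointwise: for any measurable $h(\cdot)\in{\cal H}$, one has $f(h(\bm{\xi}),\bm{\xi})\geq \inf_{\mathbf{h}\in\mathbb{R}^n} f(\mathbf{h},\bm{\xi})$ for every $\bm{\xi}\in\bm{\Xi}$, so monotonicity of expectation gives $\mathbb{E}[f(h(\bm{\xi}),\bm{\xi})]\geq \mathbb{E}[\min_{\mathbf{h}} f(\mathbf{h},\bm{\xi})]$, and minimizing over $h$ on the left keeps the inequality. This direction uses neither convexity nor lower semicontinuity of $f$.

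The interesting direction $\text{LHS}\leq \text{RHS}$ requires producing an (almost-)optimal \emph{measurable} selector. Concretely, I would define $\phi(\bm{\xi}):=\inf_{\mathbf{h}\in\mathbb{R}^n} f(\mathbf{h},\bm{\xi})$ and carry out three steps. First, argue that $f$ is a \emph{normal integrand}: the hypothesis that $f(\,\cdot\,,\bm{\xi})$ is proper, lower semicontinuous, and convex for each $\bm{\xi}$ makes the epigraphical multifunction $\bm{\xi}\mapsto \mathrm{epi}\,f(\,\cdot\,,\bm{\xi})$ closed-valued, and under the standing measurability of $f$ in $\bm{\xi}$ this multifunction is measurable. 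Second, invoke a measurable selection theorem (Kuratowski--Ryll-Nardzewski, or the normal-integrand machinery in Rockafellar--Wets, \emph{Variational Analysis}, Thm.~14.37/14.60) to obtain, for each $\epsilon>0$, a measurable $h_\epsilon(\cdot)\in{\cal H}$ with $f(h_\epsilon(\bm{\xi}),\bm{\xi})\leq \phi(\bm{\xi})+\epsilon$ almost surely. Third, take expectations, let $\epsilon\downarrow 0$, and use the monotone/dominated convergence theorem to conclude $\inf_{h\in{\cal H}}\mathbb{E}[f(h(\bm{\xi}),\bm{\xi})]\leq \mathbb{E}[\phi(\bm{\xi})]$.

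The main obstacle is the measurability of the selector. Without some regularity, $\mathrm{argmin}_\mathbf{h} f(\mathbf{h},\bm{\xi})$ could fail to admit a measurable selection, and even the function $\phi$ itself need not be measurable. The convexity plus lower semicontinuity hypothesis is what rescues the argument: it forces $f$ to be a normal convex integrand, so level sets and $\epsilon$-argmin sets are closed convex and vary measurably with $\bm{\xi}$, making the Kuratowski--Ryll-Nardzewski theorem applicable. A secondary technical point is to ensure the expectations on both sides are well-defined (not of the indeterminate form $\infty-\infty$); this is standard once one assumes either $\mathbb{E}[\phi(\bm{\xi})]>-\infty$ or the existence of at least one $h\in{\cal H}$ with $\mathbb{E}[f(h(\bm{\xi}),\bm{\xi})]<\infty$, which is implicit in the formulation. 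Combining both inequalities yields the stated identity and the $\inf$ is attained whenever the pointwise $\min$ is, justifying the use of $\min$ in \eqref{eq.inter1}.
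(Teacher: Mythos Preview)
Your proof sketch is correct and follows the standard route via normal integrands and measurable selection. Note, however, that the paper does not supply its own proof of this lemma: it is stated as a direct invocation of the interchangeability principle from Shapiro et al.\ (Theorem~7.80), so there is no in-paper argument to compare against. Your two-inequality decomposition, with the nontrivial direction handled by the Kuratowski--Ryll-Nardzewski selection theorem applied to the closed-valued measurable epigraphical multifunction, is precisely the machinery underlying that cited result; you have essentially reconstructed the textbook proof.
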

Lemma \ref{lemm-inter} implies that under mild conditions, we can replace the optimization over a function space with (infinitely many) point-wise optimization problems. In the context here, we assume that $\Psi_t(\mathbf{x}_t)$ is proper, lower semicontinuous, and strongly convex (cf. Assumption \ref{assp.primal} in Section V). 
Thus, for given finite $\bm{\lambda}$ and $\mathbf{s}_t$, ${\cal L}(\,\cdot\,,\bm{\lambda};\mathbf{s}_t)$ is also strongly convex, proper and lower semicontinuous. 
%And the semi-continuity of ${\cal L}(\,\cdot\,,\bm{\lambda};\mathbf{s}_t)$ comes from the semi-continuity of $\Psi_t(\mathbf{x}_t)$.
Therefore, applying Lemma \ref{lemm-inter} yields 
\begin{equation}\label{eq.inter2}
\hspace{-0.1cm}\min_{\{\bm{\chi}(\cdot):{\cal S}\rightarrow{\cal X}\}}\!\!\!\mathbb{E}\big[{\cal L}(\bm{\chi}(\mathbf{s}_t),\bm{\lambda};\mathbf{s}_t)\big]\!\!=\!\mathbb{E}\big[\!\min_{\bm{\chi}(\mathbf{s}_t) \in {\cal X}}\!\!\!{\cal L}(\bm{\chi}(\mathbf{s}_t),\bm{\lambda};\mathbf{s}_t)\big]\!\!\!\!
\end{equation}
where the minimization and the expectation are interchanged.
Accordingly, we re-write \eqref{eq.dual-func1} in the following form
\begin{align}
\!\!\!{\cal D}(\bm{\lambda})\!=\!\mathbb{E}\!\left[\min_{\bm{\chi}(\mathbf{s}_t) \in {\cal
X}}\!{\cal L}(\bm{\chi}(\mathbf{s}_t),\bm{\lambda};\mathbf{s}_t)\right]\!=\!\mathbb{E}\!\left[\min_{\mathbf{x}_t \in {\cal X}}{\cal L}_t(\mathbf{x}_t,\bm{\lambda})\right]\!.\!\!
\end{align}
\end{subequations}
Likewise, for the instantaneous dual function ${\cal D}_t(\bm{\lambda})={\cal D}(\bm{\lambda};\mathbf{s}_t):=\min_{\mathbf{x}_t \in {\cal
X}}{\cal L}_t(\mathbf{x}_t,\bm{\lambda})$, the dual problem of \eqref{eq.reform2} is
\begin{align}\label{eq.dual-prob}
 \max_{\bm{\lambda}\geq \mathbf{0}} \, {\cal D}(\bm{\lambda}):=\mathbb{E}\left[{\cal
 D}_t(\bm{\lambda})\right].
\end{align}
%where the instantaneous dual function ${\cal D}_t(\bm{\lambda})$ is parameterized by the observed state $\mathbf{s}_t$, i.e., ${\cal D}_t(\bm{\lambda})={\cal D}(\bm{\lambda};\mathbf{s}_t):=\min_{\mathbf{x}_t \in {\cal
%X}}{\cal L}_t(\mathbf{x}_t,\bm{\lambda})$. 
In accordance with the ensemble primal problem \eqref{eq.reform2}, we will henceforth refer to \eqref{eq.dual-prob} as the \textit{ensemble} dual problem.

If the optimal Lagrange multiplier $\bm{\lambda}^*$ associated with \eqref{eq.reforme1} were known, then optimizing 
\eqref{eq.reform2} and consequently \eqref{eq.reform} would be equivalent to minimizing the Lagrangian
${\cal L}(\bm{\chi},\bm{\lambda}^*)$ or infinitely many instantaneous $\{{\cal L}_t(\mathbf{x}_t,\bm{\lambda}^*)\}$,
over the set ${\cal X}$ \cite{borkar2002}.
We restate this assertion as follows.
%conclusion in the following proposition.

\begin{proposition}\label{prop.closedform}
%\blue{the proposition should be self-contained.}
Consider the optimization problem in 
\eqref{eq.reform2}. Given a realization 
$\mathbf{s}_t$, and the optimal Lagrange multiplier
$\bm{\lambda}^*$ associated with the constraints
\eqref{eq.reforme1}, the optimal instantaneous resource allocation
decision is
    \begin{equation}\label{close-up1}
        \mathbf{x}_t^*=\bm{\chi}^*(\mathbf{s}_t)\in\arg\min_{\bm{\chi}(\mathbf{s}_t)\in {\cal X}}{\cal
        L}(\mathbf{x}_t,\bm{\lambda}^*;\mathbf{s}_t)
    \end{equation}
where $\in$ accounts for possibly multiple minimizers of ${\cal
        L}_t$.
\end{proposition}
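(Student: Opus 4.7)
The plan is to argue that the claim is essentially a consequence of (i) strong duality for the relaxed ensemble problem \eqref{eq.reform2}, combined with (ii) the interchangeability principle (Lemma \ref{lemm-inter}) that was already used to obtain the pointwise form of ${\cal D}(\bm{\lambda})$.

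First, I would verify strong duality for \eqref{eq.reform2}. Under the assumption that $\Psi_t$ is proper, lower semicontinuous, and strongly convex, and that the constraint set ${\cal X}$ is a nonempty compact box, the problem is convex with finitely many linear expectation constraints \eqref{eq.reforme1}. Assuming a Slater-type condition (the existence of a feasible $\bm{\chi}$ for which \eqref{eq.reforme1} holds strictly), standard results yield strong duality, and the optimal dual multiplier $\bm{\lambda}^*\ge\mathbf{0}$ attains the dual optimum. A saddle-point characterization then gives that any primal optimum $\bm{\chi}^*$ must satisfy
\begin{equation*}
\bm{\chi}^*\in\argmin_{\{\bm{\chi}(\mathbf{s}_t)\in{\cal X},\,\forall\mathbf{s}_t\in{\cal S}\}}\;{\cal L}(\bm{\chi},\bm{\lambda}^*).
\end{equation*}

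Second, I would invoke Lemma \ref{lemm-inter} with $f(\,\cdot\,,\mathbf{s}_t)={\cal L}(\,\cdot\,,\bm{\lambda}^*;\mathbf{s}_t)$ and $h(\cdot)=\bm{\chi}(\cdot)$; the hypotheses (proper, lower semicontinuous, convex in its first argument for each fixed $\mathbf{s}_t$) hold because they are inherited from $\Psi_t$ and the term $(\bm{\lambda}^*)^{\top}(\mathbf{A}\mathbf{x}_t+\mathbf{c}_t)$ is affine in $\mathbf{x}_t$. The interchangeability principle then converts the minimization of $\mathbb{E}[{\cal L}(\bm{\chi}(\mathbf{s}_t),\bm{\lambda}^*;\mathbf{s}_t)]$ over the function space into a pointwise minimization, so that for almost every realization $\mathbf{s}_t$,
\begin{equation*}
\bm{\chi}^*(\mathbf{s}_t)\in\argmin_{\bm{\chi}(\mathbf{s}_t)\in{\cal X}}{\cal L}(\mathbf{x}_t,\bm{\lambda}^*;\mathbf{s}_t).
\end{equation*}
Setting $\mathbf{x}_t^*=\bm{\chi}^*(\mathbf{s}_t)$ yields \eqref{close-up1}; the use of ``$\in$'' absorbs the possibility that the strongly convex instantaneous Lagrangian in $\mathbf{x}_t$ has a unique minimizer but that, for non-strongly-convex variants or for boundary cases, a set-valued argmin may arise.

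The main obstacle is invoking strong duality cleanly in this infinite-dimensional primal setting: the decision variable is a policy $\bm{\chi}(\cdot)$ in a function space, while the multipliers live in the finite-dimensional space $\mathbb{R}^{I}_+$ associated with \eqref{eq.reforme1}. I would handle this either by appealing directly to a semi-infinite / stochastic programming duality result (as in the references already cited in the paper, e.g., \cite{shapiro2009}), or by first applying Lemma \ref{lemm-inter} to reduce the dual function to the pointwise form given just before Proposition \ref{prop.closedform}, and then noting that zero duality gap together with feasibility of $\bm{\chi}^*$ and complementary slackness force $\bm{\chi}^*(\mathbf{s}_t)$ to be a pointwise Lagrangian minimizer almost surely. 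The remaining steps are routine manipulations of the saddle-point condition.
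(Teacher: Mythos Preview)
Your proposal is correct and aligns with the paper's own treatment. The paper does not supply a standalone proof of Proposition~\ref{prop.closedform}; it simply invokes the standard Lagrangian-duality fact (citing \cite{borkar2002}) that, with $\bm{\lambda}^*$ in hand, the optimal policy is a Lagrangian minimizer, and relies on the preceding discussion around Lemma~\ref{lemm-inter} to pass from the functional minimization of ${\cal L}(\bm{\chi},\bm{\lambda}^*)$ to the pointwise minimization of ${\cal L}_t(\mathbf{x}_t,\bm{\lambda}^*)$. Your two-step argument (strong duality under Slater to obtain the saddle-point characterization, then interchangeability to decouple across realizations) is exactly the content behind that citation, so you are filling in precisely the details the paper leaves implicit.
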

When the realizations $\{\mathbf{s}_t\}$ are obtained sequentially, one can generate a sequence of optimal solutions $\{\mathbf{x}_t^*\}$ correspondingly for the dynamic problem \eqref{eq.reform}. 
%Proposition \ref{prop.closedform} asserts that the stationary policy we are looking for in Section
%\ref{subsec.reform} is the result of instantaneous Lagrangian
%minimization given the optimal multiplier. To implement
%such an however, 
To obtain the optimal allocation in \eqref{close-up1} however, $\bm{\lambda}^*$ must be known. This fact motivates our novel ``learn-and-adapt'' stochastic dual gradient (LA-SDG)
method in Section \ref{sec.LA-SDG}. 
To this end, we will first outline the celebrated stochastic dual gradient iteration (a.k.a. Lyapunov optimization).

\subsection{Revisiting stochastic dual (sub)gradient}\label{subsec.DGD}

To solve \eqref{eq.dual-prob}, a standard gradient iteration involves sequentially taking expectations
over the distribution of $\mathbf{s}_t$ to compute the gradient. 
Note that when the Lagrangian minimization (cf. \eqref{close-up1}) admits possibly multiple minimizers, a subgradient iteration is employed instead of the gradient one \cite{bertsekas2003}. 
This is challenging because the distribution of $\mathbf{s}_t$ is typically unknown in practice.
But even if the joint probability distribution functions were
available, finding the expectations is not scalable as the dimensionality of $\mathbf{s}_t$ grows.
%will be non-trivial especially in high-dimensional cases.

A common remedy to this challenge is stochastic approximation \cite{robbins1951,neely2010}, which corresponds to the following SDG iteration
\begin{subequations}\label{eq.dual-sgd}
\begin{equation}\label{eq.dual-stocg}
\bm{\lambda}_{t+1} ~=\big[\bm{\lambda}_t+ \mu \nabla{\cal
D}_t(\bm{\lambda}_t)\big]^{+},\;\forall t
\end{equation}
where $\mu$ is a positive (and typically pre-selected constant) stepsize. The
stochastic (sub)gradient $\nabla{\cal
D}_t(\bm{\lambda}_t)=\mathbf{A}\mathbf{x}_t+\mathbf{c}_t$ is an
unbiased estimate of the true (sub)gradient; that is, $\mathbb{E}[\nabla{\cal D}_t(\bm{\lambda}_t)]=\nabla{\cal D}(\bm{\lambda}_t)$.
Hence, the primal $\mathbf{x}_t$ can be found by solving
the following instantaneous sub-problems, one per $t$
\begin{equation}\label{eq.SA-sub}
    \mathbf{x}_t\in\arg\min_{\mathbf{x}_t \in {\cal X}}{\cal L}_t(\mathbf{x}_t,\bm{\lambda}_t).
\end{equation}
\end{subequations}

%Note that for a number of cost functions, closed-form solutions
%for primal variable $\mathbf{x}_t$ can be found.

The iterate $\bm{\lambda}_{t+1}$ in \eqref{eq.dual-stocg} depends only on the probability distribution
of $\mathbf{s}_t$ through the stochastic (sub)gradient $\nabla{\cal
D}_t(\bm{\lambda}_t)$. Consequently, the process
$\{\bm{\lambda}_t\}$ is Markov with invariant transition
probability when $\mathbf{s}_t$ is stationary. An interesting
observation is that since $\nabla{\cal
D}_t(\bm{\lambda}_t):=\mathbf{A}\mathbf{x}_t+\mathbf{c}_t$, the
dual iteration can be written as [cf. \eqref{eq.dual-stocg}]
\begin{equation}\label{eq.dual-stocg2}
{\bm{\lambda}_{t+1}}/{\mu}=\left[{\bm{\lambda}_t}/{\mu}+
\mathbf{A}\mathbf{x}_t+\mathbf{c}_t\right]^{+},\; \forall t
\end{equation}
which coincides with \eqref{eq.probm} for
$\bm{\lambda}_t/\mu=\mathbf{q}_t$; see also \cite{neely2010,huang2011,Urg11} for a virtual queue interpretation of this parallelism.

Thanks to its low complexity and robustness to non-stationary
scenarios, SDG is widely used in
various areas, including adaptive signal processing
\cite{kong1995}, stochastic network optimization
\cite{neely2010,huang2011,huang2014}, and energy management in
power grids \cite{Urg11,sun2016}. For network management in particular, this iteration entails a cost-delay tradeoff as summarized next; see e.g.,  
\cite{neely2010}.
\begin{proposition}\label{prop.SDG}
If ${\Psi}^*$ is the optimal cost in \eqref{eq.prob} under any feasible control policy with the state distribution available, and if a constant stepsize $\mu$ is used in \eqref{eq.dual-stocg}, the SDG recursion \eqref{eq.dual-sgd} achieves an ${\cal O}(\mu)$-optimal solution in the sense that
\begin{subequations}
	\begin{equation}
%$\displaystyle{\qquad\qquad
    \lim_{T\rightarrow \infty} \frac{1}{T} \sum_{t=1}^{T} \mathbb{E}\left[\Psi_t\left(\mathbf{x}_t(\bm{\lambda}_t)\right)\right] \leq {\Psi}^*+{\cal O}(\mu)
\end{equation}
where $\mathbf{x}_t(\bm{\lambda}_t)$ denotes the decisions obtained from \eqref{eq.SA-sub}, and it incurs a steady-state queue length ${\cal O}(1/\mu)$, namely
\begin{equation}
%$\displaystyle{\qquad\qquad\qquad~
    \lim_{T\rightarrow \infty} \frac{1}{T} \sum_{t=1}^{T} \mathbb{E}\left[\mathbf{q}_t\right]={\cal O}\left(\frac{1}{\mu}\right).
\end{equation}
\end{subequations}
\end{proposition}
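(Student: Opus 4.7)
The plan is to run a Lyapunov drift-plus-penalty analysis in the spirit of \cite{neely2010}, exploiting the identification $\bm{\lambda}_t=\mu\mathbf{q}_t$ already exposed by \eqref{eq.dual-stocg2} under matched initializations. Under this identification the SDG update literally becomes the queue recursion \eqref{eq.probm}, and the per-slot subproblem \eqref{eq.SA-sub} minimizes $\Psi_t(\mathbf{x})+\mu\mathbf{q}_t^{\top}(\mathbf{A}\mathbf{x}+\mathbf{c}_t)$ over $\mathcal{X}$, which---up to the scalar factor $1/\mu$---is the drift-plus-penalty objective with penalty weight $V=1/\mu$. Taking the quadratic Lyapunov function $L(\mathbf{q}_t):=\tfrac12\|\mathbf{q}_t\|^{2}$ and using $([a]^{+})^{2}\le a^{2}$ componentwise on \eqref{eq.probm}, one obtains the standard one-step bound
\[L(\mathbf{q}_{t+1})-L(\mathbf{q}_t)\le B+\mathbf{q}_t^{\top}(\mathbf{A}\mathbf{x}_t+\mathbf{c}_t),\]
where $B$ uniformly bounds $\tfrac12\mathbb{E}\|\mathbf{A}\mathbf{x}_t+\mathbf{c}_t\|^{2}$, finite because $\mathcal{X}$ is compact and $\mathbf{c}_t$ has bounded second moment.

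For the cost bound I would compare against the stationary optimal policy $\bm{\chi}^{*}(\cdot)$ of Section~III-A, which is independent of $\mathbf{q}_t$ and satisfies $\mathbb{E}[\mathbf{A}\bm{\chi}^{*}(\mathbf{s}_t)+\mathbf{c}_t]\le\mathbf{0}$ and $\mathbb{E}[\Psi_t(\bm{\chi}^{*}(\mathbf{s}_t))]=\tilde{\Psi}^{*}\le\Psi^{*}$. Since $\mathbf{x}_t$ minimizes $\Psi_t(\mathbf{x})+\mu\mathbf{q}_t^{\top}(\mathbf{A}\mathbf{x}+\mathbf{c}_t)$ over $\mathcal{X}$, it is no worse than $\bm{\chi}^{*}(\mathbf{s}_t)$ at the same multiplier; substituting this inequality into the drift bound and conditioning on $\mathbf{q}_t$ yields
\[\mathbb{E}[L(\mathbf{q}_{t+1})-L(\mathbf{q}_t)\mid\mathbf{q}_t]+\tfrac{1}{\mu}\mathbb{E}[\Psi_t(\mathbf{x}_t)\mid\mathbf{q}_t]\le B+\tfrac{\Psi^{*}}{\mu}.\]
Taking full expectation, telescoping from $t=1$ to $T$, dividing by $T$, and letting $T\to\infty$ while discarding the nonnegative $\mathbb{E}[L(\mathbf{q}_{T+1})]/T$ term delivers $\limsup_T\tfrac{1}{T}\sum_{t=1}^{T}\mathbb{E}[\Psi_t(\mathbf{x}_t)]\le\Psi^{*}+\mu B=\Psi^{*}+\mathcal{O}(\mu)$.

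For the queue bound I would repeat the same comparison, but now against a Slater-type interior stationary policy $\bm{\chi}^{\dagger}$ satisfying $\mathbb{E}[\mathbf{A}\bm{\chi}^{\dagger}(\mathbf{s}_t)+\mathbf{c}_t]\le-\zeta\mathbf{1}$ for some slack $\zeta>0$. Lower-bounding $\Psi_t(\mathbf{x}_t)$ by the compactness-induced minimum $\Psi_{\min}$ converts the drift-plus-penalty inequality into
\[\mathbb{E}[L(\mathbf{q}_{t+1})-L(\mathbf{q}_t)\mid\mathbf{q}_t]\le B+\tfrac{\Psi^{\dagger}-\Psi_{\min}}{\mu}-\zeta\,\mathbf{1}^{\top}\mathbf{q}_t,\]
and telescoping, averaging, and rearranging yields $\limsup_T\tfrac{1}{T}\sum_{t=1}^{T}\mathbb{E}[\mathbf{1}^{\top}\mathbf{q}_t]=\mathcal{O}(1/\mu)$; the componentwise claim then follows from $\mathbf{q}_t\ge\mathbf{0}$. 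The main obstacles I anticipate are (i) matching the initial conditions so that $\bm{\lambda}_t=\mu\mathbf{q}_t$ holds exactly rather than up to a decaying transient, and (ii) invoking strict feasibility---without Slater the $1/\mu$ queue bound can only be recovered through a perturbation argument on \eqref{eq.reform2}; everything else is routine drift-plus-penalty bookkeeping, essentially following \cite[Theorem 4.5]{neely2010}.
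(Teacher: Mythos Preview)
Your proposal is correct and is precisely the standard drift-plus-penalty argument the paper has in mind: Proposition~\ref{prop.SDG} is not proved in the paper but simply quoted from \cite{neely2010}, and the paper's own proof of Theorem~\ref{gap-onlineLA-SDG} (Appendix~D) follows essentially the same Lyapunov template---square the queue recursion, bound $\|\mathbf{A}\mathbf{x}_t+\mathbf{c}_t\|^2$ by $M^2$ via Assumption~\ref{assp.dualgrad}, add the penalty, compare against the stationary optimum through weak duality, telescope. Your two anticipated obstacles are both non-issues here: the identification $\bm{\lambda}_t=\mu\mathbf{q}_t$ is taken as exact in the paper (cf.\ the remark following \eqref{eq.dual-stocg2}), and Slater's condition is explicitly Assumption~\ref{assp.slater}.
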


%Proposition \ref{prop.SDG} entails that with stepsize $\mu$,
%stochastic dual gradient will asymptotically yield an ${\cal
%O}(\mu)$-optimal solution. However, the steady-state queue length
%$\mathbf{q}_{\infty}$ is inevitably inversely proportional to
%$\mu$; i.e., $\mathbf{q}_{\infty}=\bm{\lambda}_{\infty}/\mu$ [cf.
%\eqref{eq.dual-stocg2}]. To overcome this limitation, we aim to
%design a novel \textit{online} approach, which can significantly
%improve the cost-delay tradeoff of stochastic dual gradient,
%while still preserving its low complexity and adaptability.

Proposition \ref{prop.SDG} asserts that SDG with stepsize $\mu$ will asymptotically yield an ${\cal O}(\mu)$-optimal solution \cite[Prop. 8.2.11]{bertsekas2003}, and it will have steady-state queue length $\mathbf{q}_{\infty}$ inversely proportional to $\mu$. 
This optimality gap is standard, because iteration \eqref{eq.dual-stocg} with a constant stepsize\footnote{A vanishing stepsize in the stochastic approximation iterations can ensure convergence, but necessarily implies an unbounded queue length as $\mu\rightarrow 0$ \cite{neely2010}.} will converge to a neighborhood of the optimum $\bm{\lambda}^*$ \cite{kong1995}.
Under mild conditions, the optimal multiplier is bounded, i.e., $\bm{\lambda}^*={\cal O}(1)$, so that the steady-state queue length $\mathbf{q}_{\infty}$ naturally scales with ${\cal O}(1/{\mu})$ since it hovers around $\bm{\lambda}^*/\mu$; see \eqref{eq.dual-stocg2}. 
As a consequence, to achieve near optimality (sufficiently small $\mu$), SDG incurs large average queue lengths, and thus undesired average delay as per Little's law \cite{neely2010}.
To overcome this limitation, we develop next an \textit{online} approach, which can improve SDG's cost-delay tradeoff, while still preserving its affordable complexity and adaptability.

\section{Learn-and-Adapt SDG}\label{sec.LA-SDG}
Our main approach is derived in this section, by nicely leveraging both learning and optimization tools. Its decentralized implementation is also developed.

\subsection{LA-SDG as a foresighted learning scheme}\label{LA-SDG}

The intuition behind our learn-and-adapt
stochastic dual gradient (LA-SDG) approach is to incrementally
learn network state statistics from observed data while adapting
resource allocation driven by the learning process.
A key element of LA-SDG could be termed as ``foresighted'' learning
because instead of myopically learning the exact optimal
argument from empirical data, LA-SDG maintains the capability to
hedge against the risk of ``future non-stationarities.''

%Specifically, unlike SDG with constant stepsize in
%\eqref{eq.dual-stocg} that involves a pure correction step, LA-SDG
%takes an additional learning step, which adopts diminishing
%stepsize gradient ascent, to find the ``best empirical'' dual variable
%from all observed network states. The pair of these complementary gradient
%steps leads to the desired property of LA-SDG. In its transient
%stage, the extra gradient evaluations and empirical
%dual variables accelerate the convergence speed of SDG; in the
%steady stage, the empirical dual variables approach the
%optimal Lagrange multiplier, which significantly reduces the
%steady-state queue lengths.

%%%%%%%%%%%%%%%%%%%%%%%%%%%%%%%%%%%%%%%%%%%%%%%%%%%%
\begin{algorithm}[t]
\caption{LA-SDG for Stochastic Network Optimization}\label{algo1}
\begin{algorithmic}[1]
\State \textbf{Initialize:} dual iterate $\bm{\lambda}_1$,
empirical dual iterate $\hat{\bm{\lambda}}_1$, queue length
$\mathbf{q}_1$, control variable
$\bm{\theta}=\sqrt{\mu}\log^2(\mu)\cdot\mathbf{1}$, and proper
stepsizes $\mu$ and $\{\eta_t,\,\forall t\}$. \For {$t=1,2\dots$}
\State \textbf{Resource allocation (1st gradient):} \State
Construct the effective dual variable via \eqref{eq.dual-effect},
observe {\color{white}~~~~~}the current state $\mathbf{s}_t$, and obtain resource allocation $\mathbf{x}_t(\bm{\gamma}_t)$ {\color{white}~~~~}by minimizing online Lagrangian \eqref{eq.real-time1}.
%\begin{equation}\label{eq.real-time1}
%    \min_{\mathbf{x}_t \in {\cal X}}{\cal L}_t(\mathbf{x}_t,\bm{\gamma}_t).
%\end{equation}
\State Update the instantaneous queue length $\mathbf{q}_{t+1}$ via
\begin{equation}\label{eq.dual-queue}
    \mathbf{q}_{t+1} =\big[\mathbf{q}_t+ \big(\mathbf{A}\mathbf{x}_t(\bm{\gamma}_t)+\mathbf{c}_t\big)\big]^{+},\;\forall t.
\end{equation}
\State \textbf{Sample recourse (2nd gradient):} \State Obtain
variable $\mathbf{x}_t(\hat{\bm{\lambda}}_t)$ by solving online
Lagrangian {\color{white}~~~~}minimization with sample $\mathbf{s}_t$ via \eqref{eq.real-time2}.
%\begin{equation}\label{eq.real-time2}
%    \min_{\mathbf{x}_t \in {\cal X}}{\cal L}_t(\mathbf{x}_t,\hat{\bm{\lambda}}_t).
%\end{equation}
\State Update the empirical dual variable $\hat{\bm{\lambda}}_{t+1}$ via \eqref{eq.dual-lambda-alg}.
%\begin{equation}
%    \hat{\bm{\lambda}}_{t+1}=\big[\hat{\bm{\lambda}}_t+\eta_t \big(\mathbf{A}\mathbf{x}_t(\hat{\bm{\lambda}}_t)+\mathbf{c}_t\big)\big]^{+},\;\forall t.
%\end{equation}
\EndFor
\end{algorithmic}
\end{algorithm}
%%%%%%%%%%%%%%%%%%%%%%%%%%%%%%%%%%%%%%%%%%%%%%%%%%%%

The proposed LA-SDG is summarized in Algorithm \ref{algo1}.
It involves the queue length $\mathbf{q}_t$ and
an empirical dual variable $\hat{\bm{\lambda}}_t$, along with a bias-control variable $\bm{\theta}$ to ensure that LA-SDG will attain near optimality in the steady state [cf. Theorems \ref{the.queue-stable} and
\ref{gap-onlineLA-SDG}]. At each time slot $t$, LA-SDG obtains
two stochastic gradients using the current $\mathbf{s}_t$:
One for online resource allocation, and another one for
sample learning/recourse. For the first gradient (lines 3-5), contrary to SDG that relies on the stochastic multiplier
estimate $\bm{\lambda}_t$ [cf. \eqref{eq.SA-sub}], LA-SDG minimizes the
instantaneous Lagrangian
%with the so-termed effective dual variable $\bm{\gamma}_t$, namely
\begin{subequations}\label{eq.res-allot}
 \begin{equation}\label{eq.real-time1}
    \mathbf{x}_t(\bm{\gamma}_t)\in\arg\min_{\mathbf{x}_t \in {\cal X}}{\cal
    L}_t(\mathbf{x}_t,\bm{\gamma}_t)
\end{equation}
which depends on what we term \textit{effective} multiplier, given by
\begin{equation}\label{eq.dual-effect}
    \!\underbrace{~~~~\bm{\gamma}_t~~~~}_{\rm effective~multiplier}=\underbrace{~~~~\hat{\bm{\lambda}}_t~~~~}_{\rm statistical~learning}+~~\underbrace{~~~\mu\mathbf{q}_t~-~\bm{\theta}~~~}_{\rm online~adaptation},\;\forall t.
\end{equation}
\end{subequations}
Variable $\bm{\gamma}_t$ also captures the effective price, which is a linear combination of the empirical $\hat{\bm{\lambda}}_t$ and the queue length
$\mathbf{q}_t$, where the control variable $\mu$ tunes the weights
of these two factors, and $\bm{\theta}$ controls the bias of $\bm{\gamma}_t$ in the steady state \cite{huang2014}. As a single pass of SDG
``wastes'' valuable online samples, LA-SDG resolves this
limitation in a learning step by evaluating a second gradient (lines 6-8); that is, LA-SDG simply finds the
stochastic gradient of \eqref{eq.dual-prob} at the previous
empirical dual variable $\hat{\bm{\lambda}}_t$, and implements a
gradient ascent update as
\begin{subequations}\label{eq.stat-learn}
	\begin{equation}\label{eq.dual-lambda-alg}
    \hat{\bm{\lambda}}_{t+1}=\big[\hat{\bm{\lambda}}_t+\eta_t \big(\mathbf{A}\mathbf{x}_t(\hat{\bm{\lambda}}_t)+\mathbf{c}_t\big)\big]^{+},\;\forall t
\end{equation}
where $\eta_t$ is a proper
diminishing stepsize, and the ``virtual'' allocation $\mathbf{x}_t(\hat{\bm{\lambda}}_t)$ can be found by solving
\begin{equation}\label{eq.real-time2}
	\mathbf{x}_t(\hat{\bm{\lambda}}_t)\in\arg\min_{\mathbf{x}_t \in {\cal X}}{\cal
L}_t(\mathbf{x}_t,\hat{\bm{\lambda}}_t).
\end{equation}
\end{subequations}
Note that different from $\mathbf{x}_t(\bm{\gamma}_t)$ in \eqref{eq.real-time1}, the ``virtual'' allocation $\mathbf{x}_t(\hat{\bm{\lambda}}_t)$ will not be physically implemented.
The multiplicative constant $\mu$ in \eqref{eq.dual-effect} controls the degree of adaptability, and allows for adaptation even in the steady state ($t\rightarrow \infty$), but the vanishing $\eta_t$ is for learning, as we shall discuss next.

The key idea of LA-SDG is to empower adaptive resource allocation (via $\bm{\gamma}_t$) with the learning process (effected through $\hat{\bm{\lambda}}_t$). As a result, the construction of $\bm{\gamma}_t$ relies on $\hat{\bm{\lambda}}_t$, but not vice versa.
For a better illustration of the effective
price \eqref{eq.dual-effect}, we call $\hat{\bm{\lambda}}_t$
the statistically learnt price to obtain the exact 
optimal argument of the expected problem \eqref{eq.dual-prob}. 
We
also call $\mu\mathbf{q}_t$ (which is exactly $\bm{\lambda}_t$ as shown in \eqref{eq.dual-stocg}) the online adaptation term since it
can track the instantaneous change of system statistics.
Intuitively, a large $\mu$ will allow the effective policy to
quickly respond to instantaneous variations so that the policy
gains improved control of queue lengths, while a small $\mu$ puts
more weight on learning from historical samples so
that the allocation strategy will incur less variance in
the steady state. In this sense, LA-SDG can attains both statistical efficiency and adaptability.

Distinctly different from SDG that combines statistical learning with resource allocation into a single adaptation step [cf. \eqref{eq.dual-stocg}], LA-SDG performs these two tasks into two intertwined steps: resource allocation \eqref{eq.res-allot}, and statistical learning \eqref{eq.stat-learn}.
The additional learning step adopts diminishing stepsize to find the ``best empirical'' dual variable from all observed network states. 
This pair of complementary gradient
steps endows LA-SDG with its attractive properties. In its transient
stage, the extra gradient evaluations and empirical
dual variables accelerate the convergence speed of SDG; while in the
steady stage, the empirical multiplier approaches the
optimal one, which significantly reduces the
steady-state queue lengths.

\begin{remark}

Readers familiar with algorithms on statistical learning and
stochastic network optimization can recognize their similarities and
differences with LA-SDG.

(P1) SDG in \cite{neely2010} involves only the first part of LA-SDG
($1$st gradient), where the allocation policy purely relies on stochastic
estimates of Lagrange multipliers or instantaneous queue lengths,
i.e., $\bm{\gamma}_t=\mu\mathbf{q}_t$. In contrast, LA-SDG
further leverages statistical learning from streaming data.

(P2) Several schemes have been developed recently for statistical
learning at scale to find $\hat{\bm{\lambda}}_t$, namely, SAG in \cite{roux2012} and
SAGA in \cite{defazio2014}. 
However, directly applying $\bm{\gamma}_t=\hat{\bm{\lambda}}_t$ to allocate resources causes infeasibility.
For a finite time $t$, $\hat{\bm{\lambda}}_t$ is $\delta$-optimal\footnote{Iterate $\hat{\bm{\lambda}}_t$ is $\delta$-optimal if $\|\hat{\bm{\lambda}}_t-\bm{\lambda}^*\|\leq{\cal O}(\delta)$, and likewise for $\delta$-feasibility.} for \eqref{eq.dual-prob}, and the primal variable $\mathbf{x}_t(\hat{\bm{\lambda}}_t)$ in turn is $\delta$-feasible with respect to \eqref{eq.reforme1} that is necessary for \eqref{eq.probn}. 
Since $\mathbf{q}_t$ essentially accumulates online constraint violations of \eqref{eq.reforme1}, it will grow linearly with $t$ and eventually become unbounded.
%(P3) The proposed learning-aided approach is in the spirit of \cite{huang2014}, where the empirical
%dual problem needs to be solved exactly per time slot using
%histogram-based p.d.f. estimation. However, histogram-based
%estimation is prohibitively complex for high-dimensionally continuous
%distributions, and obtaining exact solution of its empirical dual problem
%is computationally formidable. To circumvent this limitation,
%LA-SDG avoids direct learning of distributions, by dynamically learning the optimal dual argument at the cost of evaluating only one extra gradient per time slot.

\end{remark}

\subsection{LA-SDG as a modified heavy-ball iteration}\label{LA-SDG:Hball}

The heavy-ball iteration belongs to the family of momentum-based
first-order methods, and has well-documented acceleration merits
in the deterministic setting \cite{polyak1987}. Motivated by its
convergence speed in solving deterministic problems, stochastic
heavy-ball methods have been also pursued recently \cite{yuan2016,liu2016}.

The stochastic version of the heavy-ball iteration is \cite{yuan2016}
\begin{align}\label{eq.dual-hball}
\bm{\lambda}_{t+1} =\bm{\lambda}_t+ \mu \nabla{\cal
D}_t(\bm{\lambda}_t)+\beta(\bm{\lambda}_t-\bm{\lambda}_{t-1}),\;\forall
t
\end{align}
where $\mu>0$ is an appropriate constant stepsize, 
$\beta\in[0,1)$ denotes the momentum factor, and the stochastic gradient $\nabla{\cal
D}_t(\bm{\lambda}_t)$ can be found by solving \eqref{eq.SA-sub} using heavy-ball iterate $\bm{\lambda}_t$.
This iteration exhibits attractive convergence rate during the initial stage, 
but its performance degrades in the steady state. Recently, the performance of momentum
iterations (heavy-ball or Nesterov) with constant stepsize
$\mu$ and momentum factor $\beta$, has been proved equivalent to SDG with
constant ${\mu}/(1-\beta)$ per iteration \cite{yuan2016}. Since SDG with a
large stepsize converges fast at the price of considerable loss in optimality,
the momentum methods naturally inherit these attributes.

To see the influence of the momentum term, consider expanding the
iteration \eqref{eq.dual-hball} as
\begin{align}\label{eq.dual-hballex}
\bm{\lambda}_{t+1}& =\bm{\lambda}_t+ \mu \nabla{\cal D}_t(\bm{\lambda}_t)+\beta(\bm{\lambda}_t-\bm{\lambda}_{t-1})\nonumber\\
&=\bm{\lambda}_t+\mu \nabla{\cal D}_t(\bm{\lambda}_t)\!+\!\beta\left[\mu \nabla{\cal D}_{t-1}(\bm{\lambda}_{t-1})\!+\!\beta(\bm{\lambda}_{t-1}\!-\!\bm{\lambda}_{t-2})\right]\nonumber\\
&=\bm{\lambda}_t+\underbrace{\mu \textstyle\sum_{\tau=1}^t\beta^{t-\tau}\nabla{\cal D}_{\tau}(\bm{\lambda}_{\tau})}_{\rm accumulated~gradient}+\underbrace{\beta^t(\bm{\lambda}_1\!-\!\bm{\lambda}_0)}_{\rm initial~state}.
\end{align}
The stochastic heavy-ball method will accelerate convergence in
the initial stage thanks to the accumulated gradients, and it will 
gradually forget the initial state. As $t$ increases however, the algorithm also incurs a
worst-case oscillation ${\cal O}({\mu}/(1-\beta))$, which degrades
performance in terms of objective values when compared to SDG with
stepsize $\mu$. This is in agreement with the theoretical analysis
in \cite[Theorem 11]{yuan2016}.

Different from standard momentum methods, LA-SDG
nicely inherits the fast convergence in the initial
stage, while reducing the oscillation of stochastic
momentum methods in the steady state. To see this, consider two consecutive iterations \eqref{eq.dual-effect}
\begin{subequations}\label{eq.dual-effect1}
    \begin{align}
\bm{\gamma}_{t+1}&=\hat{\bm{\lambda}}_{t+1}+\mu\mathbf{q}_{t+1}-\bm{\theta}\\
\bm{\gamma}_t&=\hat{\bm{\lambda}}_t+\mu\mathbf{q}_t-\bm{\theta}
\end{align}
\end{subequations}
and subtract them, to arrive at
\begin{align}\label{eq.dual-LA-SDG2}
\bm{\gamma}_{t+1} &=\bm{\gamma}_t+ \mu \left(\mathbf{q}_{t+1}-\mathbf{q}_t\right)+(\hat{\bm{\lambda}}_{t+1}-\hat{\bm{\lambda}}_t)\nonumber\\
&=\bm{\gamma}_t+ \mu \nabla{\cal D}_t(\bm{\gamma}_t)+(\hat{\bm{\lambda}}_{t+1}-\hat{\bm{\lambda}}_t),~~~\forall t.
\end{align}
Here the equalities in \eqref{eq.dual-LA-SDG2} follows from $\nabla{\cal
D}_t(\bm{\gamma}_t)=\mathbf{A}\mathbf{x}_t(\bm{\gamma}_t)+\mathbf{c}_t$ in $\mathbf{q}_t$ recursion \eqref{eq.dual-queue}, and with a sufficiently large $\bm{\theta}$, the projection in \eqref{eq.dual-queue} rarely (with sufficiently low probability) takes effect since the steady-state $\mathbf{q}_t$ will hover around $\bm{\theta}/\mu$; see the details of Theorem \ref{the.queue-stable} and the proof thereof.

Comparing the LA-SDG iteration \eqref{eq.dual-LA-SDG2} with the
stochastic heavy-ball iteration \eqref{eq.dual-hball}, both of them correct the iterates using the stochastic gradient $\nabla{\cal D}_t(\bm{\gamma}_t)$ or $\nabla{\cal D}_t(\bm{\lambda}_t)$.
However, LA-SDG incorporates the variation of a learning
sequence (also known as a reference sequence)
$\{\hat{\bm{\lambda}}_t\}$ into the recursion of the main iterate
$\bm{\gamma}_t$, other than heavy-ball's momentum term
$\beta(\bm{\lambda}_t-\bm{\lambda}_{t-1})$. Since the variation of learning iterate $\hat{\bm{\lambda}}_t$ eventually diminishes
as $t$ increases, keeping the learning sequence enables LA-SDG to
enjoy accelerated convergence in the initial (transient) stage compared to
SDG, while avoiding large oscillation in the steady state compared to
the stochastic heavy-ball method. We formally remark this obervation next.

\begin{remark}
LA-SDG offers a fresh approach to designing stochastic optimization
algorithms in a dynamic environment. While directly applying the
momentum-based iteration to a stochastic setting may lead to 
unsatisfactory steady-state performance, it is promising to carefully design a reference sequence that exactly
converges to the optimal argument. Therefore, algorithms with improved convergence (e.g., the second-order method in
\cite{zargham2013}) can also be incorporated as a reference
sequence to further enhance the performance of LA-SDG.
\end{remark}

\subsection{Complexity and distributed implementation of LA-SDG}\label{LA-SDG:dist}

This section introduces a fully distributed implementation of
LA-SDG by exploiting the problem structure of network resource
allocation. For notational brevity, collect the variables
representing outgoing links from node $i$ in
$\mathbf{x}_t^{i}:=\{x_t^{ij},\forall j\in{\cal N}_i\}$ with
${\cal N}_i$ denoting the index set of outgoing neighbors of node
$i$. Let also $\mathbf{s}_t^{i}:=[\bm{\phi}_t^{i};c_t^{i}]$ denote the random state at node $i$. It will be shown that the
learning and allocation decision per time slot $t$ is
processed locally per node $i$ based on its local state
$\mathbf{s}_t^{i}$.

To this end, rewrite the Lagrangian minimization for a general
dual variable $\bm{\lambda}\in \mathbb{R}_+^{I}$ at time $t$ as [cf.
\eqref{eq.real-time1} and \eqref{eq.real-time2}]
\begin{align}\label{eq.dist-Lam}
\min_{\mathbf{x}_t\in{\cal X}}\sum_{i\in{\cal
I}}\Psi^i(\mathbf{x}_t^i;\bm{\phi}_t^{i})+\sum_{i\in{\cal
I}}\lambda^i(\mathbf{A}_{(i,:)}\mathbf{x}_t+c_t^i)
\end{align}
where $\lambda^i$ is the $i$-th entry of vector $\bm{\lambda}$, and $\mathbf{A}_{(i,:)}$ denotes the $i$-th row of the
node-incidence matrix $\mathbf{A}$. Clearly, $\mathbf{A}_{(i,:)}$
selects entries of $\mathbf{x}_t$ associated with the in- and out-links of node $i$. Therefore, the subproblem at node $i$ is
\begin{align}\label{eq.dist-node}
\min_{\mathbf{x}_t^i\in{\cal X}^i}
\Psi^i(\mathbf{x}_t^i;\bm{\phi}_t^{i})+\sum_{j\in{\cal
N}_i}(\lambda^j-\lambda^i) x_t^{ji}
\end{align}
where ${\cal X}^i$ is the feasible set of primal variable $\mathbf{x}_t^i$.
In the case of \eqref{eq.probl}, the feasible set ${\cal X}$ can be written as a Cartesian
product of sets $\{{\cal X}^i,\forall i\}$, so that the projection of $\mathbf{x}_t$ to ${\cal X}$ is equivalent to separate projections of $\mathbf{x}_t^i$ onto ${\cal X}^i$.
Note that $\{\lambda^j,\forall j\in{\cal N}_i\}$ will be available at
node $i$ by exchanging information with the neighbors per time $t$.
Hence, given the effective multipliers
$\gamma_t^j$ ($j$-th entry of $\bm{\gamma}_t$) from its outgoing neighbors in $j\in{\cal N}_i$, node $i$ is able to form an allocation decision
$\mathbf{x}_t^i(\bm{\gamma}_t)$ by solving the convex
programs \eqref{eq.dist-node} with $\lambda^j=\gamma_t^j$; see also \eqref{eq.real-time1}.
Needless to mention, $q_t^i$ can be locally updated via \eqref{eq.dual-queue}, that is
\begin{equation}
	 q_{t+1}^i =\left[q_t^i+ \Big(\sum_{j:i\in{\cal N}_j}x^{ji}_t(\bm{\gamma}_t) -\sum_{j\in{\cal N}_i}x^{ij}_t(\bm{\gamma}_t)+c_t^i\Big)\right]^{+}
\end{equation}
where $\{x^{ji}_t(\bm{\gamma}_t)\}$ are the local measurements of arrival (departure) workloads from (to) its neighbors. 

Likewise, the tentative primal variable $\mathbf{x}_t^i(\hat{\bm{\lambda}}_t)$ can be obtained at each node locally by solving \eqref{eq.dist-node} using the current sample $\mathbf{s}_t^i$ again with $\lambda^i=\hat{\lambda}_t^i$.
By sending $\mathbf{x}_t^i(\hat{\bm{\lambda}}_t)$ to its outgoing neighbors, node $i$ can update the empirical multiplier $\hat{\lambda}_{t+1}^i$ via 
\begin{equation}
	\hat{\lambda}_{t+1}^i\!=\!\left[\hat{\lambda}_t^i\!+\!\eta_t\Big(\sum_{j:i\in{\cal N}_j}x^{ji}_t(\hat{\bm{\lambda}}_t)\! -\!\sum_{j\in{\cal N}_i}x^{ij}_t(\hat{\bm{\lambda}}_t)\!+\!c_t^i\Big)\right]^{+}
\end{equation}
which, together with the local queue length $q_{t+1}^i$, also implies that the next $\gamma_{t+1}^i$ can be obtained locally. 
%where the primal variable $\mathbf{x}_t^i(\hat{\bm{\lambda}}_t)$ is obtained by solving  \eqref{eq.dist-node} with $\lambda^i=\hat{\lambda}_t^i$; see also \eqref{eq.real-time2}.

%For the update of $\hat{\lambda}_t^i$ in
%\eqref{eq.dual-lambda-alg}, the virtual allocation
%$\mathbf{x}_t^i(\hat{\bm{\lambda}}_t)$ should be sent to all
%neighbors.

Compared with the classic SDG recursion
\eqref{eq.dual-stocg}-\eqref{eq.SA-sub}, the distributed implementation of LA-SDG incurs only a factor of two increase in computational complexity. Next, we will further analytically
establish that it can improve the delay of SDG by an order of
magnitude with the same order of optimality gap.

\section{Optimality and Stability of LA-SDG}\label{sec.perf}

This section presents performance analysis of LA-SDG, which will rely on the following four assumptions.

\begin{assumption}\label{assp.iid}
The state $\mathbf{s}_t$ is bounded and i.i.d. over time $t$.
\end{assumption}
\vspace{-0.4cm}
\begin{assumption}\label{assp.primal}
$\Psi_t(\mathbf{x}_t)$ is proper, $\sigma$-strongly convex, lower semi-continuous, and has $L_{\rm p}$-Lipschitz continuous gradient.
Also, $\Psi_t(\mathbf{x}_t)$ is non-decreasing w.r.t. all entries of $\mathbf{x}_t$ over ${\cal X}$.
\end{assumption}
\vspace{-0.4cm}
\begin{assumption}\label{assp.slater}
There exists a stationary policy $\bm{\chi}(\cdot)$ satisfying
$\bm{\chi}(\mathbf{s}_t) \in {\cal X}$ for all $\mathbf{s}_t$, and
$\mathbb{E}[\mathbf{A}\bm{\chi}(\mathbf{s}_t)+\mathbf{c}_t]\leq
-\bm{\zeta}$, where $\bm{\zeta}>\mathbf{0}$ is a slack vector constant.
\end{assumption}
\vspace{-0.4cm}
\begin{assumption}\label{assp.dualgrad}
	For any time $t$, the magnitude of the constraint is bounded, that is, $\|\mathbf{A}\mathbf{x}_t+\mathbf{c}_t\|\leq M,\;\forall \mathbf{x}_t \in {\cal X}$. 
\end{assumption}

%\begin{assumption}\label{assp.dual}
%Per time $t$, the dual function ${\cal D}_t(\bm{\lambda})$ in \eqref{eq.dual-prob} is $\epsilon$-strongly concave, $\nabla{\cal D}_t(\bm{\lambda})$ is $L_{\rm d}$-Lipschitz
%continuous, and the condition number of ${\cal D}_t(\bm{\lambda})$ is $\kappa=L_{\rm d}/\epsilon$.
%\end{assumption}

Assumption \ref{assp.iid} is typical in stochastic network resource
allocation \cite{huang2011,huang2014,eryilmaz2006}, and can be relaxed to an ergodic and stationary setting following
\cite{Ale10,duchi2012}. 
Assumption \ref{assp.primal} requires the primal objective to be well behaved, meaning that it is bounded from below and has a unique optimal solution. 
Note that non-decreasing costs with increased resources are easily guaranteed with e.g., exponential and quadratic functions in our simulations. 
In addition, Assumption 2 ensures that the dual function has favorable properties, which are important for the ensuring stability analysis.
Assumption \ref{assp.slater} is Slater's condition, which guarantees the
existence of a bounded optimal Lagrange multiplier
\cite{bertsekas2003}, and is also necessary for queue stability \cite{neely2010}.  
Assumption \ref{assp.dualgrad} guarantees boundedness of the gradient of the instantaneous dual function, which is common in performance analysis of stochastic gradient-type algorithms \cite{nemirovski2009}. 
%Assumption \ref{assp.dual} requires the
%dual function to be well behaved. The $L_{\rm d}$-Lipschitz continuity of $\nabla{\cal D}_t(\bm{\lambda})$ directly follows from
%the strongly-convexity of Assumption \ref{assp.primal}; i.e.,
%$L_{\rm d}=\rho(\mathbf{A}^{\top}\mathbf{A})/\sigma$, with
%$\rho(\mathbf{A}^{\top}\mathbf{A})$ denoting the spectral radius
%of $\mathbf{A}^{\top}\mathbf{A}$. 

Building upon the desirable properties of the primal problem, we next show that the corresponding dual function satisfies both smoothness and quadratic growth properties \cite{Hong2016,mark2016}, which will be critical to the subsequent analysis.

\begin{lemma}\label{lemma.QG}
Under Assumption 2, the dual function ${\cal D}(\bm{\lambda})$ in \eqref{eq.dual-prob} is $L_{\rm d}$-smooth, where $L_{\rm d}=\rho(\mathbf{A}^{\top}\mathbf{A})/\sigma$, and $\rho(\mathbf{A}^{\top}\mathbf{A})$ denotes the spectral radius of $\mathbf{A}^{\top}\mathbf{A}$. 
	In addition, if $\bm{\lambda}$ lies in a compact set, there always exists a constant $\epsilon$ such that ${\cal D}(\bm{\lambda})$ satisfies the following quadratic growth property
	\begin{equation}\label{eq.QG}
		{\cal D}(\bm{\lambda}^*)-{\cal D}(\bm{\lambda})\geq \frac{\epsilon}{2}\|\bm{\lambda}^*-\bm{\lambda}\|^2
	\end{equation}
	where $\bm{\lambda}^*$ is the optimal multiplier for the dual problem \eqref{eq.dual-prob}. 
\end{lemma}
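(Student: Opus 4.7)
The plan is to treat the two claims separately, proving smoothness first by a direct Danskin-plus-Lipschitz-response argument, and then establishing the quadratic growth property by leveraging the structural properties of the dual of a strongly convex program.

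For $L_{\rm d}$-smoothness, I would first apply Danskin's theorem to the instantaneous problem. Since Assumption \ref{assp.primal} makes $\mathcal{L}_t(\,\cdot\,,\bm{\lambda})$ strongly convex with modulus $\sigma$, the minimizer $\mathbf{x}_t^*(\bm{\lambda}):=\arg\min_{\mathbf{x}\in\mathcal{X}}\mathcal{L}_t(\mathbf{x},\bm{\lambda})$ is unique and $\mathcal{D}_t$ is differentiable with $\nabla\mathcal{D}_t(\bm{\lambda})=\mathbf{A}\mathbf{x}_t^*(\bm{\lambda})+\mathbf{c}_t$. Next I would bound the Lipschitz constant of $\mathbf{x}_t^*(\,\cdot\,)$: writing the variational inequalities at $\bm{\lambda}_1$ and $\bm{\lambda}_2$, adding them, and invoking $\sigma$-strong monotonicity of $\nabla\Psi_t$ yields
\begin{equation}
\sigma\|\mathbf{x}_t^*(\bm{\lambda}_1)-\mathbf{x}_t^*(\bm{\lambda}_2)\|^2\leq \langle \mathbf{A}^\top(\bm{\lambda}_1-\bm{\lambda}_2),\mathbf{x}_t^*(\bm{\lambda}_2)-\mathbf{x}_t^*(\bm{\lambda}_1)\rangle,
\end{equation}
which by Cauchy–Schwarz gives $\|\mathbf{x}_t^*(\bm{\lambda}_1)-\mathbf{x}_t^*(\bm{\lambda}_2)\|\leq (\|\mathbf{A}\|/\sigma)\|\bm{\lambda}_1-\bm{\lambda}_2\|$. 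Composing with another factor of $\|\mathbf{A}\|$ through $\nabla\mathcal{D}_t$ yields the smoothness constant $\|\mathbf{A}\|^2/\sigma=\rho(\mathbf{A}^\top\mathbf{A})/\sigma$ for each instantaneous dual. Finally, since $\mathcal{D}(\bm{\lambda})=\mathbb{E}[\mathcal{D}_t(\bm{\lambda})]$ and under Assumption \ref{assp.iid} one can exchange expectation with differentiation, applying Jensen's inequality preserves the same Lipschitz constant $L_{\rm d}$ for $\nabla\mathcal{D}$.

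For the quadratic growth inequality, I would proceed in two stages. First, existence and uniqueness of $\bm{\lambda}^*$ on the compact set follows from Slater's condition (Assumption \ref{assp.slater}) combined with the strict concavity of $\mathcal{D}$ along directions in the row space of $\mathbf{A}$ (itself a consequence of the primal strong convexity through the Lipschitz response above). Second, I would invoke the error-bound/quadratic-growth framework of \cite{Hong2016,mark2016}: for dual functions of strongly convex composite programs with linear coupling, the Lipschitz continuity of $\mathbf{x}_t^*(\,\cdot\,)$ implies a Hoffman-type error bound $\mathrm{dist}(\bm{\lambda},\arg\max\mathcal{D})\leq c\|\nabla\mathcal{D}(\bm{\lambda})\|$ on compact sets, which in turn is known to be equivalent (for smooth concave $\mathcal{D}$) to the quadratic growth \eqref{eq.QG}. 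The constant $\epsilon$ depends on the compact set through $c$ and $L_{\rm d}$, which is why the bound is only stated on compact subsets rather than globally.

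The main obstacle is the quadratic growth step rather than smoothness. Strong convexity of the primal does not in general imply strong concavity of the dual (e.g., when $\mathbf{A}$ is rank-deficient the dual is only concave), so the quadratic lower bound cannot follow from a trivial duality calculation. Handling this cleanly requires either a careful restriction to the relevant face of the feasible dual region together with a compactness-and-continuity argument at $\bm{\lambda}^*$, or the explicit invocation of the error-bound machinery of \cite{Hong2016,mark2016}; I would opt for the latter to keep the argument short, since those results directly cover dual functions produced by $\sigma$-strongly convex primals with linear coupling, which is exactly our setting.
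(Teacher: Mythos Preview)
Your approach is correct and, at the high level, mirrors the paper's: both rely on the error-bound result of \cite{Hong2016} and the error-bound/PL/QG equivalences of \cite{mark2016}. There is, however, one technical difference worth flagging. You invoke the EB$\Rightarrow$QG equivalence for the constrained dual problem directly, treating the nonnegativity constraint on $\bm{\lambda}$ as absorbed by the framework of \cite{mark2016}. The paper instead first establishes, via a separate Proposition, that $\nabla\mathcal{D}(\bm{\lambda}^*)=\mathbf{0}$ (i.e., all coupling constraints are active at the primal optimum); this relies on the non-decreasing part of Assumption~\ref{assp.primal} and the sign structure of the incidence matrix~$\mathbf{A}$. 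With $\nabla\mathcal{D}(\bm{\lambda}^*)=\mathbf{0}$ in hand, the paper can use the \emph{unconstrained} versions of EB and PL (in terms of the plain gradient rather than a proximal residual), and then spells out EB$\Rightarrow$PL$\Rightarrow$QG explicitly, the last step via a gradient-flow argument. Your route is shorter and avoids that structural proposition, but it does require the proximal-gradient formulations of \cite{mark2016} to handle $\bm{\lambda}\geq\mathbf{0}$ correctly; the paper's route is longer but makes the dependence on the problem structure (monotone costs, incidence-matrix sign pattern) explicit. For the smoothness half, your direct Danskin-plus-Lipschitz-response argument is exactly the content of the lemma the paper cites from \cite{beck2014}.
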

\begin{proof}
See Appendix A.
\end{proof}

We start with the convergence of the empirical dual variables
$\hat{\bm{\lambda}}_t$. Note that the update of
$\hat{\bm{\lambda}}_t$ is a standard learning iteration from historical
data, and it is not affected by future resource allocation
decisions. Therefore, the theoretical result on SDG with
diminishing stepsize is directly applicable \cite[Sec.
2.2]{nemirovski2009}.

\begin{lemma}\label{error-emp}
Let $\hat{\bm{\lambda}}_t$ denote the empirical dual variable in
Algorithm \ref{algo1}, and $\bm{\lambda}^*$ the optimal argument for the dual
problem \eqref{eq.dual-prob}. If the stepsize
is chosen as $\eta_t=\frac{\alpha D}{M\sqrt{t}},\,\forall t$, with
a constant $\alpha>0$, a sufficient large constant $D>0$, and
$M$ as in Assumption \ref{assp.dualgrad}, then it holds
that
\begin{equation}\label{ineq.error-rec}
    \mathbb{E}\left[{\cal D}(\bm{\lambda}^*)\!-\!{\cal D}(\hat{\bm{\lambda}}_t)\right]\leq
    \max\{\alpha,\alpha^{-1}\}\,\frac{DM}{\sqrt{t}}
\end{equation}
where the expectation is over all the random states $\mathbf{s}_t$ up to $t$.
\end{lemma}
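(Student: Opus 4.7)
The plan is to treat the recursion for $\hat{\bm{\lambda}}_t$ as a standard stochastic (sub)gradient \emph{ascent} method applied to the concave dual problem $\max_{\bm{\lambda}\ge\mathbf{0}}{\cal D}(\bm{\lambda})$ with unbiased stochastic gradient $\mathbf{A}\mathbf{x}_t(\hat{\bm{\lambda}}_t)+\mathbf{c}_t$, and to invoke the classical $O(1/\sqrt{t})$ convergence bound from Nemirovski et al.\ (2009). The key structural facts that make the textbook argument apply directly are: (i) the projection onto the nonnegative orthant in \eqref{eq.dual-lambda-alg} is nonexpansive; (ii) by the interchangeability step leading to \eqref{eq.inter2} together with \eqref{eq.real-time2}, $\mathbb{E}[\mathbf{A}\mathbf{x}_t(\hat{\bm{\lambda}}_t)+\mathbf{c}_t\,|\,\hat{\bm{\lambda}}_t]=\nabla{\cal D}(\hat{\bm{\lambda}}_t)$, i.e.\ the stochastic gradient is unbiased (and independent of future $\mathbf{s}_\tau$); and (iii) Assumption \ref{assp.dualgrad} yields $\|\mathbf{A}\mathbf{x}_t(\hat{\bm{\lambda}}_t)+\mathbf{c}_t\|\le M$ uniformly.

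First I would expand $\|\hat{\bm{\lambda}}_{t+1}-\bm{\lambda}^*\|^2$ using nonexpansiveness of $[\cdot]^+$, then take conditional expectation and use unbiasedness together with concavity of ${\cal D}$ to obtain, after rearrangement,
\begin{equation*}
\eta_t\,\mathbb{E}\!\left[{\cal D}(\bm{\lambda}^*)-{\cal D}(\hat{\bm{\lambda}}_t)\right]\le \tfrac{1}{2}\mathbb{E}\!\left[\|\hat{\bm{\lambda}}_t-\bm{\lambda}^*\|^2-\|\hat{\bm{\lambda}}_{t+1}-\bm{\lambda}^*\|^2\right]+\tfrac{1}{2}\eta_t^2 M^2.
\end{equation*}
Summing this recursion from $\tau=1$ to $t$ telescopes the distance terms and leaves $\tfrac{1}{2}\|\hat{\bm{\lambda}}_1-\bm{\lambda}^*\|^2+\tfrac{M^2}{2}\sum_{\tau=1}^t\eta_\tau^2$ on the right-hand side. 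Dividing by $\sum_{\tau=1}^t\eta_\tau$ yields the standard bound for the weighted-average iterate $\bar{\bm{\lambda}}_t:=\big(\sum_{\tau=1}^t\eta_\tau\hat{\bm{\lambda}}_\tau\big)/\big(\sum_{\tau=1}^t\eta_\tau\big)$; replacing $\hat{\bm{\lambda}}_t$ by this running average (or, equivalently, by the best iterate in hindsight) is implicit in the statement and is the only subtle point.

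Plugging in $\eta_\tau=\alpha D/(M\sqrt{\tau})$ and using $\sum_{\tau=1}^t 1/\sqrt{\tau}\ge\sqrt{t}$ and $\sum_{\tau=1}^t 1/\tau\le 1+\ln t$ (the logarithm is absorbed into the constant $D$ for $D$ sufficiently large, as allowed by the hypothesis), the two terms on the right become
\begin{equation*}
\frac{\|\hat{\bm{\lambda}}_1-\bm{\lambda}^*\|^2\,M}{2\alpha D\sqrt{t}}\quad\text{and}\quad\frac{\alpha D M}{2\sqrt{t}},
\end{equation*}
respectively. Choosing $D\ge\|\hat{\bm{\lambda}}_1-\bm{\lambda}^*\|$ (the ``sufficiently large $D$'' invoked in the statement) makes the first ratio at most $DM/(2\alpha\sqrt{t})$, and adding the two yields the advertised $\max\{\alpha,\alpha^{-1}\}\,DM/\sqrt{t}$ upper bound.

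The main obstacle is not the inequality chain, which is routine, but the jump from a bound on the running-average (or minimum) iterate to a bound on $\hat{\bm{\lambda}}_t$ itself as stated. I would handle this either by reinterpreting $\hat{\bm{\lambda}}_t$ as the running average (a benign modification of Algorithm \ref{algo1} consistent with standard stochastic approximation practice), or by invoking the quadratic-growth property of Lemma \ref{lemma.QG} to convert a $\sqrt{t}$-rate on $\mathbb{E}\|\hat{\bm{\lambda}}_t-\bm{\lambda}^*\|^2$ into a $\sqrt{t}$-rate on the dual-optimality gap; the former is simpler and matches the Nemirovski reference cited in the excerpt.
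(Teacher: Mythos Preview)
Your approach is essentially the same as the paper's: the paper does not give a standalone proof but simply invokes the standard stochastic (sub)gradient bound from Nemirovski et al.\ (2009), Section~2.2, which is precisely the argument you unfold. You are in fact more careful than the paper in flagging the averaged-versus-last-iterate distinction; the paper absorbs that subtlety into the citation without comment.
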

Lemma \ref{error-emp} asserts that using a diminishing stepsize,
the dual function value converges sub-linearly to the optimal
value in expectation. In principle, $D$ is the radius of the
feasible set for the dual variable $\bm{\lambda}$ \cite[Sec.
2.2]{nemirovski2009}. However, as the optimal multiplier
$\bm{\lambda}^*$ is bounded according to Assumption
\ref{assp.slater}, one can always estimate a large enough $D$, and
the estimation error will only affect the constant of the
sub-optimality bound \eqref{ineq.error-rec} through the scalar
$\alpha$. The sub-optimality bound in Lemma \ref{error-emp} holds
in expectation, which averages over all possible sample paths $\{\mathbf{s}_1,\ldots,\mathbf{s}_t\}$.

As a complement to Lemma \ref{error-emp}, the almost sure 
convergence of the empirical dual variables is established next to characterize the performance of each individual sample path.
\begin{theorem}\label{emp-dual}
For the sequence of empirical multipliers $\{\hat{\bm{\lambda}}_t\}$ in Algorithm \ref{algo1}, if the stepsizes are chosen as $\eta_t=\frac{\alpha D}{M\sqrt{t}},\forall t$, with constants $\alpha, M, D$ defined in Lemma \ref{error-emp}, it holds that
    \begin{equation}\label{eq.emp-dual}
        \lim_{t\rightarrow \infty} \hat{\bm{\lambda}}_t=\bm{\lambda}^*,\quad {\rm
        w.p.1}
    \end{equation}
    where $\bm{\lambda}^*$ is the optimal dual variable for the expected dual function minimization \eqref{eq.dual-prob}.
\end{theorem}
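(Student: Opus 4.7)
My strategy is to prove almost-sure convergence by establishing a contractive almost-supermartingale recursion for $V_t := \|\hat{\bm{\lambda}}_t - \bm{\lambda}^*\|^2$, and then combining the quadratic growth of Lemma~\ref{lemma.QG} with a supermartingale-type argument. First, using non-expansiveness of the projection $[\,\cdot\,]^+$ at $\bm{\lambda}^* \in \mathbb{R}_+^I$ and the bounded-gradient Assumption~\ref{assp.dualgrad}, I would expand the update \eqref{eq.dual-lambda-alg} into the pathwise inequality $V_{t+1} \leq V_t + 2\eta_t\,\langle \mathbf{A}\mathbf{x}_t(\hat{\bm{\lambda}}_t) + \mathbf{c}_t,\ \hat{\bm{\lambda}}_t - \bm{\lambda}^*\rangle + \eta_t^2 M^2$. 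Taking $\mathbb{E}[\,\cdot\mid\mathcal{F}_t]$, unbiasedness of the stochastic subgradient together with concavity of ${\cal D}$ gives $\langle \nabla{\cal D}(\hat{\bm{\lambda}}_t),\bm{\lambda}^* - \hat{\bm{\lambda}}_t\rangle \geq {\cal D}(\bm{\lambda}^*) - {\cal D}(\hat{\bm{\lambda}}_t)$; invoking the quadratic growth inequality \eqref{eq.QG} from Lemma~\ref{lemma.QG} (justified once the iterate sequence is shown to lie in a compact set, which follows from Slater's condition in Assumption~\ref{assp.slater} together with the bounded-increment estimate $\|\hat{\bm{\lambda}}_{t+1} - \hat{\bm{\lambda}}_t\|\leq \eta_t M$), I arrive at the clean recursion $\mathbb{E}[V_{t+1}\mid\mathcal{F}_t]\leq (1 - \epsilon\eta_t)\,V_t + \eta_t^2 M^2$.

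With this recursion in hand, the in-expectation statement $\mathbb{E}[V_t] = O(1/\sqrt{t})$, consistent with Lemma~\ref{error-emp} combined with quadratic growth, follows by a deterministic induction. To promote this to almost-sure convergence, I would unroll the recursion into $V_{T+1} \leq \prod_{t=1}^T (1-\epsilon\eta_t)\,V_1 + \sum_t w_{t,T}\,\eta_t^2 M^2 + \mathcal{M}_T$, where $w_{t,T} = \prod_{s=t+1}^T (1-\epsilon\eta_s)$ and $\mathcal{M}_T$ is a zero-mean weighted martingale. Because $\sum \eta_t = \infty$, the deterministic bias decays like $O(1/\sqrt{T})$ and vanishes. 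For the random part I would use a sparse-subsequence Borel--Cantelli argument: picking $t_k = k^4$ yields $\sum_k \mathbb{P}(V_{t_k} > \delta) \leq \sum_k C/(\delta k^2) < \infty$ by Markov's inequality, so $V_{t_k} \to 0$ almost surely; within-block fluctuations between $t_k$ and $t_{k+1}$ are then controlled by the single-step bound $\|\hat{\bm{\lambda}}_{t+1} - \hat{\bm{\lambda}}_t\|\leq \eta_t M$ re-injected into the contractive recursion started from the small value $V_{t_k}$, ensuring no oscillation spoils the limit.

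The principal obstacle is that the stepsize $\eta_t = \alpha D/(M\sqrt{t})$ is \emph{not} square-summable, so the textbook Robbins--Siegmund supermartingale convergence theorem (which requires $\sum \eta_t^2 < \infty$) cannot be invoked directly. The key technical leverage is the multiplicative shrinkage factor $(1-\epsilon\eta_t)$ provided by Lemma~\ref{lemma.QG}: it counteracts the divergent variance budget $\sum \eta_t^2 M^2$ and produces geometrically-decaying weights $w_{t,T}$ that tame both the deterministic bias and the martingale variance. This is precisely what allows the subsequence-plus-block argument to go through and delivers the claimed pathwise convergence.
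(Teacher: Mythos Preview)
The paper does not give a self-contained argument for this theorem; it simply defers to \cite[Proposition~8.2.13]{bertsekas2003}. Your proposal therefore goes well beyond the paper by attempting a direct proof that exploits the quadratic-growth property of Lemma~\ref{lemma.QG} to manufacture a multiplicative contraction $(1-\epsilon\eta_t)$ and then run a subsequence Borel--Cantelli argument. The first half of your plan is sound: the pathwise expansion, the conditional-expectation recursion $\mathbb{E}[V_{t+1}\mid\mathcal{F}_t]\le(1-\epsilon\eta_t)V_t+\eta_t^2M^2$, and the in-expectation rate $\mathbb{E}[V_t]=O(1/\sqrt{t})$ all go through. You also correctly isolate the obstruction: with $\eta_t\propto 1/\sqrt{t}$ one has $\sum_t\eta_t^2=\infty$, so the textbook Robbins--Siegmund lemma is unavailable.

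The gap is in your within-block control. With $t_k=k^4$, the cumulative stepsize over a block is
\[
\sum_{t=t_k}^{t_{k+1}-1}\eta_t \;\asymp\; \int_{k^4}^{(k+1)^4}\frac{c}{\sqrt{t}}\,dt \;=\; 2c\big[(k+1)^2-k^2\big]\;=\;2c(2k+1)\;\longrightarrow\;\infty,
\]
so the only pathwise bound you have, namely $\sqrt{V_s}\le\sqrt{V_{t_k}}+M\sum_{u=t_k}^{s-1}\eta_u$, blows up inside every block regardless of how small $V_{t_k}$ is. ``Re-injecting into the contractive recursion'' does not rescue this: the contraction $(1-\epsilon\eta_t)$ holds only in conditional expectation, not pathwise, and a union bound over the block using $\mathbb{P}(V_s>\delta\mid\mathcal{F}_{t_k})\le\mathbb{E}[V_s\mid\mathcal{F}_{t_k}]/\delta=O(k^{-2})$ is swamped by the $\Theta(k^3)$ block length. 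No polynomial (or even exponential) choice of $t_k$ fixes this, because the divergence of $\sum\eta_t$ is precisely what makes the cumulative block stepsize large. To close the argument you need a genuinely different device: either a maximal inequality for the weighted martingale $\sum_t w_{t,T}\eta_t\xi_t$ that exploits the exponential decay of the weights $w_{t,T}\asymp\exp(-2\epsilon c(\sqrt{T}-\sqrt{t}))$, or a direct appeal to an almost-supermartingale lemma tailored to the contractive case (where $\sum a_t=\infty$ and $b_t/a_t\to 0$ replace the usual square-summability), which is essentially what the cited Bertsekas result supplies.
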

\begin{proof}
The proof follows the steps in \cite[Proposition 8.2.13]{bertsekas2003}, which is omitted here.
\end{proof}

Building upon the asymptotic convergence of empirical dual variables for
statistical learning, it becomes possible to analyze the online
performance of LA-SDG. Clearly, the online resource allocation
$\mathbf{x}_t$ is a function of the effective dual variable
$\bm{\gamma}_t$ and the instantaneous network state $\mathbf{s}_t$
[cf. \eqref{eq.real-time1}]. Therefore, the next step is to
show that the effective dual variable $\bm{\gamma}_t$ also
converges to the optimal argument of the expected problem
\eqref{eq.dual-prob}, which would establish that the online resource allocation
$\mathbf{x}_t$ is asymptotically optimal. However, directly
analyzing the trajectory of $\bm{\gamma}_t$ is nontrivial, because
the queue length $\{\mathbf{q}_t\}$ is coupled with the reference sequence
$\{\hat{\bm{\lambda}}_t\}$ in $\bm{\gamma}_t$. To address this issue, rewrite the
recursion of $\bm{\gamma}_t$ as
\begin{equation}\label{eq.gamma}
    \bm{\gamma}_{t+1}=\bm{\gamma}_t+(\hat{\bm{\lambda}}_{t+1}-\hat{\bm{\lambda}}_t)+\mu(\mathbf{q}_{t+1}-\mathbf{q}_t),\;\forall t
\end{equation}
where the update of $\bm{\gamma}_t$ depends on the variations of
$\hat{\bm{\lambda}}_t$ and $\mathbf{q}_t$. We will
first study the asymptotic behavior of queue lengths
$\mathbf{q}_t$, and then derive the analysis of $\bm{\gamma}_t$
using the convergence of $\hat{\bm{\lambda}}_t$ in
\eqref{eq.emp-dual}, and the recursion \eqref{eq.gamma}.

Define the time-varying target
$\tilde{\bm{\theta}}_t=\bm{\lambda}^*-\hat{\bm{\lambda}}_t+\bm{\theta}$,
which is the optimality residual of statistical learning
$\bm{\lambda}^*-\hat{\bm{\lambda}}_t$ plus the bias-control variable
$\bm{\theta}$. Per Theorem \ref{emp-dual}, it readily follows that
$\lim_{t\rightarrow
\infty}\tilde{\bm{\theta}}_t=\bm{\theta},\,{\rm w.p.1}$. By
showing that $\mathbf{q}_t$ is attracted towards
the time-varying target $\tilde{\bm{\theta}}_t/\mu$, we will further derive the stability of queue lengths.

\begin{lemma}\label{lem.drift}
%Consider $\mathbf{q}_t$ and $\mu$ as the queue length and the
%stepsize in LA-SDG, respectively. 
With $\mathbf{q}_t$ and $\mu$ denoting queue length and stepsize, there exists a constant
$B=\Theta({1}/{\sqrt{\mu}})$, and a finite time
$T_B<\infty$, such that for all $t\geq T_B$, if
$\|\mathbf{q}_t-\tilde{\bm{\theta}}_t/\mu\|>B$, it holds in LA-SDG that
    \begin{equation}\label{eq.drift}
        \mathbb{E}\left[\left\|\mathbf{q}_{t+1}-\tilde{\bm{\theta}}_t/\mu\right\|\Big|\mathbf{q}_t\right]\leq \left\|\mathbf{q}_t-\tilde{\bm{\theta}}_t/\mu\right\|-\sqrt{\mu},\; {\rm w.p.1}.
    \end{equation}
\end{lemma}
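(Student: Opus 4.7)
The plan is to view Lemma \ref{lem.drift} as a stochastic drift inequality anchored by two facts: the algebraic identity
\[
\bm{\gamma}_t-\bm{\lambda}^*=\mu\bigl(\mathbf{q}_t-\tilde{\bm{\theta}}_t/\mu\bigr),
\]
which is immediate from the definitions of $\bm{\gamma}_t$ and $\tilde{\bm{\theta}}_t$ and turns deviations of the queue from its moving target into deviations of the effective multiplier from the dual optimum, and the quadratic growth \eqref{eq.QG} of the dual, which provides a restoring force on $\bm{\gamma}_t$ through $\nabla\mathcal{D}(\bm{\gamma}_t)$.

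The first step is to dispose of the projection $[\,\cdot\,]^{+}$ in \eqref{eq.dual-queue}. Because $\hat{\bm{\lambda}}_t\!\to\!\bm{\lambda}^*$ almost surely by Theorem \ref{emp-dual}, and because the bias-control vector $\bm{\theta}=\sqrt{\mu}\log^2(\mu)\cdot\mathbf{1}$ is strictly positive for $\mu\in(0,1)$, along each sample path there is an almost surely finite time $T_B$ from which on $\tilde{\bm{\theta}}_t=\bm{\lambda}^*-\hat{\bm{\lambda}}_t+\bm{\theta}\geq\mathbf{0}$ entrywise. Since Euclidean projection onto $\mathbb{R}_+^{I}$ is non-expansive with respect to any point already in $\mathbb{R}_+^{I}$, for $t\geq T_B$ I may bound $\|\mathbf{q}_{t+1}-\tilde{\bm{\theta}}_t/\mu\|$ by the unprojected quantity $\|\mathbf{q}_t+\mathbf{A}\mathbf{x}_t(\bm{\gamma}_t)+\mathbf{c}_t-\tilde{\bm{\theta}}_t/\mu\|$ without loss. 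This is exactly the role of $\bm{\theta}$ and explains why $T_B$ must be finite.

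Squaring the resulting inequality and taking conditional expectation given $\mathcal{F}_t:=\sigma(\mathbf{s}_1,\ldots,\mathbf{s}_{t-1})$ (so that $\mathbf{q}_t,\hat{\bm{\lambda}}_t,\bm{\gamma}_t$ are $\mathcal{F}_t$-measurable and $\mathbf{s}_t$ is independent of $\mathcal{F}_t$ by Assumption \ref{assp.iid}), I would identify $\mathbb{E}[\mathbf{A}\mathbf{x}_t(\bm{\gamma}_t)+\mathbf{c}_t\mid\mathcal{F}_t]$ with $\nabla\mathcal{D}(\bm{\gamma}_t)$ and bound the quadratic remainder by $M^{2}$ via Assumption \ref{assp.dualgrad}. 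Rewriting the cross term through $\mathbf{q}_t-\tilde{\bm{\theta}}_t/\mu=(\bm{\gamma}_t-\bm{\lambda}^*)/\mu$ and combining concavity of $\mathcal{D}$ with the quadratic growth \eqref{eq.QG} gives $(\bm{\gamma}_t-\bm{\lambda}^*)^{\top}\nabla\mathcal{D}(\bm{\gamma}_t)\leq-\tfrac{\epsilon}{2}\|\bm{\gamma}_t-\bm{\lambda}^*\|^{2}$. With $r_t:=\|\mathbf{q}_t-\tilde{\bm{\theta}}_t/\mu\|$, these pieces assemble into
\[
\mathbb{E}\bigl[\,r_{t+1}^{2}\bigm|\mathcal{F}_t\,\bigr]\;\leq\;(1-\mu\epsilon)\,r_t^{2}+M^{2},
\]
and Jensen's inequality converts this into $\mathbb{E}[r_{t+1}\mid\mathcal{F}_t]\leq\sqrt{(1-\mu\epsilon)r_t^{2}+M^{2}}$. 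Requiring the right-hand side to be no larger than $r_t-\sqrt{\mu}$ reduces, after squaring, to the scalar quadratic $\mu\epsilon\,r_t^{2}-2\sqrt{\mu}\,r_t+\mu-M^{2}\geq 0$, whose larger root is $B=(1+\sqrt{1+\epsilon(M^{2}-\mu)})/(\sqrt{\mu}\,\epsilon)=\Theta(1/\sqrt{\mu})$; every $r_t>B$ therefore produces \eqref{eq.drift}.

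The main obstacle I foresee is the first step: the non-expansion trick is valid only once $\tilde{\bm{\theta}}_t/\mu$ has entered the non-negative orthant, and making this rigorous requires tying the quantitative rate of $\hat{\bm{\lambda}}_t\to\bm{\lambda}^*$ from Lemma \ref{error-emp} to the magnitude $\sqrt{\mu}\log^{2}(\mu)$ of the bias-control variable $\bm{\theta}$, so that $\bm{\theta}$ eventually dominates the learning residual componentwise. Once this is settled the rest is a textbook supermartingale calculation in which the quadratic-growth constant $\epsilon$ of Lemma \ref{lemma.QG} plays the role of the strong-concavity parameter of the dual, and the $\Theta(1/\sqrt{\mu})$ scaling of $B$ comes out automatically from the $\mu\epsilon$ contraction balancing the $M^{2}$ noise floor.
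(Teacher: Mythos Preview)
Your proposal is correct and follows essentially the same route as the paper: both use the identity $\bm{\gamma}_t-\bm{\lambda}^*=\mu(\mathbf{q}_t-\tilde{\bm{\theta}}_t/\mu)$, remove the projection via eventual nonnegativity of $\tilde{\bm{\theta}}_t$, expand the square, bound the noise by $M^2$, and combine the concavity/subgradient inequality with quadratic growth \eqref{eq.QG} to obtain a contraction of order $\mu\epsilon$ on $r_t^2$, then pass through Jensen and solve the resulting scalar quadratic to read off $B=\Theta(1/\sqrt{\mu})$. The only cosmetic difference is that the paper first rewrites $-2\mu\epsilon r_t^2+M^2\le -2\sqrt{\mu}\,r_t+\mu$ and completes the square to $(r_t-\sqrt{\mu})^2$ before invoking Jensen, whereas you apply Jensen first and then square; the two orderings are algebraically equivalent.

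One remark on the ``obstacle'' you flag: you do not need any quantitative rate from Lemma \ref{error-emp} to make $\tilde{\bm{\theta}}_t\ge\mathbf{0}$ eventually. Almost-sure convergence in Theorem \ref{emp-dual} alone guarantees that along each sample path there is a finite time after which $\|\bm{\lambda}^*-\hat{\bm{\lambda}}_t\|\le\|\bm{\theta}\|$, and since $\bm{\theta}>\mathbf{0}$ this is enough for $\tilde{\bm{\theta}}_t\ge\mathbf{0}$ and hence for the non-expansiveness step; this is exactly how the paper handles it.
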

\begin{proof}\label{pf.queue-drift}
See Appendix B.
\end{proof}
Lemma \ref{lem.drift} reveals that when $\mathbf{q}_t$ is large and deviates from the time-varying target
$\tilde{\bm{\theta}}_t/\mu$, it will be bounced back towards the
target in the next time slot. Upon establishing this drift
behavior of queues, we are on track to establish queue
stability.

\begin{theorem}\label{the.queue-stable}
With $\mathbf{q}_t,\bm{\theta}$, and $\mu$ defined in \eqref{eq.dual-effect}, there exists a constant $\tilde{B}=\!\Theta({1}/{\sqrt{\mu}})$ such that the queue length under LA-SDG converges to a neighborhood of $\bm{\theta}/\mu$ as 
\begin{subequations}
	\begin{equation}\label{eq.inflim}
\liminf_{t\rightarrow \infty}\;\; \left\|\mathbf{q}_t-\bm{\theta}/\mu\right\|\leq\tilde{B},\;\;{\rm w.p.1}.
\end{equation}
In addition, if we choose $\bm{\theta}={\cal
O}(\sqrt{\mu}\log^2(\mu))$, the long-term average expected queue length satisfies
\begin{equation}\label{eq.stt-length}
    \lim_{T\rightarrow \infty} \frac{1}{T} \sum_{t=1}^{T} \mathbb{E}\left[\mathbf{q}_t\right]={\cal O}\left(\frac{\log^2(\mu)}{\sqrt{\mu}}\right),~{\rm w.p.1}.
\end{equation}
\end{subequations}
\end{theorem}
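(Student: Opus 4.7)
The plan is to leverage the one-step drift bound of Lemma \ref{lem.drift} in tandem with the almost sure convergence $\hat{\bm{\lambda}}_t \to \bm{\lambda}^*$ from Theorem \ref{emp-dual}. Writing $Z_t := \|\mathbf{q}_t - \tilde{\bm{\theta}}_t/\mu\|$, Lemma \ref{lem.drift} guarantees that $Z_t$ is pushed back by at least $\sqrt{\mu}$ in conditional mean whenever $Z_t > B = \Theta(1/\sqrt{\mu})$ and $t \geq T_B$; since $\tilde{\bm{\theta}}_t - \bm{\theta} = \bm{\lambda}^* - \hat{\bm{\lambda}}_t \to \mathbf{0}$ a.s., any control on $Z_t$ will automatically transfer to $\|\mathbf{q}_t - \bm{\theta}/\mu\|$.

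For the liminf claim \eqref{eq.inflim}, I would run a Foster--Lyapunov recurrence argument. Set $\tau := \inf\{t \geq T_B : Z_t \leq B\}$ and consider the stopped process $M_t := Z_{t \wedge \tau} + (t \wedge \tau - T_B)\sqrt{\mu}$. The drift inequality of Lemma \ref{lem.drift} makes $(M_t)_{t \geq T_B}$ a nonnegative supermartingale, so $\mathbb{E}[M_\infty] \leq \mathbb{E}[Z_{T_B}] < \infty$, whereas $M_t \geq (t - T_B)\sqrt{\mu}$ on $\{\tau = \infty\}$; together these force $\tau < \infty$ a.s. Iterating via the (time-inhomogeneous) strong Markov property yields $Z_t \leq B$ infinitely often w.p.1, and then the triangle inequality combined with $\tilde{\bm{\theta}}_t \to \bm{\theta}$ gives $\liminf_t \|\mathbf{q}_t - \bm{\theta}/\mu\| \leq B =: \tilde{B} = \Theta(1/\sqrt{\mu})$.

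For the time-average bound \eqref{eq.stt-length}, I would decompose $\mathbb{E}\|\mathbf{q}_t\| \leq \mathbb{E} Z_t + \mathbb{E}\|\tilde{\bm{\theta}}_t/\mu\|$. Under $\bm{\theta} = \mathcal{O}(\sqrt{\mu}\log^2\mu)$, the quadratic growth in Lemma \ref{lemma.QG} combined with $\mathbb{E}[\mathcal{D}(\bm{\lambda}^*) - \mathcal{D}(\hat{\bm{\lambda}}_t)] = \mathcal{O}(1/\sqrt{t})$ from Lemma \ref{error-emp} and Jensen's inequality gives $\mathbb{E}\|\bm{\lambda}^* - \hat{\bm{\lambda}}_t\| = \mathcal{O}(t^{-1/4})$, whose Ces\`aro average vanishes, so $\frac{1}{T}\sum_{t=1}^T \mathbb{E}\|\tilde{\bm{\theta}}_t/\mu\| = \|\bm{\theta}\|/\mu + o(1/\mu) = \mathcal{O}(\log^2(\mu)/\sqrt{\mu})$. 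For $\mathbb{E} Z_t$ I would upgrade the one-sided drift of Lemma \ref{lem.drift} into a uniform-in-$t$ bound via an exponential Lyapunov function $V_t = e^{c Z_t}$ with a small $c > 0$: the bounded per-step increment $|Z_{t+1} - Z_t| \leq M + \|\hat{\bm{\lambda}}_{t+1} - \hat{\bm{\lambda}}_t\|/\mu$ (from Assumption \ref{assp.dualgrad}), together with the negative conditional drift on $\{Z_t > B\}$, produces a geometric drift $\mathbb{E}[V_{t+1}|\mathcal{F}_t] \leq (1-\rho)V_t + K$, whence $\sup_{t \geq T_B} \mathbb{E}[V_t] < \infty$ and by Jensen $\sup_{t \geq T_B} \mathbb{E} Z_t = \mathcal{O}(B)$. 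Summing the two contributions proves \eqref{eq.stt-length}.

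The main obstacle is this last $L^1$ upgrade. Lemma \ref{lem.drift} is one-sided in that the useful drift is active only outside the ball of radius $B$, and moreover the center $\tilde{\bm{\theta}}_t/\mu$ drifts slowly with $t$; one must verify that the per-step shift of the center, of order $\eta_t/\mu = \mathcal{O}\bigl(1/(\mu\sqrt{t})\bigr)$, is dominated by the $-\sqrt{\mu}$ drift once $t$ is large, and that the constants $\rho, K, c$ in the geometric drift estimate introduce no hidden $\mu$-dependence beyond the target $\Theta(1/\sqrt{\mu})$ scale. Once these bookkeeping points are settled, assembling \eqref{eq.inflim} and \eqref{eq.stt-length} is routine.
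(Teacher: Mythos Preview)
Your argument for \eqref{eq.inflim} is essentially the paper's: both build a nonnegative supermartingale from the drift of Lemma~\ref{lem.drift} and deduce recurrence to a ball of radius $\Theta(1/\sqrt{\mu})$; the paper just transfers the drift to the fixed center $\bm{\theta}/\mu$ first (absorbing the $\rho$-error into $\sqrt{\tilde{\mu}}=\sqrt{\mu}-2\rho$) and uses indicator-weighted processes rather than a stopped one.

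For \eqref{eq.stt-length} your route is genuinely different. You decompose $\|\mathbf{q}_t\|\le Z_t+\|\tilde{\bm{\theta}}_t\|/\mu$ and control $\sup_t\mathbb{E}Z_t$ by an exponential Lyapunov / geometric drift bound, which is purely queueing-theoretic and self-contained. The paper instead runs a quadratic Lyapunov recursion centered at $\bm{\lambda}^*/\mu$, time-averages it, and shows $\frac{1}{T}\sum_t[\mathcal{D}(\bm{\lambda}^*)-\mathcal{D}(\bm{\gamma}_t)]=\mathcal{O}(\mu)$; concavity plus the quadratic growth of Lemma~\ref{lemma.QG} then give $\bigl\|\bm{\lambda}^*-\frac{1}{T}\sum_t\mathbb{E}[\bm{\gamma}_t]\bigr\|=\mathcal{O}(\sqrt{\mu})$, from which the queue bound follows. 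A large-deviation tail bound (obtained from the same drift via \cite[Theorem~4]{huang2011}) is invoked, but only indirectly to control the cross term $\mathbb{E}[(\bm{\theta}-\hat{\bm{\lambda}}_t)^{\top}(\mathbf{A}\mathbf{x}_t+\mathbf{c}_t)]$. Your approach is more elementary; the paper's approach exploits the dual structure and produces, as a by-product, the $\mathcal{O}(\mu)$ bound on that cross term which is reused verbatim in the proof of Theorem~\ref{gap-onlineLA-SDG}. The bookkeeping you flag is real but benign: with bounded increments $|Z_{t+1}-Z_t|\le M+\eta_t M/\mu$ and drift $-\sqrt{\mu}$, the standard choice $c=\Theta(\sqrt{\mu})$ yields $\sup_t\mathbb{E}Z_t=B+\mathcal{O}(1/\sqrt{\mu})=\mathcal{O}(1/\sqrt{\mu})$ once $t$ is large enough that $\eta_t/\mu\ll\sqrt{\mu}$.
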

\begin{proof}\label{pf.queue-stable}
See Appendix C.
\end{proof}
Theorem \ref{the.queue-stable} in \eqref{eq.inflim} asserts that the sequence of queue iterates converges (in the infimum sense) to a neighborhood of $\bm{\theta}/{\mu}$, where the radius of neighborhood region scales as $1/\sqrt{\mu}$.
In addition to the sample path result, \eqref{eq.stt-length} demonstrates that with a specific choice of $\bm{\theta}$, the queue length averaged over all sample paths will be ${\cal O}\left({\log^2(\mu)}/{\sqrt{\mu}}\right)$. 
Together with Theorem \ref{emp-dual}, it
suffices to have the effective dual variable converge to a
neighborhood of the optimal multiplier $\bm{\lambda}^*$; that is,
$\liminf_{t\rightarrow
\infty}\bm{\gamma}_t=\bm{\lambda}^*+\mu\mathbf{q}_t-\bm{\theta}=\bm{\lambda}^*+{\cal
O}(\sqrt{\mu}),\,{\rm w.p.1}$. 
Notice that the SDG iterate
$\bm{\lambda}_t$ in \eqref{eq.dual-stocg} will also converge to a
neighborhood of $\bm{\lambda}^*$. 
Therefore, intuitively LA-SDG
will behave similar to SDG in the steady state, and its
asymptotic performance follows from that of SDG. 
However, the difference is that through a careful choice of $\bm{\theta}$, for a sufficiently small $\mu$, LA-SDG can improve the queue length ${\cal O}\left({1}/{\mu}\right)$ under SDG by an order of magnitude.

In addition to feasibility, we formally establish in the next
theorem that LA-SDG is
asymptotically near-optimal.
\begin{theorem}\label{gap-onlineLA-SDG}
Let ${\Psi}^*$ be the optimal objective value of \eqref{eq.prob}
under any feasible policy with distribution information about the state fully available. If the
control variable is chosen as $\bm{\theta}={\cal
O}(\sqrt{\mu}\log^2(\mu))$, then with a sufficiently small $\mu$, LA-SDG yields a near-optimal solution for
\eqref{eq.prob} in the sense that
\begin{equation}\label{eq.opt-gap}
        \lim_{T\rightarrow \infty} \frac{1}{T} \sum_{t=1}^{T} \mathbb{E}\left[\Psi_t\left(\mathbf{x}_t(\bm{\gamma}_t)\right)\right] \leq {\Psi}^*+{\cal O}(\mu),\;{\rm
        w.p.1}
\end{equation}
where $\mathbf{x}_t(\bm{\gamma}_t)$ denotes the real-time
operations obtained from the Lagrangian minimization
\eqref{eq.real-time1}.
\end{theorem}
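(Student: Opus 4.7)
The plan is to adapt the drift-plus-penalty argument of \cite{neely2010} to the effective-multiplier update $\bm{\gamma}_t = \hat{\bm{\lambda}}_t + \mu\mathbf{q}_t - \bm{\theta}$ and reduce the goal to bounding the time-averaged cost by $\tilde{\Psi}^* + \mathcal{O}(\mu)$; since \eqref{eq.reform} is a relaxation of \eqref{eq.prob} we have $\tilde{\Psi}^* \leq \Psi^*$, so this suffices for \eqref{eq.opt-gap}. The natural comparator is the optimal stationary policy $\bm{\chi}^*(\cdot)$ guaranteed by the equivalence \eqref{eq.reform2}, which attains $\mathbb{E}[\Psi_t(\bm{\chi}^*(\mathbf{s}_t))]=\tilde{\Psi}^*$, $\mathbb{E}[\mathbf{A}\bm{\chi}^*(\mathbf{s}_t)+\mathbf{c}_t]\leq\mathbf{0}$, and, by complementary slackness, $(\bm{\lambda}^*)^{\top}\mathbb{E}[\mathbf{A}\bm{\chi}^*(\mathbf{s}_t)+\mathbf{c}_t]=0$.

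First I would take $V(\mathbf{q}_t)=\tfrac{1}{2}\|\mathbf{q}_t\|^2$ and, from \eqref{eq.dual-queue} plus Assumption \ref{assp.dualgrad}, obtain the one-slot drift bound $\Delta_t \leq \tfrac{M^2}{2}+\mathbf{q}_t^{\top}(\mathbf{A}\mathbf{x}_t(\bm{\gamma}_t)+\mathbf{c}_t)$. Combining this with the Lagrangian optimality of $\mathbf{x}_t(\bm{\gamma}_t)$ from \eqref{eq.real-time1} against $\bm{\chi}^*(\mathbf{s}_t)\in\mathcal{X}$, substituting $\bm{\gamma}_t=\hat{\bm{\lambda}}_t+\mu\mathbf{q}_t-\bm{\theta}$, and dividing by $\mu$ makes the $\mu\mathbf{q}_t$-aligned terms cancel. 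Taking conditional expectation given $\mathcal{F}_{t-1}$, and using independence of $\mathbf{s}_t$ together with $\mathbf{q}_t\geq\mathbf{0}$ and $\mathbb{E}[\mathbf{A}\bm{\chi}^*(\mathbf{s}_t)+\mathbf{c}_t]\leq\mathbf{0}$, delivers
\begin{equation*}
\mathbb{E}[\Delta_t\mid\mathcal{F}_{t-1}]+\tfrac{1}{\mu}\mathbb{E}[\Psi_t(\mathbf{x}_t(\bm{\gamma}_t))\mid\mathcal{F}_{t-1}] \leq \tfrac{M^2}{2}+\tfrac{\tilde{\Psi}^*}{\mu}+R_t,
\end{equation*}
with residual $R_t:=\tfrac{1}{\mu}(\hat{\bm{\lambda}}_t-\bm{\theta})^{\top}\mathbb{E}[\mathbf{A}(\bm{\chi}^*(\mathbf{s}_t)-\mathbf{x}_t(\bm{\gamma}_t))\mid\mathcal{F}_{t-1}]$. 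Telescoping from $t=1$ to $T$, dividing by $T$, and passing to $T\to\infty$ drives the boundary term $\mathbb{E}[V(\mathbf{q}_{T+1})]/T$ to zero, since Theorem \ref{the.queue-stable} combined with the bounded increments of Assumption \ref{assp.dualgrad} controls $\mathbb{E}\|\mathbf{q}_t\|^2$ uniformly in $t$. Multiplying through by $\mu$ then yields $\lim_T \tfrac{1}{T}\sum_t\mathbb{E}[\Psi_t(\mathbf{x}_t(\bm{\gamma}_t))]\leq \tilde{\Psi}^*+\tfrac{\mu M^2}{2}+\mu\cdot\overline{R}$, where $\overline{R}=\limsup_T\tfrac{1}{T}\sum_t R_t$.

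The hard part will be showing $\overline{R}=0$, because naively $\|\hat{\bm{\lambda}}_t-\bm{\theta}\|/\mu=\Theta(1/\mu)$ and a crude bound would inflate the optimality gap by a factor of $1/\mu$. I plan to split $R_t$ into a stationary-policy piece $\tfrac{1}{\mu}(\hat{\bm{\lambda}}_t-\bm{\theta})^{\top}\mathbb{E}[\mathbf{A}\bm{\chi}^*(\mathbf{s}_t)+\mathbf{c}_t]$ and an online-policy piece $-\tfrac{1}{\mu}(\hat{\bm{\lambda}}_t-\bm{\theta})^{\top}\mathbb{E}[\mathbf{A}\mathbf{x}_t(\bm{\gamma}_t)+\mathbf{c}_t\mid\mathcal{F}_{t-1}]$, and to decompose $\hat{\bm{\lambda}}_t-\bm{\theta}=\bm{\lambda}^*+(\hat{\bm{\lambda}}_t-\bm{\lambda}^*)-\bm{\theta}$. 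The $\bm{\lambda}^*$-part of the comparator piece vanishes by complementary slackness, the $-\bm{\theta}$-part is non-positive and can be dropped from the upper bound, and the learning-error remainder is controlled by $\|\hat{\bm{\lambda}}_t-\bm{\lambda}^*\|/\mu$, whose Cesaro average vanishes by combining Lemma \ref{error-emp} with the quadratic growth of Lemma \ref{lemma.QG}. The online piece telescopes through the queue recursion \eqref{eq.dual-queue} into a term of order $\|\hat{\bm{\lambda}}_t-\bm{\theta}\|\cdot\mathbb{E}\|\mathbf{q}_{T+1}\|/(\mu T)$, which also vanishes as $T\to\infty$ by Theorem \ref{the.queue-stable}. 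Hence $\overline{R}=0$, yielding the $\mathcal{O}(\mu)$ gap; the almost-sure refinement \eqref{eq.opt-gap} follows by running the same decomposition along individual sample paths, invoking the almost-sure convergence of $\hat{\bm{\lambda}}_t$ from Theorem \ref{emp-dual} and the sample-path queue bound \eqref{eq.inflim}.
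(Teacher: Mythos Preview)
Your drift-plus-penalty setup and the residual term you isolate are essentially the same as the paper's; after your Lagrangian comparison, the only nontrivial term is $(\bm{\theta}-\hat{\bm{\lambda}}_t)^{\top}(\mathbf{A}\mathbf{x}_t(\bm{\gamma}_t)+\mathbf{c}_t)$, which is exactly what the paper bounds in \eqref{eq.the2-74}. The gap is in how you propose to control the ``online-policy piece.'' You write that it ``telescopes through the queue recursion \eqref{eq.dual-queue} into a term of order $\|\hat{\bm{\lambda}}_t-\bm{\theta}\|\cdot\mathbb{E}\|\mathbf{q}_{T+1}\|/(\mu T)$.'' This is where the argument fails: because of the positive-part projection in \eqref{eq.dual-queue}, one has only the one-sided inequality $\mathbf{A}\mathbf{x}_t+\mathbf{c}_t\leq\mathbf{q}_{t+1}-\mathbf{q}_t$, so telescoping gives $\limsup_T\tfrac{1}{T}\sum_t\mathbb{E}[\mathbf{A}\mathbf{x}_t+\mathbf{c}_t]\leq\mathbf{0}$ but says nothing about the lower bound. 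Since $\bm{\theta}=\mathcal{O}(\sqrt{\mu}\log^2\mu)$ is small while $\hat{\bm{\lambda}}_t\to\bm{\lambda}^*=\Theta(1)$, the weight $\bm{\theta}-\hat{\bm{\lambda}}_t$ is eventually negative, and the inner product you need to bound above is exactly the direction telescoping does \emph{not} control. The time-averaged over-service $\lim_T\tfrac{1}{T}\sum_t\mathbb{E}[-\mathbf{A}\mathbf{x}_t-\mathbf{c}_t]$ can be strictly positive in principle; showing it is $\mathcal{O}(\mu)$ is the crux.

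The paper closes this gap not by telescoping but by a concentration argument: from the drift in Lemma~\ref{lem.drift} it derives a large-deviation bound \eqref{eq.large-dev} showing $\mathbf{q}_t$ concentrates exponentially around $\bm{\theta}/\mu$, then argues (eq.~\eqref{eq.the2-70}) that over-service can only occur when $\mathbf{q}_t<M$, an event whose probability is $\mathcal{O}(D_1 e^{-D_2(\theta/\mu-\tilde{B}-M)})$. The specific choice $\bm{\theta}=\Theta(\sqrt{\mu}\log^2\mu)$ is precisely what makes this probability $\mathcal{O}(\mu^2)$ and hence the whole residual $\mathcal{O}(\mu)$. Your proposal never invokes this choice of $\bm{\theta}$, which is a symptom that the mechanism is missing. (A smaller issue: your claim that the ``$-\bm{\theta}$-part is non-positive'' has the sign backwards, though Proposition~\ref{prop.primal-dual} makes that entire stationary-policy piece vanish anyway.)
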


\begin{proof}
See Appendix D.
\end{proof}

Combining Theorems \ref{the.queue-stable} and
\ref{gap-onlineLA-SDG}, we are ready to state that by setting $\bm{\theta}={\cal O}(\sqrt{\mu}\log^2(\mu))$,
LA-SDG is asymptotically ${\cal O}(\mu)$-optimal with an average queue
length ${\cal O}(\log^2(\mu)/{\sqrt{\mu}})$. This result implies
that LA-SDG is able to achieve a
near-optimal cost-delay tradeoff $[\mu,\log^2(\mu)/{\sqrt{\mu}}]$; see \cite{marques12,neely2010}. 
Comparing with the standard tradeoff
$[\mu,{1}/{\mu}]$ under SDG, the learn-and-adapt design of
LA-SDG markedly improves the online performance in
terms of delay. 
%Our result is optimal under the current assumptions. 
Note that a better tradeoff $[\mu,\log^2(\mu)]$ has been derived in
\cite{huang2014} under the so-termed local polyhedral
assumption. 
Observe though, that the considered setting in \cite{huang2014} is different from the one here. 
While the network state set $\mathbf{\cal S}$ and the action set $\mathbf{\cal X}$ in \cite{huang2014} are discrete and countable, LA-SDG allows continuous $\mathbf{\cal S}$ and $\mathbf{\cal X}$ with possibly infinite elements, and still be amenable to efficient and scalable online operations.

\begin{figure}[t]
\centering
\vspace{-0.2cm}
\includegraphics[height=0.31\textwidth]{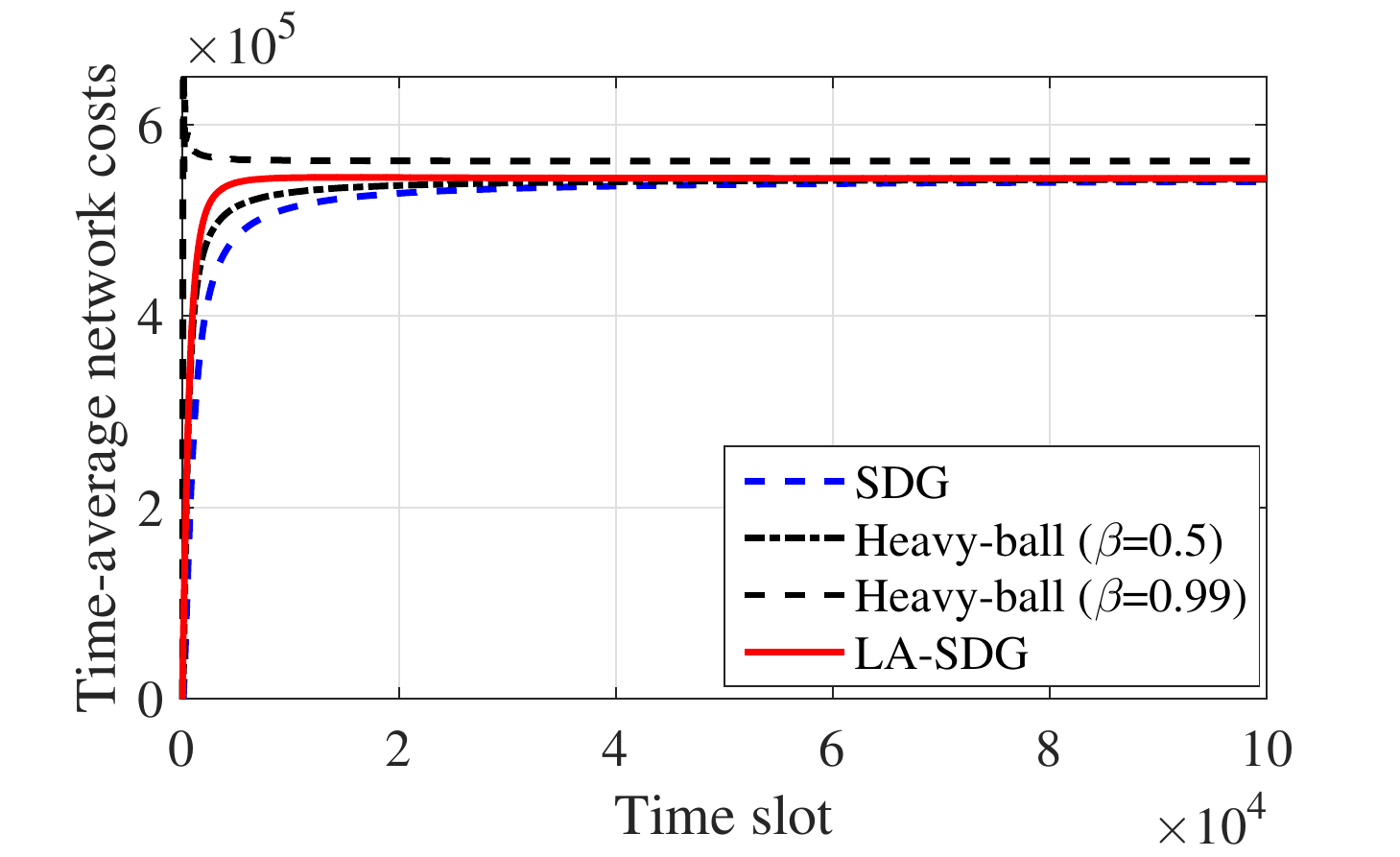}
\vspace{-0.6cm}
\caption{Comparison of time-averaged network costs.}
\label{Fig.obj}
\vspace{-0.2cm}
\end{figure}

\begin{figure}[t]
\centering
\vspace{-0.2cm}
\includegraphics[height=0.31\textwidth]{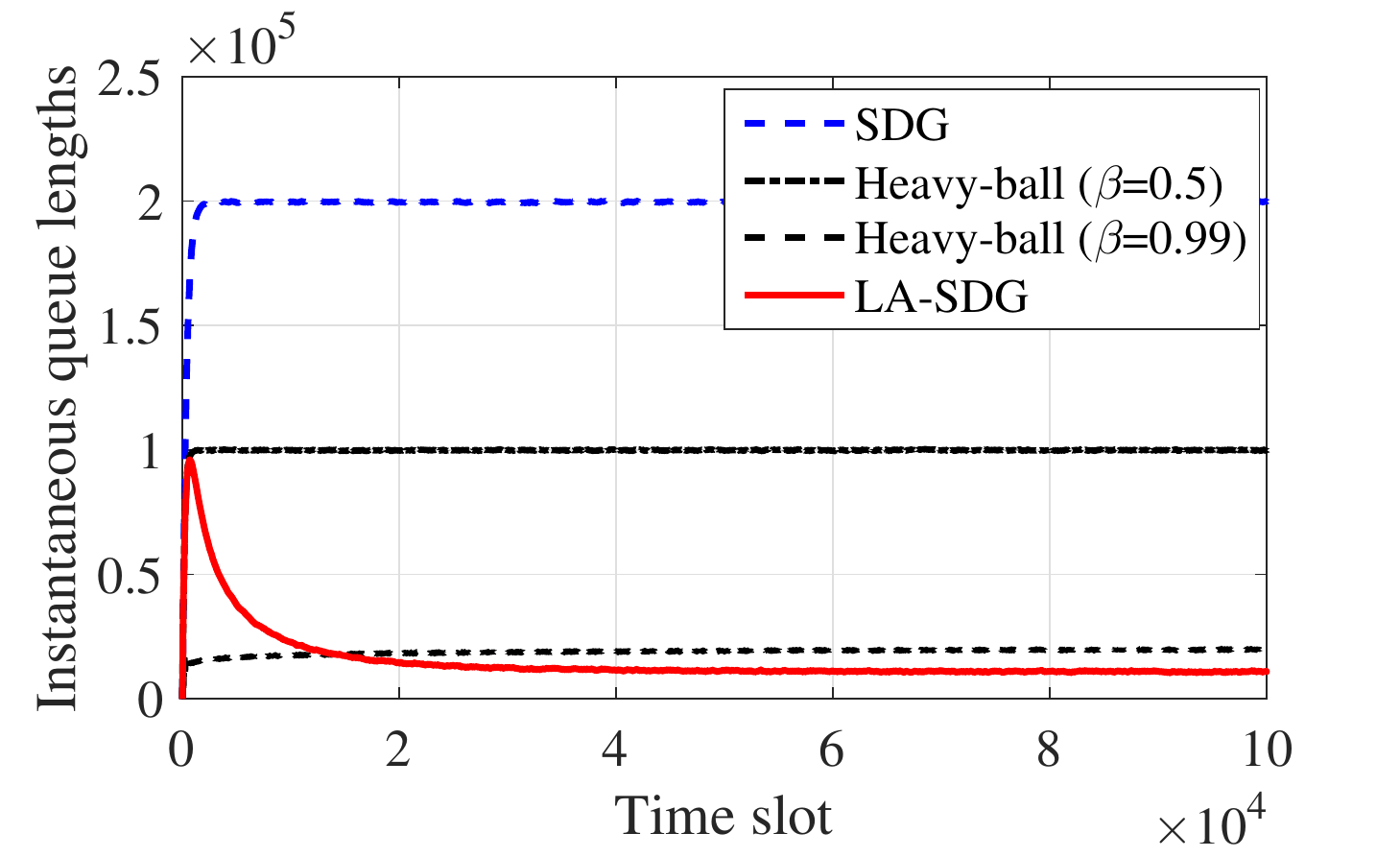}
\vspace{-0.6cm}
\caption{Instantaneous queue lengths summed over all nodes.}
\label{Fig.queue}
%\vspace{-0.2cm}
\end{figure}

%\begin{figure}[t]
%\centering
%\vspace{-0.2cm}
%\includegraphics[height=0.31\textwidth]{netcost-single-2}
%\vspace{-0.6cm}
%\caption{Comparison of time-averaged network costs ($\beta=0.99,\mu=0.2$).}
%\label{Fig.obj2}
%%\vspace{-0.2cm}
%\end{figure}
%
%
%\begin{figure}[t]
%\centering
%\vspace{-0.2cm}
%\includegraphics[height=0.31\textwidth]{netdelay-single-2}
%\vspace{-0.6cm}
%\caption{Queue lengths summed over all nodes ($\beta=0.99,\mu=0.2$).}
%\label{Fig.queue2}
%%\vspace{-0.2cm}
%\end{figure}

\section{Numerical Tests}\label{sec.Num}

\begin{figure}[t]
\centering
\vspace{-0.2cm}
\includegraphics[height=0.32\textwidth]{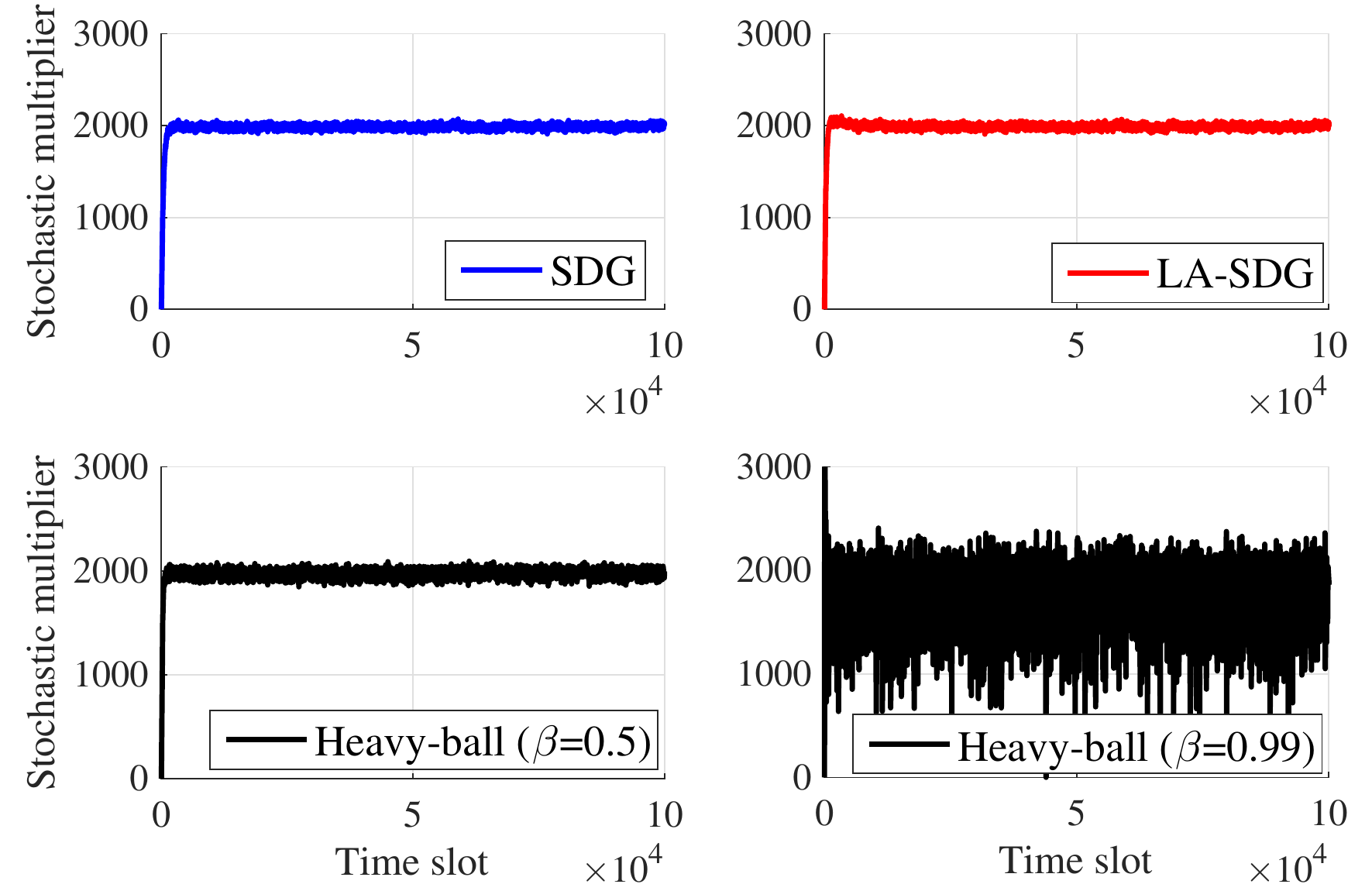}
\vspace{-0.6cm}
\caption{The evolution of stochastic multipliers at mapping node 1 ($\mu=0.2$).}
\label{Fig.dual}
\vspace{-0.2cm}
\end{figure}

This section presents numerical tests to confirm the analytical
claims and demonstrate the merits of the proposed approach. We
consider the geographical load balancing network of Section \ref{subsec.exp} with $K=10$ data centers, and $J=10$
mapping nodes. Performance is tested in terms of the time-averaged
instantaneous network cost in \eqref{eq.netcost}, namely
\begin{equation}\label{eq.simucost}
\Psi_t(\mathbf{x}_t)\!:=\sum_{k\in{\cal K}}p_t^k\left((x_t^{k0})^2
-e_t^k\right)+\sum_{j\in {\cal J}}\sum_{k\in{\cal K}}b_t^{jk}
(x_t^{jk})^2
\end{equation}
where the energy price $p_t^k$ is uniformly distributed over
$[10,30]$; samples of the renewable supply $\{e_t^k\}$ are
generated uniformly over $[10,100]$; and the
per-unit bandwidth cost is set to
$b_t^{jk}=40/\bar{x}^{jk},\forall k,j$, with bandwidth limits
$\{\bar{x}^{jk}\}$ generated from a uniform distribution
within $[100,200]$. 
The capacities at data
centers $\{\bar{x}_t^{k0}\}$ are uniformly generated from $[100,200]$. The
delay-tolerant workloads $\{c_t^j\}$ arrive at each mapping node
$j$ according to a uniform distribution over $[10,100]$.
Clearly, the cost \eqref{eq.simucost} and the state $\mathbf{s}_t$ here satisfy Assumptions 1 and 2. 
Finally, the stepsize is $\eta_t=1/\sqrt{t},\forall t$, the trade-off variable is $\mu=0.2$, and the bias correction vector is chosen as $\bm{\theta}=100\sqrt{\mu}\log^2(\mu) \bm{1}$ by default, but manually tuned in Figs. \ref{Fig.obj-all}-\ref{Fig.delay-all}.
We introduce two benchmarks: SDG in \eqref{eq.dual-stocg} (see e.g., \cite{neely2010}), and the projected stochastic
heavy-ball in \eqref{eq.dual-hball} and $\beta=0.5$ by default (see e.g.,
\cite{liu2016}). 
Unless otherwise stated, all simulated results were averaged over 50 Monte Carlo realizations.

\begin{figure}[t]
\centering
\vspace{-0.2cm}
\includegraphics[height=0.31\textwidth]{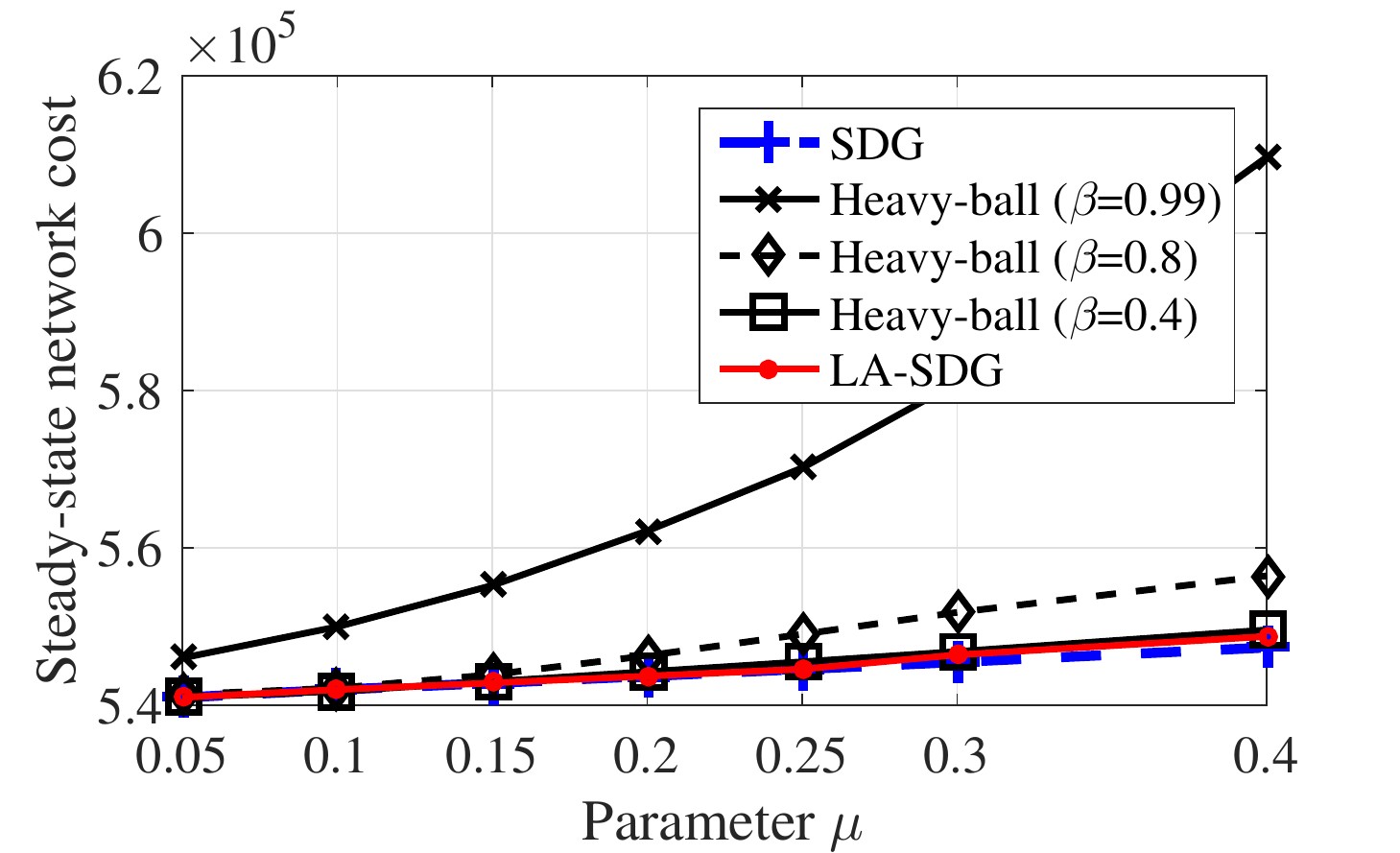}
\vspace{-0.6cm}
\caption{Comparison of steady-state network costs (after $10^6$ slots).}
\label{Fig.obj-all}
\vspace{-0.2cm}
\end{figure}

\begin{figure}[t]
\centering
\vspace{-0.2cm}
\includegraphics[height=0.31\textwidth]{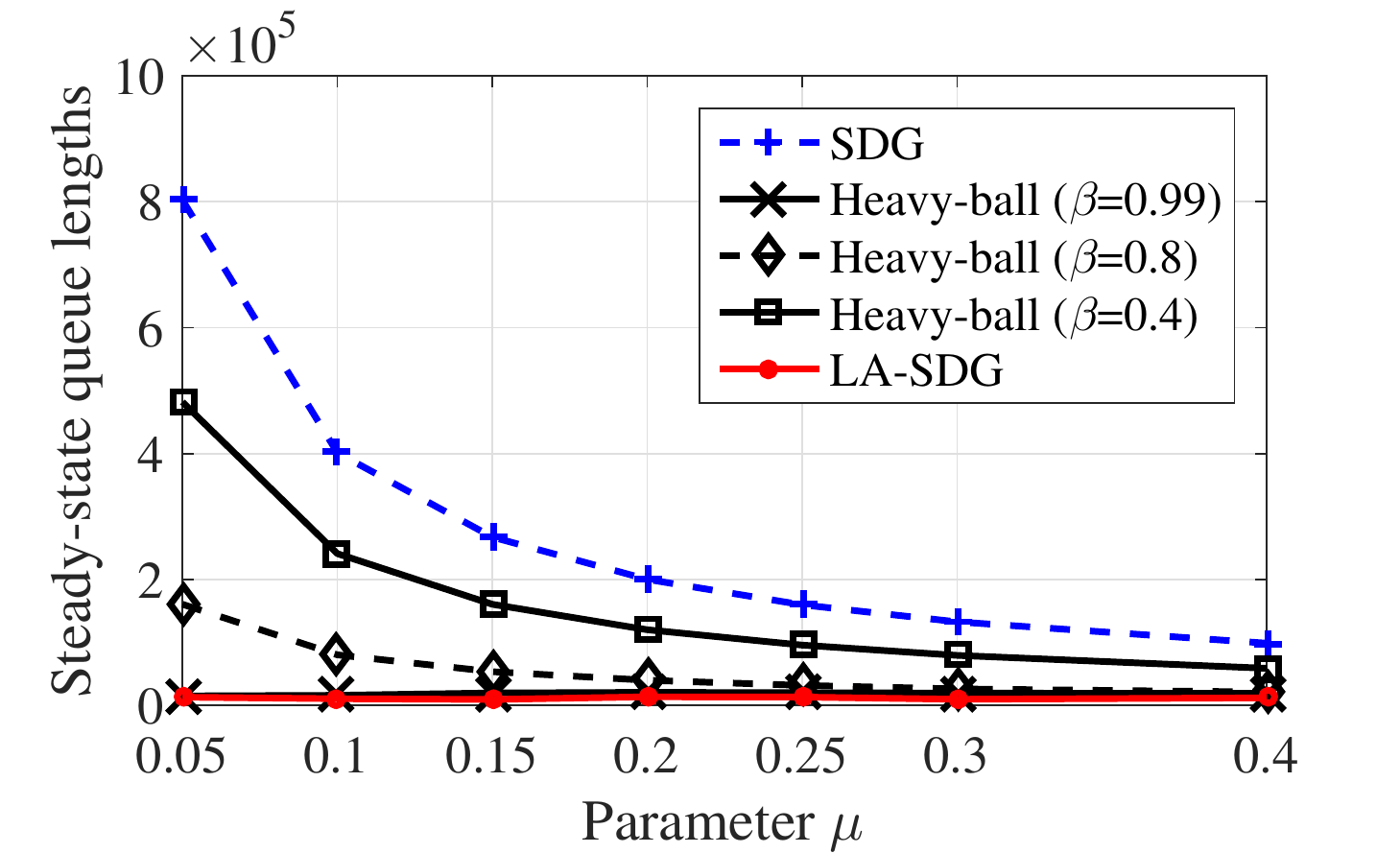}
\vspace{-0.6cm}
\caption{Steady-state queue lengths summed over all nodes (after $10^6$ slots).}
\label{Fig.delay-all}
\vspace{-0.2cm}
\end{figure}

Performance is first compared in terms of the time-averaged cost, and the instantaneous queue length in Figs. \ref{Fig.obj} and \ref{Fig.queue}. 
For the network cost, SDG, LA-SDG, and the heavy-ball iteration with $\beta=0.5$ converge to almost the same value, while the heavy-ball method with a larger momentum factor $\beta=0.99$ exhibits a pronounced optimality loss. 
LA-SDG and heavy-ball exhibit faster convergence than SDG as their running-average costs
quickly arrive at the optimal operating phase by leveraging the
learning process or the momentum acceleration. 
In this test, LA-SDG exhibits a much lower delay as its aggregated queue length is only 10\% of that for the heavy-ball method with $\beta=0.5$ and 4\% of that for SDG. 
By using a larger $\beta$, the heavy-ball method incurs a much lower queue length relative to that of SDG, but still slightly higher than that of LA-SDG. 
Clearly, our learn-and-adapt procedure improves the delay performance.

%The same metrics are also compared by setting $\beta=0.99,\mu=0.2$; see Figs. \ref{Fig.obj2} and \ref{Fig.queue2}. 
%With a larger momentum factor $\beta=0.99$, the heavy-ball method exhibits a pronounced optimality loss relative to SDG in Fig. \ref{Fig.obj2}, while LA-SDG still converges to a time-average cost similar to that of SDG. 
%However, by using a larger $\beta$, the heavy-ball method incurs a much lower queue length relative to that in Fig. \ref{Fig.queue}. 
%Comparing to the LA-SDG's queue length however, it is still much larger in the steady state.

Recall that the instantaneous resource allocation can be viewed as a function of the dual variable; see Proposition \ref{prop.closedform}.
Hence, the performance differences in Figs.
\ref{Fig.obj}-\ref{Fig.queue} can be also anticipated by the different behavior of dual variables.
In Fig. \ref{Fig.dual}, the evolution of stochastic dual variables is plotted for a single Monte Carlo realization; that is the dual iterate in \eqref{eq.dual-stocg} for SDG, the momentum iteration in \eqref{eq.dual-hball} for the heavy-ball method, and the effective multiplier in \eqref{eq.dual-effect} for LA-SDG.
As illustrated in \eqref{eq.dual-hballex}, the performance of momentum
iterations is similar to SDG with larger stepsize ${\mu}/(1-\beta)$. 
This is corroborated by Fig. \ref{Fig.dual}, where the stochastic momentum iterate with $\beta=0.5$ behaves similar to the dual iterates of SDG and LA-SDG, but its oscillation becomes prohibitively high with a larger factor $\beta=0.99$, which nicely explains the higher cost in Fig. \ref{Fig.obj}.

Since the cost-delay performance is sensitive to the choice of parameters $\mu$ and $\beta$, extensive experiments are further conducted among three algorithms using different values of $\mu$ and $\beta$ in Figs.
\ref{Fig.obj-all} and \ref{Fig.delay-all}. The steady-state performance is evaluated by running algorithms for sufficiently long time, up to $10^6$ slots.
The steady-state costs of all three algorithms increase as $\mu$ becomes larger, and the costs of LA-SDG and the heavy-ball with small momentum factor $\beta=0.4$ are close to that of SDG, while the costs of the heavy-ball with larger momentum factors $\beta=0.8$ and $\beta=0.99$ are much larger than that of SDG.
Considering steady-state queue lengths (network delay), LA-SDG exhibits an order of magnitude lower amount than those of SDG and the heavy-ball with small $\beta$, under all choices of $\mu$. Note that the heavy-ball with a sufficiently large factor $\beta=0.99$ also has a very low queue length, but it incurs a higher cost than LA-SDG in Fig. \ref{Fig.obj-all} due to higher steady-state oscillation in Fig. \ref{Fig.dual}.

\section{Concluding Remarks}\label{sec.Cons}

Fast convergent resource allocation and low service delay are highly desirable attributes of stochastic network management approaches.
%was the subject of this paper. 
Leveraging recent advances in
online learning and momentum-based optimization, a novel online
approach termed LA-SDG was developed in this paper.
%for the intended network resource allocation tasks. 
LA-SDG learns the network state statistics
through an additional sample recourse procedure. The associated novel
iteration can be nicely interpreted as a modified heavy-ball
recursion with an extra correction step to mitigate steady-state oscillations. It was analytically established that LA-SDG achieves a near-optimal cost-delay tradeoff
$[\mu,\log^2(\mu)/\sqrt{\mu}]$, which is better than $[\mu,1/\mu]$ of SDG, at the cost of only one extra gradient
evaluation per new datum. Our future research agenda includes novel
approaches to further hedge against non-stationarity, and improved learning schemes to
uncover other valuable statistical patterns from historical data. 
%is expected to further improve the online algorithmic performance.

%\clearpage
%\onecolumn
\setlength{\abovedisplayskip}{7pt}
\setlength{\belowdisplayskip}{7pt}
\appendix

Let us first state a simple but useful property regarding the primal-dual problems \eqref{eq.reform2} and \eqref{eq.dual-prob}.
%{\color{blue}Assumption: Each row of $\mathbf{A}$ has at least one entry with $-1$, and each column of $\mathbf{A}$ has at most one entry with $-1$. And the set ${\cal X}$ is simply the box constraint; i.e., $\mathbf{0}\leq \mathbf{x}_t\leq \bar{\mathbf{x}},\;\forall t$.}
\begin{proposition}\label{prop.primal-dual}
	Under Assumptions 1-3, for the constrained optimization \eqref{eq.reform2} with the optimal policy $\bm{\chi}^*(\cdot)$ and its optimal Lagrange multiplier $\bm{\lambda}^*$, it holds that $\mathbb{E}[\mathbf{A}\mathbf{x}_t^*+\mathbf{c}_t]=\mathbf{0}$ with $\mathbf{x}_t^*=\bm{\chi}^*(\mathbf{s}_t)\in{\cal X}$, and accordingly that $\nabla {\cal D}(\bm{\lambda}^*)=\mathbf{0}$.
\end{proposition}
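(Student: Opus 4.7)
The plan is to proceed by contradiction, exploiting the non-decreasing strongly convex structure of $\Psi_t$ together with the graph constraints encoded in $\mathbf{A}$, and then to deduce the gradient identity from Proposition \ref{prop.closedform}.

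Suppose toward contradiction that $\mathbb{E}[\mathbf{A}\mathbf{x}_t^* + \mathbf{c}_t] \neq \mathbf{0}$. Combined with the feasibility condition \eqref{eq.reforme1}, there must exist a coordinate $i_0$ with $\mathbb{E}[\mathbf{A}_{(i_0,:)} \mathbf{x}_t^* + c_t^{i_0}] = -2\epsilon$ for some $\epsilon > 0$. Since $\mathbb{E}[c_t^{i_0}] \geq 0$, the average outflow from $i_0$ is strictly positive; because each row of $\mathbf{A}$ contains at least one $-1$ entry, some outgoing link $(i_0, j_0)\in{\cal E}$ satisfies $\mathbb{E}[x_t^{*,i_0j_0}] > 0$, and thus $p_0 := \mathbb{P}(x_t^{*,i_0j_0} \geq \delta_0) > 0$ for some $\delta_0 > 0$.

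Next I would construct a perturbed policy $\bm{\chi}'$ that coincides with $\bm{\chi}^*$ on every coordinate except the $(i_0,j_0)$-th, which is shrunk by $\delta\cdot\mathbf{1}\{x_t^{*,i_0j_0} \geq \delta\}$ for some fixed $\delta < \min(\epsilon, \delta_0)$. The box constraint in \eqref{eq.reforme2} is preserved by construction. The perturbation raises the $i_0$-th entry of $\mathbb{E}[\mathbf{A}\bm{\chi}' + \mathbf{c}_t]$ by at most $\delta < \epsilon$, keeping it strictly negative, and it only decreases the $j_0$-th entry (which was already non-positive) or leaves it unchanged if $j_0$ is the virtual sink; all other entries are unchanged, so $\bm{\chi}'$ is feasible. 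Invoking $\sigma$-strong convexity together with the non-decreasing assumption of $\Psi_t$ in each coordinate gives, on the event $\{x_t^{*,i_0j_0} \geq \delta\}$,
\begin{equation}
\Psi_t(\mathbf{x}_t^*) \geq \Psi_t(\mathbf{x}_t^* - \delta\mathbf{e}_{(i_0,j_0)}) + \delta\,\partial_{(i_0,j_0)}\Psi_t(\mathbf{x}_t^* - \delta\mathbf{e}_{(i_0,j_0)}) + \tfrac{\sigma}{2}\delta^2 \geq \Psi_t(\mathbf{x}_t') + \tfrac{\sigma}{2}\delta^2,
\end{equation}
since the partial derivative is non-negative by monotonicity. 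Taking expectation yields $\mathbb{E}[\Psi_t(\bm{\chi}'(\mathbf{s}_t))] \leq \tilde{\Psi}^* - \tfrac{\sigma}{2}\delta^2 p_0 < \tilde{\Psi}^*$, contradicting the optimality of $\bm{\chi}^*$. Hence $\mathbb{E}[\mathbf{A}\mathbf{x}_t^* + \mathbf{c}_t] = \mathbf{0}$.

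For the last claim, Proposition \ref{prop.closedform} together with the uniqueness of the Lagrangian minimizer under strong convexity gives $\mathbf{x}_t^*(\bm{\lambda}^*) = \mathbf{x}_t^*$ for every realization $\mathbf{s}_t$, and the standard envelope identity for ${\cal D}(\bm{\lambda}) = \mathbb{E}[\min_{\mathbf{x}\in{\cal X}}{\cal L}_t(\mathbf{x},\bm{\lambda})]$ delivers $\nabla{\cal D}(\bm{\lambda}^*) = \mathbb{E}[\mathbf{A}\mathbf{x}_t^*(\bm{\lambda}^*) + \mathbf{c}_t] = \mathbf{0}$. I expect the chief obstacle to be the bookkeeping around the event-dependent perturbation: one must verify that the pointwise quadratic descent from strong convexity aggregates, after expectation, to a strictly positive cost reduction, which is exactly why restricting to the positive-probability event $\{x_t^{*,i_0j_0} \geq \delta_0\}$ is necessary.
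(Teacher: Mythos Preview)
Your argument is correct and arrives at the same conclusion as the paper, but the technical route differs. The paper works through the KKT conditions: from a slack constraint at coordinate $k$ it first deduces $(\lambda^k)^*=0$ by complementary slackness, then picks an outgoing edge $j\in\mathcal{E}_k^{-1}$ with $\mathbb{E}[(x_t^j)^*]>0$, and constructs a direction in the \emph{expected} primal variable (setting $\mathbb{E}[x_t^j]=0$) that violates the variational inequality \eqref{eq.KKT1}; the column structure of $\mathbf{A}$ (each column has at most one $-1$) is used to control the sign of the dual term $\sum_{i\neq k}(\lambda^i)^*\mathbf{A}_{(i,j)}$. You instead perturb the \emph{policy itself} on a positive-probability event and obtain a strictly feasible competitor with strictly lower expected cost, using strong convexity to secure a quantitative $\tfrac{\sigma}{2}\delta^2$ drop rather than merely a first-order violation; the column structure of $\mathbf{A}$ enters only to check feasibility at node $j_0$. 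Your approach is more elementary in that it avoids KKT theory and complementary slackness altogether, while the paper's approach makes the role of the multiplier more transparent. Both lean on exactly the same structural hypotheses (monotonicity of $\Psi_t$, row/column structure of $\mathbf{A}$, box feasible set), so neither is more general. Your closing step for $\nabla\mathcal{D}(\bm{\lambda}^*)=\mathbf{0}$ via Danskin and Proposition~\ref{prop.closedform} is the standard one and matches what the paper implicitly uses.
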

\begin{proof}
%See \cite[Proposition 3]{chen2017}.
	With $\bm{\lambda}^*$ denoting the optimal Lagrange multiplier with \eqref{eq.reforme1}, the Karush-Kuhn-Tucker (KKT) conditions \cite{bertsekas2003} are
	\begin{subequations}
		\begin{align}
		\left(\mathbb{E}[\nabla\Psi_t(\mathbf{x}_t^*)]+\mathbf{A}^{\top}\bm{\lambda}^*\right)^{\top}(\mathbb{E}[\mathbf{x}_t-\mathbf{x}_t^*])&\geq 0,\;\forall \mathbf{x}_t\in {\cal X}\label{eq.KKT1}\\
		(\bm{\lambda}^*)^{\top}\mathbb{E}[\mathbf{A}\mathbf{x}_t^*+\mathbf{c}_t]&=0\label{eq.KKT2}\\
		\mathbb{E}[\mathbf{A}\mathbf{x}_t^*+\mathbf{c}_t]\leq \mathbf{0};\;\bm{\lambda}^*&\geq \mathbf{0}\label{eq.KKT3}
	\end{align}
	\end{subequations}
	where \eqref{eq.KKT1} is the optimality condition of Lagrangian minimization, \eqref{eq.KKT2} is the complementary slackness condition, and \eqref{eq.KKT3} are the primal and dual feasibility conditions.
	
To establish the claim, let us first assume that there exists entry $k$ that the inequality constraint \eqref{eq.reforme1} is not active; i.e., $\mathbb{E}[\mathbf{A}_{(k,:)}\mathbf{x}_t^*\!+\!c_t^k]=-\zeta$ with the constant $\zeta>0$, and $\mathbf{A}_{(k,:)}$ denoting the $k$-th row of $\mathbf{A}$. As each row of $\mathbf{A}$ has at least one entry equal to $-1$, we collect all indices of entries at $k$-th row with value $-1$ in set ${\cal E}_{k}^{-1}$ so that $\mathbf{A}_{(k,e)}=-1,\forall e\in {\cal E}_{k}^{-1}$. 

Since $\mathbf{x}^*_t$ is feasible, we have $\mathbf{x}^*_t\geq\mathbf{0}$, and thus
\begin{align}
	\mathbb{E}[\mathbf{A}_{(k,:)}\mathbf{x}_t^*+c_t^k]=\mathbb{E}\left[\sum_{e\in {\cal E}}\mathbf{A}_{(k,e)}(x_t^e)^*+c_t^k\right]=-\zeta
\end{align}
which implies that
\begin{equation}
	\!\mathbb{E}\big[\textstyle\sum_{e\in {\cal E}_{k}^{-1}}(x_t^e)^*\big]\!=\!\zeta+\mathbb{E}[c_t^k\!+\!\textstyle\sum_{e\in {\cal E}\backslash{\cal E}_{k}^{-1}}\mathbf{A}_{(k,e)}(x_t^e)^*]>0.\!\!\!
\end{equation}
According \eqref{eq.KKT2}, it further follows that $(\lambda^k)^*=0$ since $(\lambda^k)^*\cdot\mathbb{E}[\mathbf{A}_{(k,:)}\mathbf{x}_t^*\!+\!c_t^k]=-(\lambda^k)^*\cdot\zeta=0$. Now we are on track to show that it contradicts with \eqref{eq.KKT1}. Since $\mathbb{E}[\sum_{e\in {\cal E}_{k}^{-1}}(x_t^e)^*]>0$, there exists at least an index $j$ such that $\mathbb{E}[(x_t^j)^*]>0,\,j \in {\cal E}_{k}^{-1}$. Choose $\mathbb{E}[\mathbf{x}_t]$ with $\mathbb{E}[x_t^{\tilde{j}}]=\mathbb{E}[(x_t^{\tilde{j}})^*],\forall {\tilde{j}}\neq j$ and $\mathbb{E}[x_t^j]=0$, to have $\mathbb{E}[\mathbf{x}_t-\mathbf{x}_t^*]=[0,\ldots,-\mathbb{E}[(x_t^j)^*],\ldots,0]^{\top}$. 
Recall that the feasible set ${\cal X}$ in \eqref{eq.probl} contains only box constraints; i.e., ${\cal X}:=\{\mathbf{x}\,|\,\mathbf{0}\leq\mathbf{x}\leq \bar{\mathbf{x}}\}$, which implies that the above selection of $\mathbf{x}_t$ is feasible.
Hence, we arrive at (with $\nabla_j\Psi_t(\mathbf{x}_t^*)$ denoting $j$-th entry of gradient)
\begin{align}\label{eq.kkt.contra}
	&\left(\mathbb{E}[\nabla\Psi_t(\mathbf{x}_t^*)]+\mathbf{A}^{\top}\bm{\lambda}^*\right)^{\top}\mathbb{E}[(\mathbf{x}_t-\mathbf{x}_t^*)]\nonumber\\
	=&-\mathbb{E}\Big[\nabla_j\Psi_t(\mathbf{x}_t^*)(x_t^j)^*-\sum_{i\in{\cal I}} (\lambda^i)^*\mathbf{A}_{(i,j)}(x_t^j)^*\Big]\nonumber\\
	\stackrel{(a)}{=}&\underbrace{-\mathbb{E}[\nabla_j\Psi_t(\mathbf{x}_t^*)(x_t^j)^*]}_{<0}-\!\!\sum_{i\in{\cal I}\backslash k} \underbrace{(\lambda^i)^*\mathbf{A}_{(i,j)}\mathbb{E}[(x_t^j)^*]}_{\geq 0}<0
\end{align}
where (a) uses $(\lambda^k)^*=0$; the first bracket follows from Assumption \ref{assp.primal} since $\nabla_j\Psi_t(\mathbf{x}_t^*)$ is monotonically increasing and $\nabla_j\Psi_t(\mathbf{x}_t^*)\geq 0$, thus for $\mathbb{E}[(x_t^j)^*]>0$ it follows $\mathbb{E}[\nabla_j\Psi_t(\mathbf{x}_t^*)]>0$; and the second bracket follows that $\bm{\lambda}^*\geq \mathbf{0}$ and each column of $\mathbf{A}$ has at most one $-1$ and $\mathbf{A}_{(k,j)}=-1$. The proof is then complete since \eqref{eq.kkt.contra} contradicts \eqref{eq.KKT1}.
\end{proof}

\subsection{Proof of Lemma \ref{lemma.QG}}\label{app.QG}
\textbf{Proof of Lipschitz continuity:}
Under Assumption \ref{assp.primal}, the primal objective $\Psi_t(\mathbf{x}_t)$ is $\sigma$-strongly convex, and the smooth constant of the dual function ${\cal D}_t(\bm{\lambda})$, or equivalently, the Lipschitz constant of gradient $\nabla{\cal D}_t(\bm{\lambda})$ directly follows from \cite[Lemma II.2]{beck2014}, which equals to $L_{\rm d}=\rho(\mathbf{A}^{\top}\mathbf{A})/\sigma$, with $\rho(\mathbf{A}^{\top}\mathbf{A})$ denoting the maximum eigenvalue of $\mathbf{A}^{\top}\mathbf{A}$. We omit the derivations of this result, and refer readers to that in \cite{beck2014}.

\textbf{Supporting lemmas for quadratic growth:}
To prove the quadratic growth property \eqref{eq.QG}, we introduce an error bound, which describes the local property of the dual function ${\cal D}(\bm{\lambda})$. 
\begin{lemma}\label{lemma.errorbd}{\cite[Lemma 2.3]{Hong2016}}
Consider the dual function in \eqref{eq.dual-func} and the feasible set ${\cal X}$ in \eqref{eq.probl} with only linear constraints. For any $\bm{\lambda}$ satisfying ${\cal D}(\bm{\lambda})> -\infty$ and $\|\nabla {\cal D}(\bm{\lambda})\|\leq \delta$, we have\begin{equation}\label{eq.errorbd}
	\|\bm{\lambda}^*-\bm{\lambda}\| \leq \xi \|\nabla {\cal D}(\bm{\lambda})\|
\end{equation}
where the scalar $\xi$ depends on the matrix $\mathbf{A}$ as well the constants $\sigma$, $L_{\rm p}$ and $L_{\rm d}$ introduced in Assumption \ref{assp.primal}.
\end{lemma}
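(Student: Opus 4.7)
The plan is to derive the error bound in three conceptual moves: (i) exploit strong convexity of the primal to obtain a Lipschitz representation of $\nabla\mathcal{D}$, (ii) recognize that the primal--dual optimality system is polyhedral when lifted to the $(\mathbf{x},\bm{\lambda})$ graph, and (iii) apply Hoffman's lemma to convert a bound on the KKT residuals into a bound on $\|\bm{\lambda}-\bm{\lambda}^*\|$.

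First I would show that the inner Lagrangian minimizer $\mathbf{x}(\bm{\lambda})\in\arg\min_{\mathbf{x}\in\mathcal{X}}\mathcal{L}(\mathbf{x},\bm{\lambda};\mathbf{s})$ is Lipschitz in $\bm{\lambda}$. Because $\Psi$ is $\sigma$-strongly convex under Assumption \ref{assp.primal}, writing the first-order optimality conditions at $\bm{\lambda}_1,\bm{\lambda}_2$, subtracting, and pairing with $\mathbf{x}(\bm{\lambda}_1)-\mathbf{x}(\bm{\lambda}_2)$ yields $\|\mathbf{x}(\bm{\lambda}_1)-\mathbf{x}(\bm{\lambda}_2)\|\le(\|\mathbf{A}\|/\sigma)\|\bm{\lambda}_1-\bm{\lambda}_2\|$. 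Hence $\nabla\mathcal{D}(\bm{\lambda})=\mathbb{E}[\mathbf{A}\mathbf{x}(\bm{\lambda};\mathbf{s})+\mathbf{c}(\mathbf{s})]$ is well-defined and $L_{\rm d}$-Lipschitz, consistent with the first half of Lemma \ref{lemma.QG}. Equivalently, $\mathcal{D}$ has the composite form $-\Psi^{\!*}(-\mathbf{A}^{\top}\bm{\lambda})+\bm{\lambda}^{\top}\mathbb{E}[\mathbf{c}]$ restricted to the box-induced conjugate, i.e., a Lipschitz-gradient function composed with a linear map.

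Next I would assemble the polyhedral structure needed for Hoffman's lemma. Since $\mathcal{X}$ is a box and the coupling constraints are linear in $\mathbf{x}$, the KKT system $\{\bm{\lambda}\ge\mathbf{0},\ \nabla\mathcal{D}(\bm{\lambda})\le\mathbf{0},\ \bm{\lambda}^{\top}\nabla\mathcal{D}(\bm{\lambda})=0\}$ can be recast, via the Lipschitz map $\bm{\lambda}\mapsto\mathbf{x}(\bm{\lambda})$, as a polyhedral system in the lifted variable $(\mathbf{x},\bm{\lambda})$. This is exactly the template covered by the Luo--Tseng / Hong--Luo framework, which provides a global error bound with a Hoffman-type constant for functions of the form ``smooth strongly-convex composed with a linear map plus a polyhedral term.'' The natural residual is the projected-gradient map $\mathbf{r}(\bm{\lambda}):=\bm{\lambda}-[\bm{\lambda}+\nabla\mathcal{D}(\bm{\lambda})]^{+}$, which is nonexpansive and satisfies $\|\mathbf{r}(\bm{\lambda})\|\le\|\nabla\mathcal{D}(\bm{\lambda})\|$ on the nonnegative orthant. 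Applying Hoffman's lemma to the lifted polyhedron bounds $\|\bm{\lambda}-\bm{\lambda}^*\|$ by a constant multiple of $\|\mathbf{r}(\bm{\lambda})\|$, and therefore by $\|\nabla\mathcal{D}(\bm{\lambda})\|$. Collecting the Lipschitz constant from Step 1 (depending on $\sigma$ and $\|\mathbf{A}\|$), the dual smoothness $L_{\rm d}=\rho(\mathbf{A}^{\top}\mathbf{A})/\sigma$, and the Hoffman constant of the KKT polyhedron produces the scalar $\xi$ as claimed.

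The main obstacle is justifying the polyhedrality of the optimality set, since $\mathcal{D}$ itself is not polyhedral. The trick, and the reason the local qualifier $\|\nabla\mathcal{D}(\bm{\lambda})\|\le\delta$ appears in the hypothesis, is that one works on a compact sublevel set where the composite structure $\Psi^{\!*}\!\circ(-\mathbf{A}^{\top}\cdot)$ can be linearized against $\mathbf{x}(\bm{\lambda})$ with uniformly controlled Lipschitz moduli. This is precisely the scope of \cite[Lemma 2.3]{Hong2016}, so once the primal assumptions are matched to its hypotheses---strong convexity, Lipschitz gradient, box feasibility, linear coupling---the remainder is bookkeeping of constants rather than new analysis. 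The dependence of $\xi$ on $\mathbf{A}$, $\sigma$, $L_{\rm p}$ and $L_{\rm d}$ then emerges transparently from the chain of inequalities above.
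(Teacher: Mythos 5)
The first thing to note is that the paper does not prove this lemma at all: it is imported verbatim from \cite[Lemma 2.3]{Hong2016}, and the only original content in the surrounding text is the remark that confining $\bm{\lambda}$ to a compact set removes the local qualifier $\|\nabla {\cal D}(\bm{\lambda})\|\leq \delta$. So your proposal has to stand on its own as a reconstruction of the cited argument, and as such it contains a genuine gap at its pivotal step.

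Your Step 1 (Lipschitz continuity of the Lagrangian minimizer $\mathbf{x}(\bm{\lambda})$, hence $L_{\rm d}$-smoothness of ${\cal D}$) is fine. The problem is the claim that the KKT system ``can be recast, via the Lipschitz map $\bm{\lambda}\mapsto\mathbf{x}(\bm{\lambda})$, as a polyhedral system in the lifted variable $(\mathbf{x},\bm{\lambda})$,'' to which Hoffman's lemma is then applied. Unless $\Psi$ is quadratic, the stationarity condition $(\nabla\Psi(\mathbf{x})+\mathbf{A}^{\top}\bm{\lambda})^{\top}(\mathbf{x}'-\mathbf{x})\geq 0$ for all $\mathbf{x}'\in{\cal X}$ is nonlinear in $\mathbf{x}$, and $\mathbf{x}(\bm{\lambda})$ is merely Lipschitz, not piecewise affine; the lifted optimality set is therefore not a polyhedron, and Hoffman's lemma cannot be invoked on it. This is exactly the difficulty the Luo--Tseng/Hong--Luo machinery is designed to circumvent: by strong convexity the primal optimum $\mathbf{x}^*$ is unique and is the Lagrangian minimizer at \emph{every} dual optimal multiplier, so the dual optimal set $\bm{\Lambda}^*$ admits a polyhedral description with $\mathbf{x}^*$ frozen (box-constrained complementarity conditions that are affine in $\bm{\lambda}$); Hoffman's bound is applied to that polyhedron only, and the passage from ``$\|\nabla{\cal D}(\bm{\lambda})\|$ small'' to ``distance to $\bm{\Lambda}^*$ small'' is obtained by a limiting/contradiction argument along sequences with ${\rm dist}(\bm{\lambda}_k,\bm{\Lambda}^*)/\|\nabla{\cal D}(\bm{\lambda}_k)\|\rightarrow\infty$, exploiting the Lipschitz property of $\mathbf{x}(\cdot)$. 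That limiting argument is also the true source of the local qualifier $\delta$; your explanation of $\delta$ as permitting a ``linearization on a compact sublevel set'' is not what the cited proof does. A further point you (like the paper) gloss over: here ${\cal D}$ is an expectation over a possibly continuous state, so one must at least remark that the composite structure invoked from \cite{Hong2016} is preserved under the expectation before quoting the deterministic error bound.
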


Lemma \ref{lemma.errorbd} states a \textit{local} error bound for the dual function ${\cal D}(\bm{\lambda})$. 
The error bound is ``local'' since it holds only for $\bm{\lambda}$ close enough to the optimum $\bm{\lambda}^*$, i.e., $\|\nabla {\cal D}(\bm{\lambda})\|\leq \delta$.
Following the arguments in \cite{Hong2016} however, if the dual iterate $\bm{\lambda}$ is artificially confined to a compact set $\bm{\Lambda}$ such that $\|\bm{\lambda}\|\leq D$ with $D$ denoting the radius of $\bm{\Lambda}$,\footnote{Since the optimal multiplier is bounded per Assumption \ref{assp.slater}, one can safely find a large set $\bm{\Lambda}$ with radius $D$ to project dual iterates during optimization.} then for the case $\|\nabla {\cal D}(\bm{\lambda})\|\geq \delta$, the ratio ${\|\bm{\lambda}^*-\bm{\lambda}\|}/{\|\nabla {\cal D}(\bm{\lambda})\|}\leq D/\delta$, which implies the existence of $\xi$ satisfying \eqref{eq.errorbd} for any $\bm{\lambda}\in \bm{\Lambda}$.
Lemma \ref{lemma.errorbd} is important for establishing linear convergence rate without strong convexity \cite{Hong2016}. 
Remarkably, we will show next that this error bound is also critical to characterize the steady-state behavior of our LA-SDG scheme.

Building upon Lemma \ref{lemma.errorbd}, we next show that the ensemble dual function ${\cal D}(\bm{\lambda})$ also satisfies the so-termed Polyak-Lojasiewicz (PL) condition \cite{mark2016}.
\begin{lemma}\label{lemma.PL}
	Under Assumption \ref{assp.primal}, the local error-bound in \eqref{eq.errorbd} implies the following PL condition, namely
	\begin{equation}\label{eq.PL-ineq}
	{\cal D}(\bm{\lambda}^*)-{\cal D}(\bm{\lambda})\leq \frac{L_{\rm d}\xi^2}{2}\|\nabla {\cal D}(\bm{\lambda})\|^2
	\end{equation}
where $L_{\rm d}$ is the Lipschitz constant of the dual gradient and $\xi$ is as in \eqref{eq.errorbd}.
\end{lemma}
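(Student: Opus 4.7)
The plan is to chain the local error bound from Lemma \ref{lemma.errorbd} with the $L_{\rm d}$-smoothness of $\mathcal{D}$ from the first half of Lemma \ref{lemma.QG}, using Proposition \ref{prop.primal-dual} to kill the linear first-order term at $\bm{\lambda}^*$. Since $\mathcal{D}$ has an $L_{\rm d}$-Lipschitz gradient, the standard quadratic bound
$$\bigl|\mathcal{D}(\bm{y}) - \mathcal{D}(\bm{x}) - \langle \nabla\mathcal{D}(\bm{x}),\, \bm{y}-\bm{x}\rangle\bigr| \leq \tfrac{L_{\rm d}}{2}\|\bm{y}-\bm{x}\|^2$$
holds for every pair $\bm{x},\bm{y}$, without any use of concavity. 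Taking the lower side of this two-sided inequality with $\bm{x}=\bm{\lambda}^*$ and $\bm{y}=\bm{\lambda}$ gives
$$\mathcal{D}(\bm{\lambda}) \;\geq\; \mathcal{D}(\bm{\lambda}^*) + \langle \nabla\mathcal{D}(\bm{\lambda}^*),\, \bm{\lambda}-\bm{\lambda}^*\rangle - \tfrac{L_{\rm d}}{2}\|\bm{\lambda}-\bm{\lambda}^*\|^2.$$

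Next I would invoke Proposition \ref{prop.primal-dual} to obtain $\nabla\mathcal{D}(\bm{\lambda}^*)=\mathbf{0}$, so that the inner-product term vanishes and rearranging delivers
$$\mathcal{D}(\bm{\lambda}^*) - \mathcal{D}(\bm{\lambda}) \;\leq\; \tfrac{L_{\rm d}}{2}\|\bm{\lambda}-\bm{\lambda}^*\|^2.$$
All that is left is to translate the distance-to-optimum into a gradient-norm. Squaring the error bound \eqref{eq.errorbd} gives $\|\bm{\lambda}^*-\bm{\lambda}\|^2 \leq \xi^2\|\nabla\mathcal{D}(\bm{\lambda})\|^2$, and substituting yields the PL inequality \eqref{eq.PL-ineq}, finishing the proof.

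The main subtlety I expect is justifying the clean use of $\nabla\mathcal{D}(\bm{\lambda}^*)=\mathbf{0}$ instead of the weaker variational inequality $\langle \nabla\mathcal{D}(\bm{\lambda}^*),\bm{\lambda}-\bm{\lambda}^*\rangle \leq 0$ on the feasible cone $\{\bm{\lambda}\geq\mathbf{0}\}$, since a nonzero projected gradient at the boundary would force us to drag an extra first-order remainder through the argument and would not yield a pure PL bound. This is precisely why Proposition \ref{prop.primal-dual} was isolated upstream: the structural assumptions on $\mathbf{A}$ together with Assumption \ref{assp.primal} (monotonicity of $\Psi_t$) and Slater's condition force every expected constraint \eqref{eq.reforme1} to be active at the primal optimum, so $\nabla\mathcal{D}(\bm{\lambda}^*) = \mathbb{E}[\mathbf{A}\mathbf{x}_t^* + \mathbf{c}_t] = \mathbf{0}$ holds. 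A secondary concern is that \eqref{eq.errorbd} is stated only when $\|\nabla\mathcal{D}(\bm{\lambda})\|\leq\delta$, but the compactness argument given immediately after Lemma \ref{lemma.errorbd} extends the constant $\xi$ to the whole compact set $\bm{\Lambda}$, covering the regime $\|\nabla\mathcal{D}(\bm{\lambda})\|>\delta$ trivially, so there is no gap to worry about.
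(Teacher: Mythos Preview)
Your proof is correct and mirrors the paper's own argument essentially step for step: apply the $L_{\rm d}$-smoothness inequality with base point $\bm{\lambda}^*$, invoke Proposition~\ref{prop.primal-dual} to set $\nabla\mathcal{D}(\bm{\lambda}^*)=\mathbf{0}$, and then substitute the squared error bound \eqref{eq.errorbd}. Your discussion of the two subtleties (why the constraint-qualification in Proposition~\ref{prop.primal-dual} is needed rather than a mere variational inequality, and why the local bound extends via compactness) is accurate and matches the paper's surrounding commentary.
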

\begin{proof}
	Using the $L_{\rm d}$-smoothness of the dual function ${\cal D}(\bm{\lambda})$, we have for any $\bm{\lambda}$ and $\bm{\varphi}\in \mathbb{R}^I_+$ that
	\begin{align}
	\!	{\cal D}(\bm{\varphi})\leq {\cal D}(\bm{\lambda})-\langle \nabla {\cal D}(\bm{\varphi}), \bm{\lambda}-\bm{\varphi} \rangle+\frac{L_{\rm d}}{2}\|\bm{\lambda}-\bm{\varphi}\|^2\!.
	\end{align}
Choosing $\bm{\varphi}=\bm{\lambda}^*$, and using Proposition \ref{prop.primal-dual} such that $\nabla {\cal D}(\bm{\lambda}^*)=\mathbf{0}$, we have
\begin{equation}
	\!	{\cal D}(\bm{\lambda}^*)\!\leq\! {\cal D}(\bm{\lambda})+\!\frac{L_{\rm d}}{2}\|\bm{\lambda}-\bm{\lambda}^*\|^2 \!\stackrel{(a)}{\leq} {\cal D}(\bm{\lambda})\!+\!\frac{L_{\rm d}\xi^2}{2}\|\nabla {\cal D}(\bm{\lambda})\|^2\!\!\!
	\end{equation}
	where inequality (a) uses the local error-bound in \eqref{eq.errorbd}.
\end{proof}

\textbf{Proof of quadratic growth:}
The proof follows the main steps of that in \cite{mark2016}.
Building upon Lemma \ref{lemma.PL}, we next prove Lemma \ref{lemma.QG}. 
Define a function of the dual variable $\bm{\lambda}$ as $g(\bm{\lambda}):=\sqrt{{\cal D}(\bm{\lambda}^*)-{\cal D}(\bm{\lambda})}$. With the PL condition in \eqref{eq.PL-ineq}, and $\bm{\Lambda}^*$ denoting the set of optimal multipliers for \eqref{eq.dual-prob}, we have for any $\bm{\lambda}\notin \bm{\Lambda}^*$ that
\begin{equation}
\|\nabla  g(\bm{\lambda})\|^2=\frac{\|\nabla {\cal D}(\bm{\lambda})\|^2}{{\cal D}(\bm{\lambda}^*)-{\cal D}(\bm{\lambda})}\geq \frac{2}{L_{\rm d}\xi^2}
\end{equation}
which implies that $\|\nabla  g(\bm{\lambda})\|\geq \sqrt{2/(L_{\rm d}\xi^2)}$.

For any $\bm{\lambda}_0\!\notin\! \bm{\Lambda}^*$, consider the following differential equation\footnote{The time index in the proof of Lemma 1 is not related to the online optimization process, but it is useful to find the structure of the dual function.}
\begin{subequations}\label{eq.ode}
	\begin{numcases}{\hspace{-0.8cm}}
	\frac{\mathbf{d} \bm{\lambda}(\tau)}{\mathbf{d} \tau}=-\nabla  g(\bm{\lambda}(t)) \label{eq.ode1}\\
\;	\bm{\lambda}(\tau=0)=\bm{\lambda}_0      \label{eq.ode2}
\end{numcases}
\end{subequations}
which describes the continuous trajectory of $\{\bm{\lambda}(\tau)\}$ starting from $\bm{\lambda}_0$ along the direction of $-\nabla  g(\bm{\lambda}(\tau))$.
By using $\|\nabla  g(\bm{\lambda})\|\geq \sqrt{2/(L_{\rm d}\xi^2)}$, it follows that $\nabla  g(\bm{\lambda})$ is bounded below; thus, the differential equation \eqref{eq.ode} guarantees that we sufficiently reduce the value of function $g(\bm{\lambda})$, and $\bm{\lambda}(\tau)$ will eventually reach $\bm{\Lambda}^*$. 

In other words, there exists a time $T$ such that $\bm{\lambda}(T)\in \bm{\Lambda}^*$. Formally, for $\tau>T$, we have 
\begin{align}\label{eq.temp-g}
	g(\bm{\lambda}_0)&-g(\bm{\lambda}_\tau)=\int_{\bm{\lambda}_\tau}^{\bm{\lambda}_0}\left\langle \nabla g(\bm{\lambda}), \mathbf{d}\bm{\lambda}\right\rangle\nonumber\\
	&=-\int_{\bm{\lambda}_0}^{\bm{\lambda}_\tau}\left\langle \nabla g(\bm{\lambda}), \mathbf{d}\bm{\lambda}\right\rangle=-\int_{0}^{T} \left\langle \nabla g(\bm{\lambda}), \frac{\mathbf{d}\bm{\lambda}(\tau)}{\mathbf{d}\tau}\right\rangle \mathbf{d}\tau\nonumber\\
	&=\int_{0}^{T} \|\nabla g(\bm{\lambda}(\tau))\|^2 \mathbf{d}\tau\geq \int_{0}^{T} \frac{2}{L_{\rm d}\xi^2} \mathbf{d}\tau=\frac{2T}{L_{\rm d}\xi^2}.
\end{align}
Since $g(\bm{\lambda})\geq 0,\;\forall \bm{\lambda}$, we have $T\leq g(\bm{\lambda}_0)L_{\rm d}\xi^2/2$, which implies that there exists a finite time $T$ such that $\bm{\lambda}_\tau\in \bm{\Lambda}^*$.
On the other hand, the path length of trajectory $\{\bm{\lambda}(\tau)\}$ will be longer than the projection distance between $\bm{\lambda_0}$ and the closest point in $\bm{\Lambda}^*$ denoted as $\bm{\lambda}^*$, that is,
\begin{equation}\label{eq.temp-int}
	\int_{0}^{T} \left\| \frac{\mathbf{d}\bm{\lambda}(\tau)}{\mathbf{d}\tau}\right\| \mathbf{d}\tau=\int_{0}^{T} \left\|\nabla g(\bm{\lambda}(\tau))\right\| \mathbf{d}\tau\geq \|\bm{\lambda_0}-\bm{\lambda}^*\|
\end{equation}
and thus we have from \eqref{eq.temp-g} that 
\begin{align}
	g(\bm{\lambda}_0)&-g(\bm{\lambda}_\tau)=\int_{0}^{T} \|\nabla g(\bm{\lambda}(\tau))\|^2 \mathbf{d}\tau\\
	&\geq \int_{0}^{T} \|\nabla g(\bm{\lambda}(\tau))\|  \sqrt{\frac{2}{L_{\rm d}\xi^2}} \mathbf{d}\tau\stackrel{(b)}{\geq}  \sqrt{\frac{2}{L_{\rm d}\xi^2}}\|\bm{\lambda_0}-\bm{\lambda}^*\|\nonumber
\end{align}
where (b) follows from \eqref{eq.temp-int}. Choosing $T$ such that $g(\bm{\lambda}_T)=0$, we have 
\begin{equation}\label{eq.gg48}
	g(\bm{\lambda}_0)\geq \sqrt{\frac{2}{L_{\rm d}\xi^2}}\|\bm{\lambda_0}-\bm{\lambda}^*\|.
\end{equation}
Squaring both sides of \eqref{eq.gg48}, the proof is complete, since $\epsilon$ is defined as $\epsilon:=2/(L_{\rm d}\xi^2)$ and  $\bm{\lambda_0}$ can be any point outside the set of optimal multipliers.

\subsection{Proof of Lemma \ref{lem.drift}}\label{app.B}

%The following results are closely related to the \textit{locally smooth} structure of the dual function \cite{huang2011}, but here we use the concept of the strong concavity of the dual function to establish the arguments.
%Based on Lemma \ref{lemma.QG}, we are ready to prove Lemma \ref{lem.drift}.
Since $\hat{\bm{\lambda}}_t$ converges to $\bm{\lambda}^*,\; {\rm w.p.1}$ according to Theorem \ref{emp-dual}, there exists a finite time $T_{\theta}$ such that for $t>T_{\theta}$, we have $\|\bm{\lambda}^*-\hat{\bm{\lambda}}_t\|\leq \|\bm{\theta}\|$. In such case, it follows that $\tilde{\bm{\theta}}_t=\bm{\lambda}^*-\hat{\bm{\lambda}}_t+\bm{\theta}\geq \mathbf{0}$, since $\bm{\theta}\geq \mathbf{0}$. Therefore, we have
\begin{align}\label{ineq.drift0}
	\|\mathbf{q}_{t+1}-&\tilde{\bm{\theta}}_t/\mu\|^2=\|[\mathbf{q}_t+\mathbf{A}\mathbf{x}_t+\mathbf{c}_t]^+-[\tilde{\bm{\theta}}_t/\mu]^+\|^2\\
	\stackrel{(a)}{\leq}&\|\mathbf{q}_t+\mathbf{A}\mathbf{x}_t+\mathbf{c}_t-\tilde{\bm{\theta}}_t/\mu\|^2\nonumber\\
	\stackrel{(b)}{\leq} & \|\mathbf{q}_t-\tilde{\bm{\theta}}_t/\mu\|^2+2(\mathbf{q}_t-\tilde{\bm{\theta}}_t/\mu)^{\top}(\mathbf{A}\mathbf{x}_t+\mathbf{c}_t)+M^2\nonumber
	\end{align}
where (a) comes from the non-expansive property of the projection, and (b) is due to the bound $M$ in Assumption \ref{assp.dualgrad}.

The RHS of \eqref{ineq.drift0} can be upper bounded by
\begin{align}\label{ineq.drift1}
	& \|\mathbf{q}_t-\tilde{\bm{\theta}}_t/\mu\|^2+2(\mathbf{q}_t-\tilde{\bm{\theta}}_t/\mu)^{\top}(\mathbf{A}\mathbf{x}_t+\mathbf{c}_t)+M^2\nonumber\\
	\stackrel{(c)}{=} & \|\mathbf{q}_t-\tilde{\bm{\theta}}_t/\mu\|^2+\frac{2}{\mu}\left(\bm{\gamma}_t-\bm{\lambda}^*\right)^{\top}(\mathbf{A}\mathbf{x}_t+\mathbf{c}_t)+M^2
%	\stackrel{(d)}{\leq}&  \|\mathbf{q}_t-\tilde{\bm{\theta}}_t/\mu\|^2+\frac{2}{\mu}\left({\cal D}_t(\bm{\gamma}_t)-{\cal D}_t(\bm{\lambda}^*)\right)+M^2
\end{align}
where (c) uses the definitions $\tilde{\bm{\theta}}_t:=\bm{\lambda}^*-\hat{\bm{\lambda}}_t+\bm{\theta}$, and $\bm{\gamma}_t:=\hat{\bm{\lambda}}_t+\mu\mathbf{q}_t-\bm{\theta}$. 
Since $\mathbf{A}\mathbf{x}_t+\mathbf{c}_t$ is the stochastic subgradient of the concave function ${\cal D}(\bm{\lambda})$ at $\bm{\lambda}=\bm{\gamma}_t$ [cf. \eqref{eq.real-time1}], we have
\begin{equation}\label{eq.lemm3-subg}
	\mathbb{E}\left[\left(\bm{\gamma}_t-\bm{\lambda}^*\right)^{\top}(\mathbf{A}\mathbf{x}_t+\mathbf{c}_t)\right]\leq {\cal D}(\bm{\gamma}_t)-{\cal D}(\bm{\lambda}^*).
\end{equation}

Taking expectations on \eqref{ineq.drift0}-\eqref{ineq.drift1} over the random state $\mathbf{s}_t$ conditioned on $\mathbf{q}_t$ and using \eqref{eq.lemm3-subg}, we arrive at 
\begin{equation}\label{ineq.50}
	\mathbb{E}\left[\|\mathbf{q}_{t+1}\!-\!\tilde{\bm{\theta}}_t/\mu\|^2\right]\!\leq\!\|\mathbf{q}_t-\tilde{\bm{\theta}}_t/\mu\|^2\!+\frac{2}{\mu}\left({\cal D}(\bm{\gamma}_t)-{\cal D}(\bm{\lambda}^*)\right)+M^2
\end{equation}
where we use the fact that ${\cal D}(\bm{\lambda}):=\mathbb{E}\left[{\cal D}_t(\bm{\lambda})\right]$ in \eqref{eq.dual-prob}. 
Using the quadratic growth property of ${\cal D}(\bm{\lambda})$ in \eqref{eq.QG} of Lemma \ref{lemma.QG}, the recursion \eqref{ineq.50} further leads to 
\begin{align}\label{ineq.drift}
	\mathbb{E}\big[\|\mathbf{q}_{t+1}-& \tilde{\bm{\theta}}_t/\mu\|^2\big] \leq\|\mathbf{q}_t-\tilde{\bm{\theta}}_t/\mu\|^2\!-\frac{2\epsilon}{\mu}\|\bm{\gamma}_t-\bm{\lambda}^*\|^2+M^2\nonumber\\
	\stackrel{(d)}{=} &\|\mathbf{q}_t-\tilde{\bm{\theta}}_t/\mu\|^2-2\mu\epsilon\|\mathbf{q}_t-\tilde{\bm{\theta}}_t/\mu\|^2+M^2
\end{align}
where equality (d) uses the definitions $\tilde{\bm{\theta}}_t:=\bm{\lambda}^*-\hat{\bm{\lambda}}_t+\bm{\theta}$ and $\bm{\gamma}_t:=\hat{\bm{\lambda}}_t+\mu\mathbf{q}_t-\bm{\theta}$, implying that $\bm{\gamma}_t-\bm{\lambda}^*=\mu\mathbf{q}_t-\tilde{\bm{\theta}}_t$.

Now considering (cf. \eqref{ineq.drift})
\begin{equation}\label{ineq.vieta}
	-2\mu\epsilon\|\mathbf{q}_t-\tilde{\bm{\theta}}_t/\mu\|^2+M^2\leq -2\sqrt{\mu}\|\mathbf{q}_t-\tilde{\bm{\theta}}_t/\mu\|+\mu
\end{equation}
and plugging it back into \eqref{ineq.drift} yields 
\begin{align}\label{ineq.drift3}
	\mathbb{E}\big[\|\mathbf{q}_{t+1}- \tilde{\bm{\theta}}_t/\mu\|^2\big] &\leq \|\mathbf{q}_t-\tilde{\bm{\theta}}_t/\mu\|^2-2\sqrt{\mu}\|\mathbf{q}_t-\tilde{\bm{\theta}}_t/\mu\|+\mu\nonumber\\
	&=\big(\|\mathbf{q}_t-\tilde{\bm{\theta}}_t/\mu\|-\sqrt{\mu}\big)^2.
\end{align}
%which follows that $\mathbb{E}\big[\|\mathbf{q}_{t+1}-\tilde{\bm{\theta}}_t/\mu\|^2\big]\!\leq\! \big(\|\mathbf{q}_t-\tilde{\bm{\theta}}_t/\mu\|-\sqrt{\mu}\big)^2$. 
%\begin{equation}
%	\mathbb{E}\left[\|\mathbf{q}_{t+1}-\tilde{\bm{\theta}}_t/\mu\|^2\right]\leq \left(\|\mathbf{q}_t-\tilde{\bm{\theta}}_t/\mu\|-\sqrt{\mu}\right)^2
%\end{equation}
By the convexity of $(\,\cdot\,)^2$, we further arrive at
\begin{align}
	\mathbb{E}\left[\|\mathbf{q}_{t+1}-\tilde{\bm{\theta}}_t/\mu\|\right]^2&\leq \mathbb{E}\big[\|\mathbf{q}_{t+1}- \tilde{\bm{\theta}}_t/\mu\|^2\big]\nonumber\\
	&\leq\left(\|\mathbf{q}_t-\tilde{\bm{\theta}}_t/\mu\|-\sqrt{\mu}\right)^2
\end{align}
which directly implies the argument \eqref{eq.drift} in the lemma.
By checking Vieta's formulas for second-order equations, there exists $B=\Theta(\frac{1}{\sqrt{\mu}})$ such that for $\|\mathbf{q}_t-\tilde{\bm{\theta}}_t/\mu\|>B$, inequality \eqref{ineq.vieta} holds, and thus the lemma follows readily.

\subsection{Proof of Theorem \ref{the.queue-stable}}\label{app.C}
\textbf{Proof of \eqref{eq.inflim} in Theorem \ref{the.queue-stable}:}
Theorem \ref{emp-dual} asserts that $\hat{\bm{\lambda}}_t$ eventually converges to the optimum $\bm{\lambda}^*,\,{\rm w.p.1}$. 
Hence, there always exists a finite time $T_{\rho}$ and an arbitrarily small $\rho$ such that for $t>T_{\rho}$, it holds that $\|\bm{\lambda}^*/\mu-\hat{\bm{\lambda}}_t/\mu\|\leq \rho$. Using the definition 
$\tilde{\bm{\theta}}_t=\bm{\lambda}^*-\hat{\bm{\lambda}}_t+\bm{\theta}$, it then follows by the triangle inequality that 
\begin{equation}\label{eq.the1-3}
	\left|\|\mathbf{q}_t-\tilde{\bm{\theta}}_t/\mu\|-\|\mathbf{q}_t-\bm{\theta}/\mu\|\right|\leq \|\bm{\lambda}^*/\mu-\hat{\bm{\lambda}}_t/\mu\|\leq\rho
\end{equation}
which also holds for $\mathbf{q}_{t+1}$.

Using \eqref{eq.the1-3} and the conditional drift \eqref{eq.drift} in Lemma \ref{lem.drift}, for $t\!>\!T_{\rho}$ and $\|\mathbf{q}_t-\tilde{\bm{\theta}}_t/\mu\|\!>\!B\!=\!\Theta({1}/{\sqrt{\mu}})$, it holds that
\begin{align}\label{eq.rho}
	 &\mathbb{E}\left[\left\|\mathbf{q}_{t+1}-\bm{\theta}/\mu\right\|\Big|\mathbf{q}_t\right]\leq  \mathbb{E}\left[\left\|\mathbf{q}_{t+1}-\tilde{\bm{\theta}}_t/\mu\right\|\Big|\mathbf{q}_t\right]+\rho\nonumber\\
	 \leq &\left\|\mathbf{q}_t-\tilde{\bm{\theta}}_t/\mu\right\|-\sqrt{\mu} +\rho\leq \left\|\mathbf{q}_t-\bm{\theta}/\mu\right\|-\sqrt{\mu} +2\rho.
\end{align}
Choosing $\rho$ such that $\sqrt{\tilde{\mu}}:=\sqrt{\mu}-2\rho< 0$, then for $t\!>\!T_{\rho}$ and $\|\mathbf{q}_t-\bm{\theta}/\mu\|\!>\tilde{B}\!:=\!B+\rho\!=\!\Theta({1}/{\sqrt{\mu}})$, we have 
\begin{equation}\label{eq.the1-4}
	 \mathbb{E}\left[\left\|\mathbf{q}_{t+1}-\bm{\theta}/\mu\right\|\Big|\mathbf{q}_t\right]\leq \left\|\mathbf{q}_t-\bm{\theta}/\mu\right\|-\sqrt{\tilde{\mu}}.
\end{equation}

Leveraging \eqref{eq.the1-4}, we first show \eqref{eq.inflim} by constructing a super-martingale.
Define the stochastic process $a_t$ as
\begin{subequations}\label{eq.supma}
	\begin{equation}\label{eq.supma1}
	a_t:=\left\|\mathbf{q}_t-\bm{\theta}/\mu\right\|\cdot \mathds{1}\left\{\min_{\tau\leq t}\left\|\mathbf{q}_{\tau}-\bm{\theta}/\mu\right\|>\tilde{B}\right\},\,\forall t
\end{equation}
and likewise the stochastic process $b_t$ as
\begin{equation}\label{eq.supma2}
	b_t:=\sqrt{\tilde{\mu}}\cdot \mathds{1}\left\{\min_{\tau\leq t}\left\|\mathbf{q}_{\tau}-\bm{\theta}/\mu\right\|>\tilde{B}\right\},\;\forall t.
\end{equation}
\end{subequations}
Clearly, $a_t$ tracks the distance between $\mathbf{q}_t$ and $\bm{\theta}/\mu$ until the distance becomes smaller than $\tilde{B}$ for the first time; and $b_t$ stops until $\left\|\mathbf{q}_t-\bm{\theta}/\mu\right\|\leq \tilde{B}$ for the first time as well.

With the definitions of $a_t$ and $b_t$, one can easily show that the recursion \eqref{eq.the1-4} implies
\begin{equation}\label{eq.supma3}
	 \mathbb{E}\left[a_{t+1}|\mathbf{\cal F}_t\right]\leq a_t-b_t
\end{equation}
where $\mathbf{\cal F}_t$ is the so-termed sigma algebra measuring the history of two processes.
As $a_t$ and $b_t$ are both nonnegative, \eqref{eq.supma3} allows us to apply the super-martingale convergence theorem \cite[Theorem E7.4]{kong1995}, which almost surely establishes that: (i) the sequence $a_t$ converges to a limit; and (ii) the summation $\sum_{t=1}^{\infty}b_t<\infty$. Note that (ii) implies that $\lim_{t\rightarrow \infty}b_t=0$, ${\rm w.p.1}$.
Since $\sqrt{\tilde{\mu}}>0$, it follows that the indicator function of $b_t$ eventually becomes null and thus
\begin{equation}\label{eq.limqt}
\liminf_{t\rightarrow \infty}\;\; \left\|\mathbf{q}_t-\bm{\theta}/\mu\right\|\leq\tilde{B},\;\;{\rm w.p.1}
\end{equation}
which establishes that $\mathbf{q}_t$ will eventually visit and then hover around a neighborhood of the reference point $\bm{\theta}/\mu$.

\textbf{Proof of \eqref{eq.stt-length} in Theorem \ref{the.queue-stable}:}
In complement to the sample-path result in \eqref{eq.limqt}, we next derive \eqref{eq.stt-length}, which captures the long-term queue lengths averaged over all sample paths.

Similar to \eqref{ineq.drift0}, we have
\begin{align}\label{eq.the2-69}
	&\|\mathbf{q}_{t+1}-\bm{\lambda}^*/\mu\|^2 \leq\\
	&\qquad\qquad \|\mathbf{q}_t-\bm{\lambda}^*/\mu\|^2+2(\mathbf{q}_t-\bm{\lambda}^*/\mu)^{\top}(\mathbf{A}\mathbf{x}_t+\mathbf{c}_t)+M^2. \nonumber
	\end{align}
Using the definition $\bm{\gamma}_t:=\hat{\bm{\lambda}}_t+\mu\mathbf{q}_t-\bm{\theta}$, \eqref{eq.the2-69} can be written as
\begin{align}\label{eq.the2-69}
	&\|\mathbf{q}_{t+1}-\bm{\lambda}^*/\mu\|^2 \leq \|\mathbf{q}_t-\bm{\lambda}^*/\mu\|^2\\
	&+\frac{2}{\mu}(\bm{\gamma}_t-\bm{\lambda}^*)^{\top}(\mathbf{A}\mathbf{x}_t+\mathbf{c}_t)+\frac{2}{\mu}(\bm{\theta}-\hat{\bm{\lambda}}_t)^{\top}(\mathbf{A}\mathbf{x}_t+\mathbf{c}_t)+M^2. \nonumber
	\end{align}

Defining the Lyapunov drift as $\Delta(\mathbf{q}_t)\!:=\!\frac{1}{2}(\|\mathbf{q}_{t+1}-\bm{\lambda}^*/\mu\|^2\!-\!\|\mathbf{q}_t-\bm{\lambda}^*/\mu\|^2)$ and taking expectations on \eqref{eq.the2-69} over $\mathbf{s}_t$ conditioned on $\mathbf{q}_t$, we have
\begin{align}\label{eq.the2-70}
	\!\!&\mu\mathbb{E}\left[\Delta(\mathbf{q}_t)\right]\leq \mathbb{E}\Big[(\bm{\gamma}_t-\bm{\lambda}^*)^{\!\top}\!(\mathbf{A}\mathbf{x}_t\!+\!\mathbf{c}_t)\Big]\nonumber\\
	&\qquad\qquad\qquad\qquad~+\mathbb{E}\Big[(\bm{\theta}\!-\!\hat{\bm{\lambda}}_t)^{\!\top}\!(\mathbf{A}\mathbf{x}_t+\mathbf{c}_t)\Big]\!+\!{\mu M^2}/{2}\nonumber\\
	&\stackrel{(b)}{\leq} {\cal D}(\bm{\gamma}_t)\!-\!{\cal D}(\bm{\lambda}^*)\!+\mathbb{E}\Big[(\bm{\theta}\!-\!\hat{\bm{\lambda}}_t)^{\!\top}\!(\mathbf{A}\mathbf{x}_t\!+\!\mathbf{c}_t)\Big]\!+\!{\mu M^2}/{2}
	\end{align}
	where (b) follows from \eqref{eq.lemm3-subg}.
%\begin{align}\label{eq.the2-70}
%	\mathbb{E}\Big[\|\mathbf{q}_{t+1}-&\bm{\lambda}^*/\mu\|^2\Big]\leq\mathbb{E}\Big[\|\mathbf{q}_t-\bm{\lambda}^*/\mu\|^2\Big]\\
%	&+\frac{2}{\mu}\left(\mathbb{E}[{\cal D}(\bm{\gamma}_t)]-{\cal D}(\bm{\lambda}^*)\right)+M^2.\nonumber
%\end{align}

Summing both sides over $t=1,\ldots,T$, taking expectations over all possible $\mathbf{q}_t$, and dividing both sides by $T$, we arrive at
\begin{align}\label{eq.the2-71}
	&\frac{\mu}{2T}\left(\mathbb{E}\left[\|\mathbf{q}_{T+1}-\bm{\lambda}^*/\mu\|^2\right]-\mathbb{E}\left[\|\mathbf{q}_{1}-\bm{\lambda}^*/\mu\|^2\right]\right)\leq\\
	&\frac{1}{T}\!\sum_{t=1}^T\mathbb{E}[{\cal D}(\bm{\gamma}_t)]\!-\!{\cal D}(\bm{\lambda}^*)\!+\!\frac{1}{T}\!\sum_{t=1}^T\mathbb{E}\Big[(\bm{\theta}\!-\!\hat{\bm{\lambda}}_t)^{\!\top}\!(\mathbf{A}\mathbf{x}_t\!+\!\mathbf{c}_t)\Big]\!\!+\!\frac{\mu M^2}{2}.\nonumber
\end{align}

First, it is easy to show that
\begin{align}\label{eq.the2-72}
	&\lim_{T\rightarrow\infty} \frac{\mu}{2T}\left(\mathbb{E}\left[\left\|\mathbf{q}_{T+1}-{\bm{\lambda}^*}/{\mu}\right\|^2\right]-\mathbb{E}\left[\left\|\mathbf{q}_{1}-{\bm{\lambda}^*}/{\mu}\right\|^2\right]\right)\nonumber\\
	\stackrel{(c)}{\geq}& - \lim_{T\rightarrow\infty} \frac{\mu}{2T}\mathbb{E}\left[\left\|\mathbf{q}_{1}-{\bm{\lambda}^*}/{\mu}\right\|^2\right]\stackrel{(d)}{=}0
\end{align}
where (c) holds since $\left\|\mathbf{q}_{T+1}-{\bm{\lambda}^*}/{\mu}\right\|^2\geq 0$, and (d) follows from the boundedness of $\left\|\mathbf{q}_{1}-{\bm{\lambda}^*}/{\mu}\right\|^2$.

We next argue that the following equality holds  
\begin{equation}\label{eq.the2-74}
	\lim_{T\rightarrow \infty} ({1}/{T})\; \textstyle \sum_{t=1}^{T}\mathbb{E}\left[(\bm{\theta}-\hat{\bm{\lambda}}_t)^{\top}(\mathbf{A}\mathbf{x}_t+\mathbf{c}_t)\right]={\cal O}(\mu). 
\end{equation}
Rearranging terms in \eqref{eq.the2-74} leads to
\begin{align}\label{eq.the2-43}
	&\lim_{T\rightarrow \infty} \frac{1}{T} \sum_{t=1}^{T}\mathbb{E}\left[(\bm{\theta}-\hat{\bm{\lambda}}_t)^{\top}(\mathbf{A}\mathbf{x}_t+\mathbf{c}_t)\right]\\
=&\lim_{T\rightarrow \infty} 	\frac{1}{T} \sum_{t=1}^{T}\mathbb{E}\left[\left(\bm{\theta}-\bm{\lambda}^*+(\bm{\lambda}^*\!-\!\bm{\theta}\!-\!\hat{\bm{\lambda}}_t\!+\!\bm{\theta})\right)^{\top}(\mathbf{A}\mathbf{x}_t+\mathbf{c}_t)\right].\nonumber
%&\quad+\lim_{T\rightarrow \infty} 	\frac{1}{T} \sum_{t=1}^{T}\mathbb{E}\left[(\bm{\lambda}^*\!-\!\bm{\theta}\!-\!\hat{\bm{\lambda}}_t\!+\!\bm{\theta})^{\top}(\mathbf{A}\mathbf{x}_t+\mathbf{c}_t)\right].
\end{align}
Since $\hat{\bm{\lambda}}_t$ converges to $\bm{\lambda}^*\, {\rm w.p.1}$ according to Theorem \ref{emp-dual}, there always exists a finite time $T_{\rho}$ such that for $t>T_{\rho}$, which implies that $\|\bm{\lambda}^*-\bm{\theta}-(\hat{\bm{\lambda}}_t-\bm{\theta})\|\leq \rho,\;{\rm w.p.1}$. Hence, together with the Cauchy-Schwarz inequality, we have 
\begin{align}\label{eq.the2-74-2}
	&(\bm{\lambda}^*\!-\!\bm{\theta}\!-\!\hat{\bm{\lambda}}_t\!+\!\bm{\theta})^{\top}(\mathbf{A}\mathbf{x}_t+\mathbf{c}_t)\nonumber\\
\leq &\|\bm{\lambda}^*\!-\!\bm{\theta}\!-\!(\hat{\bm{\lambda}}_t\!-\!\bm{\theta})\| \|\mathbf{A}\mathbf{x}_t\!+\!\mathbf{c}_t\|\stackrel{(d)}{\leq} \rho M={\cal O}(\rho)
\end{align}
where (d) follows since $T_{\rho}\!<\!\infty$ and constant $M$ is as in Assumption \ref{assp.dualgrad}.
Plugging \eqref{eq.the2-74-2} into \eqref{eq.the2-43}, it follows that
\begin{align}\label{eq.the2-rho74}
&\lim_{T\rightarrow \infty} ({1}/{T}) \textstyle \sum_{t=1}^{T} \mathbb{E}\left[(\bm{\theta}-\hat{\bm{\lambda}}_t)^{\top}(\mathbf{A}\mathbf{x}_t+\mathbf{c}_t)\right]\\
\leq&\lim_{T\rightarrow \infty} 	({1}/{T})\; \textstyle \sum_{t=1}^{T} \mathbb{E}\left[(\bm{\theta}-\bm{\lambda}^*)^{\top}(\mathbf{A}\mathbf{x}_t+\mathbf{c}_t)\right]+{\cal O}(\rho)\nonumber\\
\stackrel{(e)}{\leq}&\|\bm{\lambda}^*-\bm{\theta}\|\cdot \left\|\lim_{T\rightarrow \infty} 	({1}/{T})\; \textstyle \sum_{t=1}^{T} \mathbb{E}\left[-\mathbf{A}\mathbf{x}_t-\mathbf{c}_t\right]\right\|+{\cal O}(\rho)\nonumber
\end{align}
where (e) simply follows from the Cauchy-Schwarz inequality. 

Building upon \eqref{eq.the1-4}, one can follow the arguments in \cite[Theorem 4]{huang2011} to show that there exist constants $D_1\!=\!\Theta(1/\mu)$, and $D_2\!=\!\Theta(\sqrt{\mu})$, for any $d$, to obtain a large deviation bound as
\begin{equation}\label{eq.large-dev}
	\lim_{T\rightarrow \infty} \frac{1}{T} \sum_{t=1}^{T}\mathbb{P}\left(\|\mathbf{q}_t-\bm{\theta}/\mu\|>\tilde{B}+d\right)\leq D_1 e^{-D_2 d}
\end{equation}
where $\tilde{B}=\!\Theta({1}/{\sqrt{\mu}})$ as in \eqref{eq.the1-4}. 
Intuitively speaking, \eqref{eq.large-dev} upper bounds the probability that the steady-state $\mathbf{q}_t$ deviates from $\bm{\theta}/\mu$, and \eqref{eq.large-dev} implies that the probability that $q^i_t>\theta/\mu+\tilde{B}+d,\;\forall i$ is exponentially decreasing in $D_2d$. 

Using the large deviation bound in \eqref{eq.large-dev}, it follows that
\begin{align}\label{eq.the2-70}
	\mathbf{0}&\stackrel{(f)}{\leq}\! \lim_{T\rightarrow \infty} \frac{1}{T} \sum_{t=1}^{T}\mathbb{E}[-\mathbf{A}\mathbf{x}_t-\mathbf{c}_t]\\
	&\stackrel{(g)}{\leq}\!\lim_{T\rightarrow \infty} \frac{1}{T} \sum_{t=1}^{T}\mathbf{1}\!\cdot\!{M}\,\mathbb{P}\left(\mathbf{q}_t<M\right)\stackrel{(h)}{\leq}\! \mathbf{1}\!\cdot\! MD_1 e^{-D_2(\bm{\theta}/\mu-\tilde{B}-M)}\nonumber
\end{align}
where (f) holds because taking expectation in \eqref{eq.large-dev} over all $d$ implies that the expected queue length is finite [cf. \eqref{eq.probn}], which implies the necessary condition in \eqref{Queue-relax}; (g) follows from \cite[Lemma 4]{huang2014} which establishes that negative accumulated service residual $\sum_{t=1}^{T}\mathbb{E}[\mathbf{A}\mathbf{x}_t+\mathbf{c}_t]$ may happen only when $\mathbf{q}_t\!<\!M$ and the maximum value is bounded by $\|\mathbf{A}\mathbf{x}_t+\mathbf{c}_t\|\leq M$ in Assumption \ref{assp.dualgrad}; and (h) uses the bound in \eqref{eq.large-dev} by choosing $d=\bm{\theta}/\mu-\tilde{B}-M$. 

Setting $\bm{\theta}=\sqrt{\mu}\log^2(\mu)$ in \eqref{eq.the2-70}, there exists a sufficiently small $\mu$ such that $-D_2\big(\log^2(\mu)/\sqrt{\mu}-\tilde{B}-M\big)\leq 2\log(\mu)$. Together with \eqref{eq.the2-70} and $D_1\!=\!\Theta(1/\mu)$, the latter implies that 
\begin{equation}\label{eq.the2-79}
	\left\|\lim_{T\rightarrow \infty}\! ({1}/{T}) \textstyle\sum_{t=1}^{T} \mathbb{E}[-\mathbf{A}\mathbf{x}_t\!-\!\mathbf{c}_t]\right\|\leq \|\mathbf{1}\cdot MD_1\mu^2\|={\cal O}(\mu). 
\end{equation}
Plugging \eqref{eq.the2-79} into \eqref{eq.the2-rho74}, setting $\rho=\mathbf{o}(\mu)$ in \eqref{eq.the2-rho74}, and using $\|\bm{\lambda}^*-\bm{\theta}\|={\cal O}(1)$, we arrive at \eqref{eq.the2-74}.
%\begin{align}\label{eq.the-80}
%	\lim_{T\rightarrow \infty} \frac{1}{T} \sum_{t=1}^{T}\mathbb{E}\left[(\bm{\theta}-\hat{\bm{\lambda}}_t)^{\top}(\mathbf{A}\mathbf{x}_t+\mathbf{c}_t)\right]={\cal O}(\mu).
%\end{align}

Letting $T\rightarrow \infty$ in \eqref{eq.the2-71}, it follows from \eqref{eq.the2-72} and \eqref{eq.the2-74} that 
\begin{align}\label{eq.the2-81}
		0&\leq \lim_{T\rightarrow \infty}\frac{1}{T}\!\sum_{t=1}^T\mathbb{E}[{\cal D}(\bm{\gamma}_t)]\!-\!{\cal D}(\bm{\lambda}^*)\!+{\cal O}(\mu)+\!\frac{\mu M^2}{2}\nonumber\\
		&\stackrel{(h)}{\leq} {\cal D}\left(\lim_{T\rightarrow \infty}\frac{1}{T}\!\sum_{t=1}^T\mathbb{E}[\bm{\gamma}_t]\right)\!-\!{\cal D}(\bm{\lambda}^*)\!+{\cal O}(\mu)+\!\frac{\mu M^2}{2}.
\end{align}
where inequality (h) uses the concavity of the dual function ${\cal D}(\bm{\lambda})$.
Defining $\bm{\varphi}:=\lim_{T\rightarrow \infty}\frac{1}{T}\!\sum_{t=1}^T\mathbb{E}[\bm{\gamma}_t]$, and using ${\cal D}(\bm{\lambda}^*)-{\cal D}(\bm{\varphi})\geq \frac{\epsilon}{2}\|\bm{\lambda}^*-\bm{\varphi}\|^2$ in Lemma \ref{lemma.QG}, \eqref{eq.the2-81} implies that
\begin{equation}\label{eq.mu-76}
	\|\bm{\lambda}^*-\bm{\varphi}\|^2\!\leq\!  \frac{2}{\epsilon}\Big({\cal D}(\bm{\lambda}^*)-{\cal D}(\bm{\varphi})\Big)\!\leq\! {\cal O}(\mu)+\frac{\mu M^2}{\epsilon}\!\stackrel{(i)}{=}\!{\cal O}(\mu)\!
\end{equation}
where (i) follows since constants $M$ and $\epsilon$ are independent of $\mu$.
From \eqref{eq.mu-76}, we can further conclude that $\|\bm{\lambda}^*-\bm{\varphi}\|={\cal O}(\sqrt{\mu})$.

Recalling the definition $\bm{\gamma}_t:=\hat{\bm{\lambda}}_t+\mu\mathbf{q}_t-\bm{\theta}$, we  have that
\begin{align}
	&\lim_{T\rightarrow \infty}\!\frac{1}{T}\!\sum_{t=1}^T\mathbb{E}\!\left[\frac{\bm{\gamma}_t}{\mu}\right]\!-\!\frac{\bm{\lambda}^*}{\mu}\!=\!\lim_{T\rightarrow \infty}\!\frac{1}{T}\!\sum_{t=1}^T\mathbb{E}\!\left[\mathbf{q}_t\!+\!\frac{\hat{\bm{\lambda}}_t}{\mu}\right]\!-\!\frac{\bm{\lambda}^*}{\mu}\!-\!\frac{\bm{\theta}}{\mu}\nonumber\\
	\stackrel{(j)}{=}&\lim_{T\rightarrow \infty}\frac{1}{T}\!\sum_{t=1}^T\mathbb{E}\left[\mathbf{q}_t\right]\!-\!\frac{\bm{\theta}}{\mu}\stackrel{(k)}{\leq} \frac{1}{\mu}\|\bm{\lambda}^*-\bm{\varphi}\|={\cal O}\left(\frac{1}{\sqrt{\mu}}\right)
\end{align}
where (j) follows from the convergence of $\hat{\bm{\lambda}}_t$ in Theorem \ref{emp-dual}, and inequality (k) uses the definition of $\bm{\varphi}$ and $\bm{\varphi}-\bm{\lambda}^*\leq \|\bm{\lambda}^*-\bm{\varphi}\|$.
Recalling that $\bm{\theta}=\sqrt{\mu}\log^2(\mu)$ in \eqref{eq.the2-79} completes the proof.

\subsection{Proof of Theorem \ref{gap-onlineLA-SDG}}\label{app.D}
   
Defining the Lyapunov drift as $\Delta(\mathbf{q}_t)\!:=\!\frac{1}{2}(\|\mathbf{q}_{t+1}\|^2\!-\!\|\mathbf{q}_t\|^2)$, and squaring the queue update, we obtain
\begin{align}
	\|\mathbf{q}_{t+1}\|^2=&\|\mathbf{q}_t\|^2+2\mathbf{q}_t^{\top}(\mathbf{A}\mathbf{x}_t+\mathbf{c}_t)+\|\mathbf{A}\mathbf{x}_t+\mathbf{c}_t\|^2\nonumber\\
	\stackrel{(a)}{\leq}&\|\mathbf{q}_t\|^2+2\mathbf{q}_t^{\top}(\mathbf{A}\mathbf{x}_t+\mathbf{c}_t)+M^2
\end{align}
where (a) follows from the definition of $M$ in Assumption \ref{assp.dualgrad}. Multiplying by $\mu/2$ and adding $\Psi_t(\mathbf{x}_t)$, yields
\begin{align}\label{eq.delta-q}
	\mu\Delta(\mathbf{q}_t)+&\Psi_t(\mathbf{x}_t)\leq\Psi_t(\mathbf{x}_t)+\mu\mathbf{q}_t^{\top}(\mathbf{A}\mathbf{x}_t+\mathbf{c}_t)+{\mu M^2}/{2}\nonumber\\
	\stackrel{(b)}{=}&\Psi_t(\mathbf{x}_t)+(\bm{\gamma}_t-\hat{\bm{\lambda}}_t+\bm{\theta})^{\top}(\mathbf{A}\mathbf{x}_t+\mathbf{c}_t)+{\mu M^2}/{2}\nonumber\\
	\stackrel{(c)}{=}&{\cal L}_t(\mathbf{x}_t,\bm{\gamma}_t)+(\bm{\theta}-\hat{\bm{\lambda}}_t)^{\top}(\mathbf{A}\mathbf{x}_t+\mathbf{c}_t)+{\mu M^2}/{2}
\end{align}
where (b) uses the definition of $\bm{\gamma}_t$, and (c) is the definition of the instantaneous Lagrangian. Taking expectations on the both sides of \eqref{eq.delta-q} over $\mathbf{s}_t$ conditioned on $\mathbf{q}_t$, it holds that
\begin{align}\label{eq.57}
	&\mu\mathbb{E}\left[\Delta(\mathbf{q}_t)\big|\mathbf{q}_t\right]+\mathbb{E}\left[\Psi_t(\mathbf{x}_t)\big|\mathbf{q}_t\right]\nonumber\\
	\stackrel{(d)}{=}&{\cal D}(\bm{\gamma}_t)+\mathbb{E}\left[(\bm{\theta}-\hat{\bm{\lambda}}_t)^{\top}(\mathbf{A}\mathbf{x}_t+\mathbf{c}_t)\big|\mathbf{q}_t\right]+{\mu M^2}/{2}\nonumber\\
	  \stackrel{(e)}{\leq}&{\Psi}^{*}+\mathbb{E}\left[(\bm{\theta}-\hat{\bm{\lambda}}_t)^{\top}(\mathbf{A}\mathbf{x}_t+\mathbf{c}_t)\big|\mathbf{q}_t\right]+{\mu M^2}/{2}
\end{align}
where (d) follows from the definition of the dual function \eqref{eq.dual-func}, while (e) uses the weak duality that ${\cal D}(\bm{\gamma}_t)\leq \tilde{\Psi}^{*}$, and the fact that $\tilde{\Psi}^{*}\leq {\Psi}^{*}$ (cf. the discussion after \eqref{eq.reform}). 

Taking expectations on both sides of \eqref{eq.57} over all possible $\mathbf{q}_t$, summing over $t=1,\ldots,T$, dividing by $T$, and letting $T\rightarrow \infty$, we arrive at
\begin{align}\label{eq.optgap}
	\!\!\!&\lim_{T\rightarrow \infty} \frac{1}{T} \sum_{t=1}^{T}\mathbb{E}\left[\Psi_t(\mathbf{x}_t)\right]\nonumber\\
	\!\!\!  \stackrel{(f)}{\leq}&{\Psi}^{*}\!\!+\!\!\lim_{T\rightarrow \infty} \!\frac{1}{T} {\sum_{t=1}^{T}\mathbb{E}\!\left[(\bm{\theta}\!-\!\hat{\bm{\lambda}}_t)\!^{\top}\!(\mathbf{A}\mathbf{x}_t\!+\!\mathbf{c}_t)\right]}\!\!+\!\frac{\mu M^2}{2}\!+\!\!\lim_{T\rightarrow \infty}\!\!\!\frac{\mu\|\mathbf{q}_{1}\|^2}{2T}\nonumber\\
 \!\!\!\stackrel{(g)}{\leq} &{\Psi}^{*}\!+\!\lim_{T\rightarrow \infty} \frac{1}{T} {\sum_{t=1}^{T}\mathbb{E}\left[(\bm{\theta}-\hat{\bm{\lambda}}_t)^{\top}(\mathbf{A}\mathbf{x}_t+\mathbf{c}_t)\right]}\!+\!\frac{\mu M^2}{2}\!
\end{align}
where (f) comes from $\mathbb{E}[\|\mathbf{q}_{T+1}\|^2]\geq 0$, and (g) follows because $\|\mathbf{q}_{1}\|$ is bounded.
One can follow the derivations in \eqref{eq.the2-43}-\eqref{eq.the2-79} to show \eqref{eq.the2-74}, which is the second term in the RHS of \eqref{eq.optgap}. 
Therefore, we have from \eqref{eq.optgap} that
\begin{align}\label{eq.optgap2}
	\lim_{T\rightarrow \infty} \frac{1}{T} \sum_{t=1}^{T}\mathbb{E}\left[\Psi_t(\mathbf{x}_t)\right]& \leq {\Psi}^{*}\!+\!{\cal O}(\mu)+\frac{\mu M^2}{2}
	\end{align}
which completes the proof.

\section*{Acknowledgement}
The authors would like to thank Profs. Xin Wang, Longbo Huang and Jia Liu for
helpful discussions.

\balance

\end{document}